\newif\ifsubmission
\newtheorem{theorem}{Theorem}
\newtheorem{lemma}{Lemma}
\newtheorem{corollary}{Corollary}
\newtheorem{definition}{Definition}
\newcommand{\DD}{\mathcal D}
\newcommand{\XX}{\mathcal X}
\newcommand{\CC}{\mathcal C}
\renewcommand{\SS}{\mathcal S}
\newcommand{\PP}{\mathcal P}
\newcommand{\rarrow}{\tiny \to}
    \newcommand{\TODO}[1]{}
    \newcommand{\TODO}[1]{\textcolor{red}{{#1}}}
\begin{document}

\title{Binary Hypothesis Testing Game with Training Data}


\author{Mauro Barni$^*$, \IEEEmembership{Fellow, IEEE}, Benedetta Tondi
\thanks{M. Barni is with the Department of Information Engineering, University of Siena, Via Roma 56, 53100 - Siena, ITALY, phone: +39 0577 234850 (int. 1005), e-mail: barni@dii.unisi.it; B. Tondi is with the Department of Information Engineering, University of Siena, Via Roma 56, 53100 - Siena, ITALY, e-mail: benedettatondi@gmail.com.}
}

\markboth{IEEE TRANSACTIONS ON INFORMATION THEORY}%
{M. Barni, B. Tondi: The Source Identification Game with Training Data}

\maketitle

\begin{abstract}
We introduce a game-theoretic framework to study the hypothesis testing problem, in the presence of an adversary aiming at preventing a correct decision. Specifically, the paper considers a scenario in which an analyst has to decide whether a test sequence has been drawn according to a probability mass function (pmf) $P_X$ or not. In turn, the goal of the adversary is to take a sequence generated according to a different pmf and modify it in such a way to induce a decision error. $P_X$ is known only through one or more training sequences. We derive the asymptotic equilibrium of the game under the assumption that the analyst relies only on first order statistics of the test sequence, and compute the asymptotic payoff of the game when the length of the test sequence tends to infinity. We introduce the concept of indistinguishability region, as the set of pmf's that can not be distinguished reliably from $P_X$ in the presence of attacks. Two different scenarios are considered: in the first one the analyst and the adversary share the same training sequence, in the second scenario, they rely on independent sequences. The obtained results are compared to a version of the game in which the pmf $P_X$ is perfectly known to the analyst and the adversary.
\end{abstract}

\begin{IEEEkeywords}
Hypothesis testing, adversarial signal processing, cybersecurity, game theory, source identification, multimedia forensics, counter-forensics.
\end{IEEEkeywords}

\IEEEpeerreviewmaketitle

\section{Introduction}

\IEEEPARstart{H}{pothesis} testing is a widely studied topic with applications in virtually all technological and scientific fields. In its most basic form, an analyst is asked to decide which among two hypotheses, usually referred to as null hypothesis (or $H_0$) and alternative hypothesis ($H_1$), is true based on a set of observables, say $x^n = (x_1, x_2 \dots x_n)$. Several versions of the problem are obtained according to the knowledge that the analyst has on the probability distribution of the observables when one of the two hypotheses holds. In some cases, the probability mass function (pmf)\footnote{Since in the rest of the paper we will assume that the elements of $x^n$ belong to a finite alphabet, we prefer to use the term probability mass function instead of probability density function, even if at this stage we do not need to restrict our attention to the discrete case.} conditioned to the two hypotheses is known, in other cases only the pmf under $H_0$ is known, in yet other cases only a number of sample observables (hereafter referred to as training sequences) obtained under $H_0$ and $H_1$ are available.

Due to its importance, hypothesis testing has been extensively studied and a solid theoretical framework has been built permitting to analyze and understand its many facets. In the last years, though, many applications have emerged in which hypothesis testing is given a new twist, due to the presence of an adversary aiming at making the test fail. In multimedia forensics \cite{SIMag09}, for instance, a forensic analyst may be asked to decide whether an image has been acquired by a given camera, notwithstanding the presence of an adversary aiming at deleting the acquisition traces left by the camera. In the same way, the analyst may be asked to decide whether a signal has undergone a certain processing or not, by taking into account the possibility that someone deliberately tried to delete the traces left during the processing phase.

Another popular example comes from spam filtering \cite{DDMSV04}, wherein an anti-spam filter is presented with a test e-mail and must decide whether the e-mail contains a genuine or a spam message. It is obvious that such a test can not neglect the presence of an adversary trying to shape the message in such a way to fool the filter.

Biometric authentication provides a further example. In this case the authentication system must decide whether a biometric template belongs to a certain individual, despite the opposite efforts of an attacker aiming at building a fake template that passes the authentication test \cite{JRU05,MDiaz06}.

Other examples include: watermarking, where the detector is asked to decide whether a document contains a given watermark or not \cite{CoxWat02}, steganalysis, in which the steganalyzer has to distinguish between cover and stego images \cite{Fridrich09}, network intrusion detection, wherein anomalous traffic conditions must be distinguished from normal ones \cite{TKM02}, reputation systems \cite{YSKY09}, for which it is essential to distinguish between genuine and malevolent scores, anomaly detection, cognitive radio \cite{WLSH10}, and many others. In all these fields, the system designer has to take into account the presence of one or more adversaries explicitly aiming at system failure.


In the framework depicted above, the goal of this paper is to move a first step towards the construction of a general theoretical framework to analyze the binary hypothesis testing problem by taking into account the presence of an adversary aiming at impeding a correct decision. More specifically, we introduce and analyze an adversarial version of the Neyman-Pearson setup in which an analyst and an adversary, hereafter referred to as the Defender (D) and the Attacker (A), face each other in a rigorously defined context. As in the classical Neyman-Pearson scenario, we assume that the type-I error probability (i.e., the probability of rejecting $H_0$ when $H_0$ holds) is {\em in some way} fixed and that D and A are interested in minimizing, res. maximizing, the type-II error probability (i.e. the probability that the analyst accepts $H_0$ when $H_1$ holds). In order to do so, we adopt a game-theoretic framework, in which the defender and the attacker have opposite goals and operate by satisfying a different set of requirements, all together specifying the nature of the game. The final goal will be the derivation of the optimum strategies for the defender and the attacker in terms of game equilibrium points, and the study of the achievable performance at the equilibrium.

\subsection{Prior art}
The use of game theory to model the impact that the presence of an adversary has on (binary) hypothesis testing is not an absolute novelty. In many security oriented fields in which hypothesis testing plays a central role, game theory has been advocated to avoid entering a cat and mouse loop in which researchers alternatively play the role of the defender and the attacker, and continuously develop new countermeasures, each time by attacking a specific algorithm or strategy.

By referring to multimedia forensics, in  \cite{Boh12} B{\"o}hme and Kirchner cast the forensic problem in a hypothesis testing framework. Several versions of the problem are defined according to the particular hypothesis being tested, including distinction between natural and computer generated images, manipulation detection, source identification. Counter-forensics is then defined as a way to degrade the performance of the hypothesis test envisaged by the analyst. By relying on arguments similar to those used in steganography and steganalysis \cite{Cach98}, B{\"o}hme and Kirchner argue that the divergence between the probability density functions of the observed signals after the application of the counter-forensic attack is a proper way to measure the effectiveness of the attack. Noticeably, such measure does not depend on the particular investigation technique adopted by the analyst. Even if B{\"o}hme and Kirchner do not explicitly use a game-theoretic formulation, their attempt to decouple the counter-attack from a specific forensic strategy can be seen as a first - implicit - step towards the definition of the equilibrium point of a general multimedia forensics game. Another work loosely related to the present paper is \cite{Stamm12}, where the authors introduce a game-theoretic framework to evaluate the effectiveness of a given attacking strategy and derive the optimum countermeasures. As opposed to our analysis, in \cite{Stamm12} the attacker's strategy is fixed and the game-theoretic framework is used only to determine the optimum parameters of the forensic analysis and the attack, thus failing to provide a complete characterization of the game between the attacker and the analyst.

Game theory and information theory have been used in watermarking to model the interplay between the watermaker and the attacker. In  \cite{CL2002, Moulin04, SMM04}, the game is played between the watermark embedder/decoder and an attacker who attempts to degrade the embedded message by modifying the watermarked signal, e.g. by adding some noise. The payoff of the game is usually the capacity of the watermark channel. A problem that is closer to the one addressed in this paper is the one considered in \cite{merhav_sabbag_08}, where the jointly optimum embedding and detection strategies for a detector with limited resources are derived. Indeed, the approach used in the present paper is reminiscent of the analysis carried out in \cite{merhav_sabbag_08}, given that in both cases the analysis focuses on an asymptotic version of the problem in which the resources available to the defender are limited. As opposed to the present work, however, the analysis in \cite{merhav_sabbag_08} is carried out under the assumption that no attack is present or that the attack channel is fixed, and the resort to a min-max optimization is due only to the necessity of finding the jointly optimum watermark embedding and detection strategies.

The work that is most closely related to the present paper is \cite{BT13}, where the source identification game with known statistics is introduced and the corresponding asymptotic Nash equilibrium derived. As a matter of fact, even if \cite{BT13} restricts the analysis to multimedia forensics, the framework adopted to model the game between the forensic analyst and the adversary is very general and can also be used to model a binary hypothesis testing problem in which the statistics of the observables under the null hypothesis are perfectly known to the defender and the attacker. In order to avoid replicating the analysis carried out in \cite{BT13}, by incorporating only the few  modifications needed to extend it from multimedia forensics to general hypothesis testing, here we focus on a different version of the game in which the statistics of the observables under $H_0$ are known only through training data. This represents a major deviation from \cite{BT13}, requiring a thorough reformulation of the problem and a new derivation of the equilibrium point. In order to make the current paper self-contained and allow the reader to better appreciate the difference between the results obtained in \cite{BT13} and the new findings provided here, we summarize the main definitions and results of \cite{BT13} in Section \ref{sec.KS}.

\subsection{Contribution}

With the above ideas in mind, in this paper we address the following problem, hereafter referred to as the binary hypothesis testing problem with training data ($HT_{tr}$). Let  $X \sim P_X$ be a discrete memoryless (DM) source ruling the emission of observables under $H_0$, and let $x^n$ be a test sequence, i.e., a sequence of observables. The goal of the defender is to accept or reject hypothesis $H_0$, that is, to decide whether $x^n$ was drawn from $X$ or not\footnote{In order to keep the notation as light as possible, we use the symbol $x^n$ to indicate the test sequence even if, in principle, it is not known whether $x^n$ originated from $X$ or not.}. In doing so, D must ensure that the type-I error probability does not exceed a predefined value. Let then $Y \sim P_Y$ be a second DM source and let $y^n = (y_1, y_2 \dots y_n)$ be a sequence generated by $Y$. It is the aim of A to transform $y^n$ into a new sequence $z^n$ in such a way that when presented with $z^n$ the defender accepts $H_0$. We also impose that $A$ satisfies a distortion constraint requiring that the {\em distance} between $y^n$ and $z^n$ is below a certain threshold. As opposed to \cite{BT13}, D and A do not know the exact statistics of $P_X$ and $P_Y$, since all they know is a training sequence drawn from $X$.

Given the above scenario, the goal of this paper is to propose a rigorous game-theoretic framework to cast the $HT_{tr}$ problem in, and derive the asymptotic equilibrium point of the game under a simplifying hypothesis about the kind of analysis which D can carry out. We will do so for two different versions of the game stemming from different assumptions about the relationship between the training sequence available to the defender and that available to the attacker. In a first case, we will assume that A and D share the same training sequence, while in the second part of the paper we will assume that the two sequences are generated independently from each other. The main results proven in the paper can be summarized as follows:
\begin{enumerate}
\item{We show that under the {\em limited resources} assumptions \cite{merhav_sabbag_08}, the $HT_{tr}$ game admits an asymptotic equilibrium. We also prove that the asymptotic equilibrium point is the only rationalizable equilibrium of the game \cite{Bern84, Pea84}. Such an equilibrium is much stronger than the usual notion of Nash equilibrium, since the strategies corresponding to such an equilibrium are the only ones that two rationale players may adopt (Theorem \ref{theo.Nash}, Section \ref{sec.Equilibrium_tr_c})};
\item{We compute the payoff at the equilibrium for the defender and the attacker, and introduce the notion of indistinguishability region, defined as the region with the $P_Y$'s that can not be distinguished reliably (i.e. with a vanishing type II error probability) from $P_X$ (Theorem \ref{theo.SanovTRc}, Section \ref{sec.payoff});}
\item{We compare the achievable payoff of the $HT_{tr}$ game with the results obtained in \cite{BT13}, where D and A have a perfect knowledge of $P_X$, showing that the $HT_{tr}$ game is more favorable to the attacker with respect to the situation analyzed in \cite{BT13} (Theorem \ref{theo.inclusion}, Section \ref{sec.payoff});}
\item{We show that the indistinguishability region is the same when D and A share the same training sequence and when they rely on independent sequences (Theorem \ref{theo.HTa}, Section \ref{sec.version_a}).}
\end{enumerate}
With regard to 1), the asymptotic equilibrium point when D and A rely on the same training sequence was already derived in \cite{BT12}, without realizing that the equilibrium is indeed stronger than a Nash equilibrium. The case of independent training sequences has never been studied before. The methodology used to derive the equilibrium point goes along the same lines used in \cite{merhav_sabbag_08} to derive the jointly optimum watermark embedding and detection strategy. Finally, the optimum strategy of the defender can be paralleled to the results obtained in \cite{Gut89}, even if in a completely different context. As to point 2), our results are closely related to Sanov's theorem \cite{CandT, CSnow}, however, to the best of our knowledge, their derivation in a game-theoretic context is not trivial and represents an original contribution of the present paper.

The rest of this work is organized as follows. In Section \ref{sec.not} we introduce the notation and definitions used throughout the paper. In Section \ref{sec.KS}, we recall the main results proved in \cite{BT13} by casting them into a hypothesis testing framework. Such results represent the baseline against which we will compare the new results obtained in this paper. In Section \ref{sec.games}, we formally introduce the $HT_{tr}$ game and lay the basis for the analysis carried out in the subsequent sections. Section \ref{sec.Equilibrium_tr_c} is devoted to the derivation of the equilibrium point of the $HT_{tr}$ game when A and D share the same training sequence. The payoff at the equilibrium is analyzed in Section \ref{sec.payoff}, where we also introduce the notion of indistinguishability region. The case of independent training sequences is analyzed in Section \ref{sec.version_a}. The paper ends in Section \ref{sec.conc}, with some conclusions and hints for future research. Some of the most technical proofs are given in the the appendix, so to avoid interrupting the main flow of ideas.

\section{Notation and definitions}
\label{sec.not}

In the rest of this work we will use capital letters to indicate discrete memoryless sources (e.g. $X$). Sequences of length $n$ drawn from a source will be indicated with the corresponding lowercase letters (e.g. $x^n$). In the same way, we will indicate with $x_i$, $i =1, n$ the $i-$th element of a sequence $x^n$. The alphabet of an information source will be indicated by the corresponding calligraphic capital letter (e.g. $\XX$). Calligraphic letters will also be used to indicate classes of information sources ($\CC$). The pmf of a discrete memoryless source $X$ will be denoted by $P_X$. The same notation will be used to indicate the probability measure ruling the emission of sequences from a source $X$, so we will use the expressions $P_X(a)$ and $P_X(x^n)$ to indicate, respectively, the probability of symbol $a \in \XX$ and the probability that the source $X$ emits the sequence $x^n$, the exact meaning of $P_X$ being always clearly recoverable from the context wherein it is used. Given an event $A$ (be it a subset of $\XX$ or $\XX^n$), we will use the notation $P_X(A)$ to indicate the probability of the event $A$ under the probability measure $P_X$.

Our analysis relies extensively on the concepts of type and type class defined as follows (see \cite{CandT} and \cite{CandK} for more details). Let $x^n$ be a sequence with elements belonging to an alphabet $\XX$. The type $P_{x^n}$ of $x^n$ is the empirical pmf induced by the sequence $x^n$, i.e. $\forall a \in \XX, P_{x^n} (a) = \frac{1}{n} \sum_{i=1}^n \delta(x_i, a)$, where $\delta(x_i,a) = 1$ if $x_i =a$ and zero otherwise. In the following we indicate with $\PP_n$ the set of types with denominator $n$, i.e. the set of types induced by sequences of length $n$. Given $P \in \PP_n$, we indicate with $T(P)$ the type class of $P$, i.e. the set of all the sequences in $\XX^n$ having type $P$.

The Kullback-Leibler (KL) divergence between two distributions $P$ and $Q$ on the same finite alphabet $\XX$ is defined as:
\begin{equation}
    \DD(P||Q) = \sum_{a \in \XX} P(a) \log \frac{P(a)}{Q(a)},
\label{eq.KLdiv}
\end{equation}
where, as usual, $0 \log 0 = 0$ and $p \log p/0 = \infty$ if $p > 0$.

\subsection{Hypothesis testing framework}

Given a sequence $x^n \in \XX^n$, as a result of the test, $\XX^n$ is partitioned into two complementary regions $\Lambda$ and $\Lambda^c$, such that for $x^n \in \Lambda$ the defender decides in favor of $H_0$, while for $x^n \in \Lambda^c$ $H_1$ is preferred. We say that a Type-I error occurs if $H_1$ is chosen even if $H_0$ holds. In the same way, we say that a Type-II error occurs when $H_1$ holds but $H_0$ is chosen. In the following, we will refer to Type-I errors as false positive errors (or false alarms) and to Type-II as false negative (or missed detection), and will indicate the probability of such events as $P_{fp}$ and $P_{fn}$ respectively. The motivation for such a terminology comes from applications in which $H_0$ is seen as a standard situation and its rejection in favor of $H_1$ raises an alarm since something unusual happened. It goes without saying that our derivation remains valid even in different scenarios where the false positive and false negative terms may not be appropriate. In our analysis we are mainly interested in the asymptotic behavior of $P_{fp}$ and $P_{fn}$ when $n$ tends to infinity. In particular we define the false positive ($\lambda$) and false negative ($\varepsilon$) error exponents as follows:
\begin{equation}
    \lambda = \lim_{n \rarrow \infty} -\frac{\log P_{fp}}{n}; ~~~~\varepsilon = \lim_{n \rarrow \infty} -\frac{\log P_{fn}}{n},
\label{eq.err_exp}
\end{equation}
where the $\log$'s are taken in base 2.

\subsection{Game Theory}

As we said, the goal of this paper is to model the  binary hypothesis testing problem in an adversarial setting as a 2-player game. More formally, a 2-player game is defined as a 4-uple $G(\SS_1,\SS_2,u_1, u_2)$, where $\SS_1 = \{s_{1,1} \dots s_{1,n_1}\}$ and $\SS_2 = \{s_{2,1} \dots s_{2,n_2}\}$ are the set of strategies (actions) the first and the second player can choose from, and $u_l(s_{1,i}, s_{2,j}), l= 1,2$, is the payoff of the game for player $l$, when the first player chooses the strategy $s_{1,i}$ and the second chooses $s_{2,j}$. A pair of strategies $(s_{1,i}, s_{2,j})$ is called a profile. In a zero-sum competitive game, the two payoff functions are strictly related to each other since for any profile we have $u_1(s_{1,i}, s_{2,j}) + u_2(s_{1,i}, s_{2,j}) = 0$. In other words, the win of a player is equal to the loss of the other. In the particular case of a zero-sum game, then, only one payoff function needs to be defined. Without loss of generality we can specify the payoff of the first player (generally indicated by $u$), with the understanding that the payoff of the second player $u_2$ is equal to $-u$. In the most common formulation, the sets $\SS_1$, $\SS_2$ and the payoff functions are assumed to be known to both players. In addition, it is assumed that the players choose their strategies before starting the game so that they have no hints about the  strategy actually chosen by the other player (strategic game).

A common goal in game theory is to determine the existence of equilibrium points, i.e. profiles that, in {\em some sense} represent a {\em satisfactory} choice for both players. While there are many definitions of equilibrium, the most famous and commonly adopted is the one due by Nash \cite{Nash50,Osb94}. For the particular case of a 2-player game, a profile $(s_{1,i^*}, s_{2,j^*})$ is a Nash equilibrium if:
\begin{equation}
\begin{array}{ll}
    u_1((s_{1,i^*}, s_{2,j^*})) \ge u_1((s_{1,i}, s_{2,j^*})) & \forall s_{1,i} \in \SS_1\\
    u_2((s_{1,i^*}, s_{2,j^*})) \ge u_2((s_{1,i^*}, s_{2,j})) & \forall s_{2,j} \in \SS_2,
\end{array}
\label{eq.Nash}
\end{equation}
where for a zero-sum game $u_2 = -u_1$. In practice, a profile is a Nash equilibrium if each player does not have any interest in changing its choice assuming the other does not change its strategy.

Despite its popularity, the practical meaning of Nash equilibrium points is difficult to grasp, since there is no guarantee that the players will end up playing at the equilibrium. This is particularly evident when more than one Nash equilibrium exists. A definition of game equilibrium with a more practical meaning can be obtained by relying on the notion of dominant and dominated strategies. A strategy is said to be strictly dominant for one player if it is the best strategy for the player, no matter how the other player may play. In a similar way, we say that a strategy $s_{l,i}$ is strictly dominated by strategy $s_{l,j}$, if the payoff achieved by player $l$ choosing $s_{l,i}$ is always lower than that obtained by playing $s_{l,j}$ regardless of the choice made by the other player. The recursive elimination of dominated strategies is one common technique for solving games. In the first step, all the dominated strategies are removed from the set of available strategies, since no rational player would ever play them. In this way a new smaller game is obtained. At this point, some strategies, that were not dominated before, may be dominated in the remaining  game, and hence are eliminated. The process goes on until no dominated strategy exists for any player. A {\em rationalizable equilibrium} is any profile which survives the iterated elimination of dominated strategies. If at the end of the process only one profile is left, the remaining profile is said to be the {\em only rationalizable equilibrium} of the game, which is also the only Nash equilibrium point. The corresponding strategies are the only rational choice for the two players and we say that the game is {\em dominance solvable}. Dominance solvable games are easy to analyze since, under the assumption of rational players, we can anticipate that the players will choose the strategies corresponding to the unique rationalizable equilibrium \cite{ChenGames}.
\section{Binary hypothesis testing game \\ with known sources}
\label{sec.KS}

In this section, we use the framework introduced in \cite{BT13} to define a version of the hypothesis testing game in which the pmf's ruling the emission of sequences from $X$ and $Y$ are known to both D and A. We also summarize the main results proven in \cite{BT13}, so to ease the comparison with the new results that will be proven in the rest of the paper.

Let $X$ and $Y$ be two DM sources with the same alphabet $\XX$. Let $y^n$ be a sequence drawn from $Y$ and let $z^n$ be a modified version of $y^n$ produced by A in the attempt to deceive D. The binary hypothesis testing game under the known source assumption ($HT_{ks}$) is defined as follows.

\begin{definition}
The $HT_{ks}(\SS_{D}, \SS_{A}, u)$ game is a zero-sum, strategic, game played by D and A, defined by the following strategies and payoff.
\begin{itemize}
\item{The set of strategies D can choose from is the set of acceptance regions $\Lambda$ for which the false positive probability is below a certain threshold:
\begin{equation}
    \SS_{D} = \{ \Lambda : P_X(x^n \notin \Lambda) \le P_{fp}\}.
\label{eq.SD_KS}
\end{equation}}
\item{The set of strategies A can choose from is formed by all the functions that map a sequence $y^n \in \XX^n$ into a new sequence $z^n \in \XX^n$ subject to a distortion constraint:
\begin{equation}
    \SS_{A} = \{ f(y^n): d(y^n, z^n) \le nD\},
\label{eq.SA_KS}
\end{equation}
where $d(\cdot,\cdot)$ is a proper distance measure and $D$ is the maximum allowed per-letter distortion.}
\item{The payoff function is defined as the false negative error probability ($P_{fn}$), namely:
\begin{equation}
    u(\Lambda, f) = - P_{fn} =  - \sum_{y^n: f(y^n) \in \Lambda} P_Y(y^n).
\label{eq.payoff_KS}
\end{equation}
}
\end{itemize}
\label{def.HT_ks}
\end{definition}

Given the difficulty of studying the game defined above, a simplified version of the game is introduced in \cite{BT13} wherein the set of strategies available to D is limited. More specifically, the so-called limited resources assumption is introduced, forcing D to base its analysis only on first order statistics of $x^n$. Stated in another way, it is required that $\Lambda$ is a union of type classes. Since a type class is univocally defined by the empirical probability mass function of the sequences contained in it, the acceptance region $\Lambda$ can be seen as a union of types $P \in \PP_n$. As an additional simplification, the constraint on the false positive probability is defined in asymptotic terms, requiring that $P_{fp}$ decreases exponentially fast with a given decay rate. All these considerations lead the following definition:

\begin{definition}
The $HT_{ks}^{lr}(\SS_{D}, \SS_{A}, u)$ game is a game between D and A defined by the following strategies and payoff:
\begin{equation}
    \SS_{D} = \{ \Lambda \in 2^{\PP_n}: P_{fp} \le 2^{-\lambda n}\},
\label{eq.SD_KS_as}
\end{equation}
\begin{equation}
    \SS_{A} = \{ f(y^n): d(y^n, f(y^n)) \le nD\},
\label{eq.SA_KS_as}
\end{equation}
\begin{equation}
    u(\Lambda, f) = -P_{fn},
\label{eq.payoff_KS_as}
\end{equation}
\label{def.SI_ks_as}
\end{definition}
where $2^{\PP_n}$ indicates the power set of $\PP_n$.
From the analysis given in \cite{BT13} we know the following results.
\begin{theorem}
The profile $(\Lambda_{ks}^*, f_{ks}^*)$ with
\begin{equation}
    \Lambda_{ks}^* = \left\{P \in \PP_n: \DD(P || P_X) < \lambda - |\XX| \frac{\log(n+1)}{n} \right\},
\label{eq.opt_lambda_KS_as}
\end{equation}
and
\begin{equation}
    f_{ks}^*(y^n) = \arg\min_{z^n : d(z^n, y^m) \le nD} \DD(P_{z^n} || P_X).
\label{eq.opt_AD_KS_as}
\end{equation}
defines an asymptotic Nash equilibrium for the $HT_{ks}^{lr}$ game.
\label{theo.NASH_KS_as}
\end{theorem}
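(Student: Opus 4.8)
The plan is to verify the two Nash-equilibrium inequalities of \eqn{eq.Nash} directly, exploiting the method-of-types machinery. The payoff is $u(\Lambda,f) = -P_{fn} = -\sum_{y^n: f(y^n)\in\Lambda} P_Y(y^n)$, so D wants to shrink the set of $y^n$ that get mapped into $\Lambda$, while A wants to enlarge it. I would proceed in two parts.

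First I would fix the defender's strategy at $\Lambda_{ks}^*$ and show that $f_{ks}^*$ is (asymptotically) optimal for the attacker. Given a sequence $y^n$, the attacker succeeds iff some $z^n$ with $d(y^n,z^n)\le nD$ has type in $\Lambda_{ks}^*$, i.e.\ iff $\min_{z^n: d(y^n,z^n)\le nD} \DD(P_{z^n}\|P_X) < \lambda - |\XX|\frac{\log(n+1)}{n}$. But this is exactly the quantity that $f_{ks}^*$ minimizes, so if \emph{any} admissible attack pushes $y^n$ into $\Lambda_{ks}^*$, then $f_{ks}^*$ does too. Hence $f_{ks}^*$ maximizes $P_{fn}$ against $\Lambda_{ks}^*$ \emph{exactly}, not merely asymptotically --- this direction of the equilibrium is the easy one and requires no asymptotics at all, just the observation that the attack channel's feasible set depends only on $y^n$ and that $\Lambda_{ks}^*$ is a union of type classes.

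The harder direction is to fix the attacker at $f_{ks}^*$ and show no acceptance region $\Lambda$ with $P_{fp}\le 2^{-\lambda n}$ can do asymptotically better than $\Lambda_{ks}^*$. Here I would argue as follows. For any competing $\Lambda \in \SS_D$, the false-positive constraint $P_X(x^n\notin\Lambda)\le 2^{-\lambda n}$, combined with the type-counting bound $P_X(T(P)) \doteq 2^{-n\DD(P\|P_X)}$ and the polynomial bound $|\PP_n|\le (n+1)^{|\XX|}$ on the number of types, forces every type $P$ with $\DD(P\|P_X) < \lambda - |\XX|\frac{\log(n+1)}{n}$ to lie in $\Lambda$ --- otherwise the missing type class alone would violate the constraint. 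Thus $\Lambda \supseteq \Lambda_{ks}^*$ up to the $o(1)$ slack in the exponent, so the set of $y^n$ that $f_{ks}^*$ maps into $\Lambda$ contains (asymptotically) the set it maps into $\Lambda_{ks}^*$, giving $P_{fn}(\Lambda,f_{ks}^*) \ge P_{fn}(\Lambda_{ks}^*,f_{ks}^*)$ in the exponent, i.e.\ $u(\Lambda,f_{ks}^*)\le u(\Lambda_{ks}^*,f_{ks}^*) + o(1)$ at the level of error exponents. Because the statement claims only an \emph{asymptotic} Nash equilibrium, matching the exponents (and handling the $|\XX|\frac{\log(n+1)}{n}$ correction that makes the containment exact for finite $n$) is enough.

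The main obstacle, and the place where care is needed, is making the second part rigorous: the containment $\Lambda \supseteq \Lambda_{ks}^*$ is what lets one argue D cannot improve, but one must be precise about whether it is the \emph{defender} or the \emph{attacker} who benefits from a larger acceptance region. A larger $\Lambda$ helps the attacker ($P_{fn}$ grows), so the argument is really that the defender is \emph{forced} by the false-positive constraint to include at least $\Lambda_{ks}^*$, hence cannot shrink $P_{fn}$ below what $\Lambda_{ks}^*$ already achieves against $f_{ks}^*$; and $\Lambda_{ks}^*$ is the minimal such region (up to the vanishing correction term), so it is optimal for D. I would also need to note that the relevant payoff comparison is at the level of error exponents --- the per-$n$ payoffs need not be ordered, only their normalized logarithms in the limit --- which is precisely the sense in which "asymptotic Nash equilibrium" is meant here. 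Since all of this is proved in \cite{BT13}, I would present it compactly, referring to Sanov-type type-counting estimates as standard.
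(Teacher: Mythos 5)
Your proposal follows essentially the same route as the paper, which for this theorem defers to \cite{BT13} but replicates the argument in full for the training-data analogue (Lemma \ref{lemma.optimum_SD} and Theorem \ref{theo.Nash}): the false-positive constraint combined with the lower bound $P_X(T(P)) \ge (n+1)^{-|\XX|}2^{-n\DD(P\|P_X)}$ forces $\Lambda^c \subseteq \Lambda_{ks}^{*,c}$ for every admissible $\Lambda$, making $\Lambda_{ks}^*$ a dominant strategy for D, while the attacker's best response to a fixed union-of-type-classes acceptance region is trivially the divergence minimizer. Two small points. First, you never verify that $\Lambda_{ks}^*$ is itself an admissible strategy: one still needs the companion estimate $P_{fp}(\Lambda_{ks}^*) \le (n+1)^{|\XX|}\max_{P\notin\Lambda_{ks}^*}2^{-n\DD(P\|P_X)} \le 2^{-n(\lambda - 2|\XX|\log(n+1)/n)}$, which meets the constraint only up to a vanishing correction in the exponent --- and it is this feasibility slack, not the payoff comparison, that makes the equilibrium merely \emph{asymptotic}. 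Second, and relatedly, once the containment $\Lambda\supseteq\Lambda_{ks}^*$ is established, the inequality $u(\Lambda,f_{ks}^*)\le u(\Lambda_{ks}^*,f_{ks}^*)$ holds exactly for every finite $n$ (indeed against every $f$, which is precisely why the profile is in fact the unique rationalizable equilibrium, as the paper remarks), so your caveat that ``the per-$n$ payoffs need not be ordered, only their normalized logarithms in the limit'' slightly misplaces where the asymptotics actually enter. Neither point invalidates the argument; both are routine to repair.
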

As a matter of fact, from the proof given in \cite{BT13}, it is easy to see that $(\Lambda_{ks}^*, f_{ks}^*)$ is the only rationalizable equilibrium of the game and hence $HT_{ks}^{lr}$ is a dominance solvable game.

A fundamental consequence of Theorem \ref{theo.NASH_KS_as} is that the optimum strategies for D and A do not depend on $P_Y$ hence making the assumption that $P_Y$ is known irrelevant. With a few modifications, then, Theorem \ref{theo.NASH_KS_as}  can be applied to a composite hypothesis testing scenario in which only the pmf conditioned to $H_0$ is known \cite{Kay}.

The second main result proven in \cite{BT13} regards the payoff at the equilibrium, and specifies under which conditions it is possible for D to devise a decision strategy such that $P_{fn}$ tends to zero exponentially fast when $n$ tends to infinity. Let  $\Gamma_{ks}^n$ be defined as follows:
\begin{equation}
\Gamma^n_{ks} \hspace{-3pt} = \hspace{-3pt} \{ P \in \PP_n : \forall y^n \in T(P), \exists z^n \in \Lambda_{ks}^* \text{ s.t. } d(y^n, z^n) \le nD \}
\label{eq.Gamman_KS}
\end{equation}
and let the asymptotic version of $\Gamma_{ks}^{n}$ be defined as
\begin{equation}
    \Gamma^{\infty}_{ks} = \text{{\em cl}} \left( \bigcup_n \Gamma^n_{ks} \right),
\label{eq.Gamma_union}
\end{equation}
where $\text{{\em cl}}(S)$ indicates the {\em closure} of the set $S$. The following theorem holds:
\begin{theorem}
For the $HT_{ks}^{lr}$ game, the error exponent of the false negative error probability at the equilibrium is given by:
\begin{equation}
    \varepsilon_{ks} = \min_{P \in \Gamma_{ks}^{\infty}} \DD(P||P_Y),
\label{eq.asymptotic_theorem}
\end{equation}
leading to the following cases:
\begin{enumerate}
    \item{$\varepsilon_{ks} = 0$, if $P_Y \in \Gamma_{ks}^{\infty}$;}
    \item{${\displaystyle \varepsilon_{ks} \ne 0}$, if $P_Y \notin \Gamma_{ks}^{\infty}$}.
\end{enumerate}
\label{theo.payoff_KS_as}
\end{theorem}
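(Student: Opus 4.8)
The plan is to compute the false negative probability $P_{fn}$ at the profile $(\Lambda_{ks}^*, f_{ks}^*)$ by the method of types and then let $n\to\infty$. The starting point is a characterisation of successful attacks. Since $f_{ks}^*(y^n)$ is, by definition, a distortion-constrained minimiser of $\DD(P_{z^n}\|P_X)$, and $\Lambda_{ks}^*$ is a sub-level set $\{P:\DD(P\|P_X)<\lambda-|\XX|\log(n+1)/n\}$ of the very same functional, we have $f_{ks}^*(y^n)\in\Lambda_{ks}^*$ if and only if there exists at least one $z^n$ with $d(y^n,z^n)\le nD$ whose type lies in $\Lambda_{ks}^*$; that is, $y^n$ is successfully attacked exactly when it is within distortion $nD$ of the acceptance region. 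Because reachability under a per-letter distortion is governed (up to boundary effects that are sub-exponential in $n$) by the type of $y^n$ alone, the set of successfully-attacked sequences is, to first exponential order, $\bigcup_{P\in\Gamma_{ks}^n}T(P)$, with $\Gamma_{ks}^n$ as in \eqref{eq.Gamman_KS}.

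Next I would sandwich $P_{fn}=\sum_{y^n:\,f_{ks}^*(y^n)\in\Lambda_{ks}^*}P_Y(y^n)$ between $\sum_{P\in\Gamma_{ks}^n}P_Y(T(P))$ (lower bound: these types are fully attackable) and the analogous sum over the slightly larger set of types at least one of whose sequences is attackable (upper bound). Invoking the standard type-counting bounds $(n+1)^{-|\XX|}2^{-n\DD(P\|P_Y)}\le P_Y(T(P))\le 2^{-n\DD(P\|P_Y)}$ together with $|\PP_n|\le(n+1)^{|\XX|}$, both the lower and the upper bound have the same exponential order, so that
\begin{equation}
-\frac{\log P_{fn}}{n}=\min_{P\in\Gamma_{ks}^n}\DD(P\|P_Y)+o(1).
\end{equation}

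The remaining, and principal, step is the passage to the limit, i.e. showing $\lim_{n\to\infty}\min_{P\in\Gamma_{ks}^n}\DD(P\|P_Y)=\min_{P\in\Gamma_{ks}^\infty}\DD(P\|P_Y)$. For this I would use: (i) that the family $\{\Gamma_{ks}^n\}$ is, in essence, monotone --- a type attackable at denominator $n$ stays attackable after refinement, the only caveat being the vanishing slack $|\XX|\log(n+1)/n$ in the definition of $\Lambda_{ks}^*$; (ii) that $\bigcup_n\Gamma_{ks}^n$ is dense in $\Gamma_{ks}^\infty$ by \eqref{eq.Gamma_union}; and (iii) that $\DD(\cdot\|P_Y)$ is lower semicontinuous on the probability simplex, which is compact, so the infimum over the closed set $\Gamma_{ks}^\infty$ is attained and is matched by the finite-$n$ minima via a continuity/diagonal argument. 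I expect the fussiest part of the whole proof to be exactly this interplay between the $|\XX|\log(n+1)/n$ correction term and the boundary types of $\Gamma_{ks}^n$: one has to verify that shrinking the divergence threshold by this vanishing amount, and that restricting to types all of whose sequences are attackable rather than merely some, leaves the limiting region --- and hence the limiting exponent --- unchanged.

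Finally, the dichotomy is immediate. Since $\DD(P\|P_Y)\ge 0$ with equality if and only if $P=P_Y$: if $P_Y\in\Gamma_{ks}^\infty$ the minimum equals $0$, so $\varepsilon_{ks}=0$; if $P_Y\notin\Gamma_{ks}^\infty$, then because $\Gamma_{ks}^\infty$ is closed and excludes $P_Y$, every $P\in\Gamma_{ks}^\infty$ satisfies $\DD(P\|P_Y)>0$, and compactness forces the minimum to be strictly positive, so $\varepsilon_{ks}\ne 0$.
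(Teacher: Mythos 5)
Your proposal is correct and follows essentially the same route as the paper: the paper itself only imports Theorem \ref{theo.payoff_KS_as} from \cite{BT13}, but its proof of the analogous Theorem \ref{theo.SanovTRc} uses exactly your strategy --- reduce the set of successfully attacked sequences to a union of type classes $\Gamma_{ks}^n$ (using that $f_{ks}^*$ minimizes the same functional that defines the sublevel set $\Lambda_{ks}^*$, plus permutation-invariance of $d$), sandwich $P_{fn}$ with the standard type-class probability and counting bounds, and pass to the limit via a density/continuity construction (cf.\ Lemma \ref{Lemma_density_Gamma_n_in_var}) that absorbs the vanishing $|\XX|\log(n+1)/n$ slack you rightly identify as the delicate point. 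The final dichotomy follows, as you say, from nonnegativity of $\DD(\cdot\|P_Y)$ together with closedness of $\Gamma_{ks}^{\infty}$ in the compact simplex.
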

where $\varepsilon_{ks}$ indicates the false negative error exponent at the equilibrium. Given two pmf's $P_X$ and $P_Y$, a distortion constraint $D$ and the desired false positive error exponent $\lambda$, Theorem \ref{theo.payoff_KS_as} permits to understand whether D may ever succeed to make the false negative error probability vanishingly small and thus {\em win} the game. As a matter of fact, this is possible only if $P_Y \notin \Gamma_{ks}^{\infty}$, since otherwise $\varepsilon_{ks} = 0$. We will call $\Gamma_{ks}^{\infty}$ the indistinguishability region for $P_X$, i.e. set of sources that under certain conditions (summarized by $D$ and $\lambda$) are not distinguishable from $P_X$.

\section{Binary hypothesis testing game \\ with training data}
\label{sec.games}

The analysis carried out in \cite{BT13} requires that $P_X$ and $P_Y$ are known to D and A (as we have seen in the previous section, in the asymptotic case only the knowledge of $P_X$ is required). To get closer to a realistic scenario, we now remove this assumption introducing the hypothesis testing game with training data.

Let $\CC$ be the class of discrete memoryless sources with alphabet $\XX$, and let $X \simeq P_X$ be a source in $\CC$. As for the $HT_{ks}$ game, the goal of D is to decide whether a test sequence $x^n$ was drawn from $X$ or not. To make his decision, D relies on the knowledge of a training sequence $t_{D}^N$ drawn from $X$. On his side, A takes a sequence $y^n$ emitted by another source $Y \simeq P_Y$ still belonging to $\CC$ and tries to modify it in such a way that D thinks that the modified sequence was generated by the same source that generated $t_{D}^N$. As usual, the attacker must satisfy a distortion constraint stating that the distance between the modified sequence and $y^n$ must be lower than a threshold. Like the defender, A knows $P_X$ through a training sequence $t_{A}^K$, that in general may not coincide with $t_{D}^N$. We assume that $t_{D}^N$, $t_{A}^K$, $x^n$ and $y^n$ are generated independently. With regard to $P_Y$, we could also assume that it is known through two training sequences, one available to A and one to D, however we will see that - as for known sources and at least in the asymptotic case - such an assumption is not necessary, and hence we take the simplifying assumption that $P_Y$ is known to neither D nor A. Let, then, $H_0$ be the hypothesis that the test sequence has been generated by the same source that generated  $t_{D}^N$ and let $\Lambda$ be the acceptance region for $H_0$. In the following, we will find convenient to think of $\Lambda$ as a subset of $\XX^n \times \XX^N$, i.e., as the set of all the pairs of sequences $(x^n, t_{D}^N)$ that the defender considers to be drawn from the same source. With the above ideas in mind, and by paralleling the definition given in Section \ref{sec.KS}, we define a first version of the binary hypothesis testing game with training sequences as follows:
\begin{definition}
The $HT_{tr,a}(\SS_{D}, \SS_{A}, u)$ game is a zero-sum, strategic, game played by D and A, defined by the following strategies and payoff.
\begin{itemize}
\item{The set of strategies D can choose from is the set of acceptance regions $\Lambda$ for which the maximum false positive probability across all possible $P_X \in \CC$ is lower than a given threshold:
\begin{equation}
    \SS_{D} = \{ \Lambda : \max_{P_X \in \CC} P_X\{(x^n, t_{D}^N) \notin \Lambda\} \le P_{fp}\},
\label{eq.SFA_TR}
\end{equation}
where $P_{fp}$ is a prescribed maximum false positive probability, and where $P_X\{(x^n, t_{D}^N) \notin \Lambda\}$ indicates the probability that two independent sequences generated by $X$ do not belong to $\Lambda$. Note that the acceptance region is defined as a union of pairs of sequences, and hence $\Lambda \subset \XX^{n} \times \XX^N$.}
\item{The set of strategies A can choose from is formed by all the functions that map a sequence $y^n$ generated by $Y$ into a new sequence $z^n$ subject to a distortion constraint:
\begin{equation}
    \SS_{A} = \{ f(y^n, t_{A}^K): d(y^n, f(y^n,t_{A}^K)) \le nD\},
\label{eq.SAD_TR}
\end{equation}
where $d(\cdot,\cdot)$ is a proper distance function and $D$ is the maximum allowed per-letter distortion. Note that the function $f(\cdot)$  depends on $t_{A}^K$, since when performing his attack A can exploit the knowledge of his training sequence.}
\item{The payoff function is defined in terms of the false negative error probability, namely:
\begin{equation}
    u(\Lambda, f) = - P_{fn} =  - \hspace{-1.1cm} \sum_{\substack{t_{D}^N \in \XX^N, ~ t_{A}^K \in \XX^K \\ y^n: (f(y^n,t_{A}^K), t_{D}^N) \in \Lambda}} \hspace{-1.1cm}P_Y(y^n) P_X(t_{D}^N) P_X(t_{A}^K),
\label{eq.payoff_TR}
\end{equation}
where the error probability is averaged across all possible $y^n$ and training sequences and where we have exploited the independence of $y^n, t_{D}^N$ and $t_{A}^K$}.
\end{itemize}
\label{def.SI_ks_a}
\end{definition}

\subsection{Discussion}

Before going on with the analysis, we pose to discuss some of the choices we implicitly made with the above definition.

A first observation regards the payoff function. As a matter of fact, the expression in (\ref{eq.payoff_TR}) looks problematic, since its evaluation requires that the pmf's $P_X$ and $P_Y$ are known, however this is not the case in our scenario since we have assumed that $P_X$ is known only through $t_{D}^N$ and $t_{A}^K$, and that $P_Y$ is not known at all. As a consequence it may seem that the players of the game are not able to compute the payoff associated to a given profile and hence have no arguments upon which they can base their choice. While this is indeed a problem in a generic setup, we will show later on in the paper that asymptotically (when $n$, $N$ and $K$ tend to infinity) the optimum strategies of D and A are uniformly optimum across all $P_X$ and $P_Y$ and hence the ignorance of $P_X$ and $P_Y$ is not a problem. One may wonder why we did not define the payoff under a worst case assumption (from D's perspective) on $P_X$ and/or $P_Y$. The reason is that doing so would result in a meaningless game. In fact, given that $X$ and $Y$ are drawn from the same class of sources $\CC$, the worst case for D would always correspond to $X = Y$ for which no meaningful decision is possible\footnote{Alternatively, we could assume that $X$ and $Y$ belong to two disjoint source classes $\CC_X$ and $\CC_Y$. We leave this analysis for further research.}.

As a second remark, we stress that we decided to limit the strategies available to A to deterministic functions of $y^n$. This may seem a limiting choice, however we will see in the subsequent sections that, at least asymptotically, the optimum strategy of D depends neither on the strategy chosen by A nor on $P_Y$, then, it does not make sense for A to adopt a randomized strategy to confuse D.

A last, even more basic, comment regards the overall structure of the game. In our definition we assumed that the attacker does not intervene when $H_0$ holds, since we restricted his interest to the false negative error probability. An alternative approach could be to let the attacker modify also the sequences generated by $X$ in the attempt to increase the false positive rate. We could also depart from the Neyman-Pearson set up and define the payoff in terms of the overall error probability, or the overall Bayes risk defined on the basis of suitable cost functions associated to the two kinds of errors\footnote{In this case it would be necessary that the a-priori probabilities of the two hypotheses are known.}. While these are interesting research directions, in this paper we restrict our analysis to the game specified by Definition \ref{def.SI_ks_a}, and leave the alternative approaches for future research.

\subsection{Game variants}

Two different variants of the $HT_{tr,a}$ game are obtained by assuming a different relationship between the training sequences. In certain cases, we may assume that D has a better access to the source $X$ than A (see \cite{Gol10} for a multimedia forensics scenario in which such an assumption holds quite naturally). In our framework, we can model such a situation by assuming that the sequence $t_{A}^K$ is a subsequence of $t_{D}^N$, leading to the following definition.
\begin{definition}
The $HT_{tr,b}(\SS_{D}, \SS_{A}, u)$ game is a zero-sum, strategic, game defined as the $HT_{tr,a}$ game with the only difference that $t_{A}^K = (t_{A,l+1}, t_{A,l+2}  \dots  t_{A,l+K})$ with $l$ and $K$ known to D.
\label{def.SI_tr_b}
\end{definition}
Yet another variant is obtained by assuming that the training sequence available to A is equal to that available to D.
\begin{definition}
The $HT_{tr,c}(\SS_{D}, \SS_{A}, u)$ game is a zero-sum, strategic, game defined as the $HT_{tr,a}$ game with the only difference that $K = N$ and $t_{A}^K = t_{D}^N$ (simply indicated as $t^N$ in the following). The set of strategies of D and A are the same as in the $HT_{tr,a}$ game, while the payoff is redefined as:
\begin{equation}
    u(\Lambda, f) = - P_{fn} =  - \hspace{-1.1cm} \sum_{\substack{t^N \in \XX^N \\ y^n: (f(y^n,t^N), t^N) \in \Lambda}} \hspace{-1.1cm}P_Y(y^n) P_X(t^N).
\label{eq.payoff_TR_c}
\end{equation}

\label{def.SI_tr_c}
\end{definition}
In the rest of the paper we will first focus on version $c$ of the game, and then extend our results so to cover version $a$ as well.

\subsection{Hypothesis testing game with limited resources}

Studying the existence of an equilibrium point for the $HT_{tr,c}$ game is a prohibitive task, hence we use the same approach adopted in \cite{merhav_sabbag_08, BT13} and consider a simplified version of the game in which D can only base his decision on a limited set of statistics computed on the test and training sequences. Specifically, we require that D relies only on the relative frequencies with which the symbols in $\XX$ appear in $x^n$ and $t^N$, i.e. $P_{x^n}$ and $P_{t^N}$. Note that $P_{x^n}$ and $P_{t^N}$ are not sufficient statistics for D, since even if $Y$ is a memoryless source, the attacker could introduce some memory within the sequence as a result of the application of $f(\cdot)$. In the same way he could introduce some dependencies between the attacked sequence $z^n$ and $t^N$. It is then necessary to treat the assumption that D relies only on $P_{x^n}$ and $P_{t^N}$ as an explicit requirement.

Following \cite{merhav_sabbag_08} and \cite{BT13}, we call this version of the game {\em hypothesis testing with limited-resources}, and we refer to it as the $HT_{tr,c}^{lr}$ game. As a consequence of the limited resources assumption, $\Lambda$ can only be a union of Cartesian products of pairs of type classes, i.e. if the pair of sequences ($x^n$, $t^N$) belongs to $\Lambda$, then any pair of sequences belonging to the Cartesian product $T(P_{x^n}) \times T(P_{t^N})$ will also be contained in $\Lambda$. Since a type class is univocally defined by the empirical pmf of the sequences contained in it, we can redefine $\Lambda$ as a union of pairs of types $(P, Q)$ with $P \in \PP_n$ and $Q \in \PP_N$. In the following, we will use the two interpretations of $\Lambda$ (as a set of pairs of sequences or pairs of types) interchangeably, the exact meaning being always recoverable from the context.

We are interested in studying the asymptotic behavior of the game when $n$ and $N$ tend to infinity. To avoid the necessity of considering two limits with $n$ and $N$ tending to infinity independently, we will express $N$ as a function of $n$, and study what happens when $n$ tends to infinity. This assumption does not reduce the generality of our analysis, however it destroys the symmetry of the hypothesis testing problem with respect to the two sequences $x^n$ and $t_N$. The consequences of this loss of symmetry will be discussed in Section \ref{seq.NASHdiscuss}.

We are now ready to define the asymptotic $HT_{tr,c}^{lr}$ game. Specifically, we have:
\begin{definition}
The $HT_{tr,c}^{lr}(\SS_{D}, \SS_{A}, u)$ game is a zero-sum, strategic, game played by D and A, defined  by the following strategies and payoff:
\begin{align}
\label{eq.SD_TR_LR}
    \SS_{D} = \{ & \Lambda \subset \PP_n \times \PP_{N}: \\ \nonumber
    ~ & \max_{P_X \in \CC} P_X\{(x^n, t^{N(n)}) \notin \Lambda\} \le 2^{-\lambda n}\},
\end{align}
\begin{equation}
    \SS_{A} = \{ f(y^n, t^{N(n)}): d(y^n, f(y^n,t^{N(n)})) \le nD\},
\label{eq.SA_TR_LR}
\end{equation}
\begin{equation}
    u(\Lambda, f) = - P_{fn} =  - \hspace{-1.1cm} \sum_{\substack{t^{N(n)} \in \XX^{N(n)} \\ y^n: (f(y^n,t^{N(n)}), t^{N(n)}) \in \Lambda}} \hspace{-1.1cm}P_Y(y^n) P_X(t^{N(n)}).
\label{eq.payoff_TR_LR}
\end{equation}
\label{def.HT_tr_lr_c}
\end{definition}
Note that we ask that the false positive error probability decays exponentially fast with $n$, thus opening the way to the asymptotic solution of the game. Similar definitions can be given for versions $a$ and $b$ of the game.

\section{Asymptotic equilibrium of the $HT_{tr,c}^{lr}$ game.}
\label{sec.Equilibrium_tr_c}

We start the analysis of the asymptotic equilibrium point of the $HT_{tr,c}^{lr}$ game by determining the optimum acceptance region for D. To do so we will use an analysis similar to that carried out in \cite{Gut89} to study hypothesis testing with observed statistics. The main difference between our analysis and \cite{Gut89} is the presence of the attacker, i.e. the game-theoretic nature of our problem. The derivation of the optimum strategy for D passes through the definition of the generalized log-likelihood ratio function $h(x^n, t^N)$. Given the test and training sequences $x^n$ and $t^N$, the generalized log-likelihood ratio function is defined as (\cite{Gut89,Kendall})\footnote{To simplify the notation, when it is not strictly necessary, we omit to indicate explicitly the dependence of $N$ on $n$.}:
\begin{equation}
    h(x^n, t^{N}) = \DD(P_{x^n} || P_{r^{n+N}}) + \frac{N}{n} \DD(P_{t^N} || P_{r^{n+N}}),
\label{eq.h}
\end{equation}
where $P_{r^{n+N}}$ indicates the empirical pmf of the sequence $r^{n+N}$, obtained by concatenating $x^n$ and $t^N$, i.e.
\begin{equation}
    r_i = \left\lbrace
    \begin{array}{ll}
        x_i & i \le n \\
        t_{i-n} & n < i \le n+N
\end{array}
    \right. .
\label{eq.r}
\end{equation}
Observing that $h(x^n,t^{N})$ depends on the test and the training sequences only through their empirical pmf, we can also use the notation $h(P_{x^n} , P_{t^N})$. The study of the equilibrium for the $HT_{tr,c}^{lr}$ game passes through the following lemmas.
\begin{lemma}
For any $P_X$ we have:
\begin{align}
\label{eq.property_Dsum}
   n\DD(P_{x^n}||P_{r^{n+N}}) + & N\DD(P_{t^N} || P_{r^{n+N}}) \le \\ \nonumber
   & n\DD(P_{x^n}||P_X) + N\DD(P_{t^N} || P_X),
\end{align}
with equality holding if only if $P_X = P_{r^{n+N}}$.
\label{lemma.property_Dsum}
\end{lemma}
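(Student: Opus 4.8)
The plan is to expand both sides of \eqref{eq.property_Dsum} directly from the definition of KL divergence and exploit the fact that $P_{r^{n+N}}$ is, by construction, the convex combination $\frac{n}{n+N}P_{x^n} + \frac{N}{n+N}P_{t^N}$. Write $\alpha = \frac{n}{n+N}$, so that $P_r := P_{r^{n+N}} = \alpha P_{x^n} + (1-\alpha)P_{t^N}$, and (after dividing \eqref{eq.property_Dsum} through by $n+N$) the claim becomes
\begin{equation}
\alpha\,\DD(P_{x^n}\|P_r) + (1-\alpha)\,\DD(P_{t^N}\|P_r) \;\le\; \alpha\,\DD(P_{x^n}\|P_X) + (1-\alpha)\,\DD(P_{t^N}\|P_X).
\end{equation}
First I would observe that the left-hand side is exactly the \emph{Jensen--Shannon-type} mixing quantity: expanding each divergence as $\sum_a P(a)\log P(a) - \sum_a P(a)\log P_r(a)$, the $\sum_a P(a)\log P(a)$ terms are common to both sides and cancel, so it suffices to prove
\begin{equation}
-\sum_{a\in\XX}\bigl(\alpha P_{x^n}(a) + (1-\alpha)P_{t^N}(a)\bigr)\log P_r(a) \;\le\; -\sum_{a\in\XX}\bigl(\alpha P_{x^n}(a) + (1-\alpha)P_{t^N}(a)\bigr)\log P_X(a),
\end{equation}
that is, $-\sum_a P_r(a)\log P_r(a) \le -\sum_a P_r(a)\log P_X(a)$, i.e. $H(P_r) \le H(P_r) + \DD(P_r\|P_X)$ after rearranging — equivalently just $\DD(P_r\|P_X)\ge 0$. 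This is nonnegativity of the KL divergence (the information inequality), which is standard, with equality if and only if $P_X = P_r$.

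The cleanest way to present this is therefore: (i) substitute $P_r = \alpha P_{x^n} + (1-\alpha)P_{t^N}$ and note $\alpha P_{x^n}(a) + (1-\alpha)P_{t^N}(a) = P_r(a)$ termwise; (ii) observe that, after cancelling the entropy-like terms $\sum_a P(a)\log P(a)$ that appear identically on both sides, the difference ``RHS $-$ LHS'' of the normalized inequality collapses to $\DD(P_r\|P_X)$; (iii) invoke nonnegativity of KL divergence, which also pins down the equality condition $P_X = P_r = P_{r^{n+N}}$. Multiplying back through by $n+N$ recovers the stated form with the coefficients $n$ and $N$.

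The only mild subtlety — and the one point I would be careful about rather than a genuine obstacle — is bookkeeping with the convex-combination identity and making sure the cancellation in step (ii) is carried out termwise over $a\in\XX$, since $P_{x^n}$ and $P_{t^N}$ need not have the same support and one must keep the conventions $0\log 0 = 0$ and $p\log(p/0)=\infty$ in mind. If $P_X$ assigns zero probability to some symbol that has positive empirical frequency, both sides are $+\infty$ and the inequality holds trivially (and equality fails, consistent with $P_X\ne P_r$); otherwise all logs of $P_X$ are finite and the algebraic manipulation above is valid verbatim. No game-theoretic input is needed here — the lemma is a purely information-theoretic identity about empirical distributions and their concatenation.
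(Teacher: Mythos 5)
Your proof is correct and is essentially the paper's argument in different clothing: the paper expands the divergences via $n\DD(P_{x^n}\|P_X) = -\log P_X(x^n) - nH(P_{x^n})$ and bounds $P_X(r^{n+N})$ by the likelihood of $r^{n+N}$ under its own empirical pmf (proved via $\ln x \ge 1-1/x$), which after cancelling the empirical-entropy terms is exactly your reduction of the difference of the two sides to $(n+N)\,\DD(P_{r^{n+N}}\|P_X)\ge 0$, with the same equality condition $P_X = P_{r^{n+N}}$. Your compensation-identity presentation is, if anything, slightly more direct, and your handling of the support conventions is consistent with the paper's.
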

The proof of Lemma \ref{lemma.property_Dsum} is given in Appendix \ref{app.lemma1}.

\begin{lemma}
Let $\Lambda_{tr,c}^*$ be defined as follows:
\begin{equation}
    \Lambda_{tr,c}^* \hspace{-0.1cm} = \hspace{-0.1cm} \left\lbrace  (P_{x^n}, P_{t^N})  \hspace{-0.06cm} : h(P_{x^n}, P_{t^N}) \hspace{-0.08cm} < \hspace{-0.08cm} \lambda \hspace{-0.03cm} - \hspace{-0.03cm} |\XX| \frac{\log(n+1)(N+1)}{n}  \right\rbrace
\label{eq.optimum_SD}
\end{equation}
with
\begin{equation}
	\label{eq.Nvsn}
    	\lim_{n \rarrow \infty} \frac{log(N(n)+1)}{n} = 0.
\end{equation}
Then:
\begin{enumerate}
\item{$\max_{P_X} P_X\{(x^n, t^N) \notin \Lambda_{tr,c}^*\} \le 2^{-n(\lambda - \nu_n)}$, with $\nu_n \rarrow 0$, for $n \rarrow \infty$,}
\item{$\forall \Lambda \in \SS_{D}$, we have $\Lambda^c \subseteq \Lambda_{tr,c}^{*,c}$}.
\end{enumerate}
\label{lemma.optimum_SD}
\end{lemma}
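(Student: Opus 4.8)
The plan is to establish the two claims separately, both resting on the key inequality of Lemma \ref{lemma.property_Dsum}, which allows us to control the generalized log-likelihood ratio $h$ by a divergence against the true (but unknown) $P_X$.

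For part 1), I would begin by bounding the false positive probability via the method of types. When $H_0$ holds, both $x^n$ and $t^N$ are drawn i.i.d.\ from $P_X$, independently. The event $(x^n,t^N)\notin\Lambda_{tr,c}^*$ means $h(P_{x^n},P_{t^N})\ge \lambda - |\XX|\frac{\log(n+1)(N+1)}{n}$. I would write
\begin{equation}
P_X\{(x^n,t^N)\notin\Lambda_{tr,c}^*\} = \sum_{(P,Q)} P_X(T(P))\,P_X(T(Q)),
\end{equation}
where the sum is over pairs of types $(P,Q)\in\PP_n\times\PP_N$ with $h(P,Q)$ above the threshold. Using the standard bound $P_X(T(P))\le 2^{-n\DD(P\|P_X)}$ and the analogous one for $Q$, the summand is at most $2^{-n\DD(P\|P_X)-N\DD(Q\|P_X)}$. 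Now invoke Lemma \ref{lemma.property_Dsum}: $n\DD(P\|P_X)+N\DD(Q\|P_X)\ge n\DD(P\|P_r)+N\DD(Q\|P_r)=n\,h(P,Q)$, where $P_r$ is the empirical pmf of the concatenation. Hence each summand is at most $2^{-n\,h(P,Q)}\le 2^{-n(\lambda - |\XX|\frac{\log(n+1)(N+1)}{n})}=2^{-n\lambda}(n+1)^{|\XX|}(N+1)^{|\XX|}$. Multiplying by the number of type pairs, which is at most $(n+1)^{|\XX|}(N+1)^{|\XX|}$, gives the bound $2^{-n\lambda}(n+1)^{2|\XX|}(N+1)^{2|\XX|} = 2^{-n(\lambda-\nu_n)}$ with $\nu_n = \frac{2|\XX|}{n}\log(n+1)(N+1)$. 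The assumption \eqref{eq.Nvsn} that $\frac{\log(N(n)+1)}{n}\to 0$ guarantees $\nu_n\to 0$. Since this bound is uniform in $P_X$, taking the maximum over $P_X\in\CC$ changes nothing, establishing part 1).

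For part 2), I want to show $\Lambda_{tr,c}^*$ is, asymptotically, the largest admissible acceptance region: any $\Lambda\in\SS_D$ has a rejection region containing that of $\Lambda_{tr,c}^*$. Equivalently, if a type pair $(P,Q)$ lies \emph{outside} $\Lambda_{tr,c}^*$, it must lie outside every admissible $\Lambda$. So suppose $(P,Q)\notin\Lambda_{tr,c}^*$, i.e.\ $h(P,Q)\ge\lambda - |\XX|\frac{\log(n+1)(N+1)}{n}$, but for contradiction $(P,Q)\in\Lambda$ for some $\Lambda\in\SS_D$. The idea is to choose the adversarial $P_X$ equal to $P_r$, the empirical pmf of the concatenation of any representative sequences of types $P$ and $Q$ (which is itself a valid pmf, and the worst case in the max over $\CC$). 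Under this $P_X$, Lemma \ref{lemma.property_Dsum} holds with equality, so $n\DD(P\|P_X)+N\DD(Q\|P_X)=n\,h(P,Q)$. A lower bound on the type-class probability, $P_X(T(P))\ge (n+1)^{-|\XX|}2^{-n\DD(P\|P_X)}$ and likewise for $Q$, then gives
\begin{equation}
P_X\{(x^n,t^N)\in T(P)\times T(Q)\} \ge (n+1)^{-|\XX|}(N+1)^{-|\XX|}2^{-n\,h(P,Q)} \ge (n+1)^{-|\XX|}(N+1)^{-|\XX|}2^{-n\lambda}(n+1)^{|\XX|}(N+1)^{|\XX|} = 2^{-n\lambda}.
\end{equation}
Since $T(P)\times T(Q)\subseteq\Lambda$, this forces $P_X\{(x^n,t^N)\notin\Lambda\}\ge P_X\{(x^n,t^N)\in\Lambda^c\}$... wait — more carefully, the containment of one type pair in $\Lambda^c$ would already contribute at least this much; but if $(P,Q)\in\Lambda$ we instead get that the false \emph{positive} contribution from pairs \emph{not} in $\Lambda$ could still be forced large by a symmetric argument on the complement. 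The clean route: since admissibility requires $\max_{P_X} P_X\{(x^n,t^N)\notin\Lambda\}\le 2^{-\lambda n}$, and the rejection region $\Lambda^c$ must have some type pair $(P',Q')$ with $h(P',Q')$ at least as large as the threshold for $\Lambda_{tr,c}^*$ — otherwise $\Lambda^c\subsetneq$ complement of a set where all $h$ are below threshold, making $\Lambda$ strictly contain $\Lambda_{tr,c}^*$ and hence violate admissibility by the reverse of the computation above. I would formalize this by showing that $\Lambda_{tr,c}^*$ is maximal: if $\Lambda \supsetneq \Lambda_{tr,c}^*$ strictly on types, it contains a pair $(P,Q)$ with $h(P,Q)$ above threshold, and choosing $P_X = P_r$ as above shows $P_X\{(x^n,t^N)\notin\Lambda\}$ cannot be $\le 2^{-\lambda n}$ because even the single omitted mass from that type pair's probability exceeds the budget — contradiction. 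Hence no admissible $\Lambda$ strictly contains $\Lambda_{tr,c}^*$, which is precisely $\Lambda^c\subseteq\Lambda_{tr,c}^{*,c}$ for all $\Lambda\in\SS_D$.

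The main obstacle I anticipate is the careful bookkeeping in part 2): making precise the claim that a single type pair above threshold, together with the adversarial choice $P_X = P_r$ and the polynomial correction terms, genuinely breaks the false-positive budget. One must verify that the $(n+1)^{-|\XX|}(N+1)^{-|\XX|}$ loss from the type-class lower bound is exactly compensated by the $|\XX|\frac{\log(n+1)(N+1)}{n}$ slack subtracted in the definition of $\Lambda_{tr,c}^*$ — this is why that particular correction term appears in \eqref{eq.optimum_SD} rather than something smaller. The equality condition in Lemma \ref{lemma.property_Dsum} is essential here: it is what lets the adversary pick a $P_X$ for which the log-likelihood ratio $h$ coincides exactly with the divergence-sum governing the type probability, so there is no further slack to exploit. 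Everything else is a routine application of the type-counting bounds $|\PP_n|\le(n+1)^{|\XX|}$ and $(n+1)^{-|\XX|}2^{-n\DD(P\|Q)}\le P_Q(T(P))\le 2^{-n\DD(P\|Q)}$.
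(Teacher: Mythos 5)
Your part 1) is correct and is essentially the paper's own argument: bound each type-pair probability by $2^{-n[\DD(P\|P_X)+\frac{N}{n}\DD(Q\|P_X)]}$, pass to $2^{-n\,h(P,Q)}$ via Lemma \ref{lemma.property_Dsum}, and absorb the two polynomial type-counting factors into $\nu_n = 2|\XX|\frac{\log(n+1)(N+1)}{n}$.

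Part 2), however, has a genuine gap: you have inverted the inclusion. The claim $\Lambda^c \subseteq \Lambda_{tr,c}^{*,c}$ is equivalent to $\Lambda_{tr,c}^* \subseteq \Lambda$, i.e., every admissible acceptance region \emph{contains} $\Lambda_{tr,c}^*$; equivalently $\Lambda_{tr,c}^*$ has the \emph{largest} rejection region, which is exactly what makes it optimal for the defender. What you set out to prove --- ``if $(P,Q)$ lies outside $\Lambda_{tr,c}^*$ it must lie outside every admissible $\Lambda$,'' i.e.\ $\Lambda \subseteq \Lambda_{tr,c}^*$ --- is the reverse inclusion, and it is false: the trivial region $\Lambda = \XX^n\times\XX^N$ is admissible (its rejection region is empty) yet is not contained in $\Lambda_{tr,c}^*$. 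This misreading derails the rest of the argument. You place the high-$h$ type pair \emph{inside} $\Lambda$ and try to show that its probability mass breaks the false-positive budget, but mass inside the acceptance region contributes nothing to $P_X\{(x^n,t^N)\notin\Lambda\}$; enlarging $\Lambda$ only \emph{decreases} the false-positive probability, so no contradiction can arise. You notice the problem (``wait''), but the ``clean route'' you substitute rests on the same false premise, and the final sentence equates ``no admissible $\Lambda$ strictly contains $\Lambda_{tr,c}^*$'' with the stated inclusion, which is not an equivalence. The correct argument uses exactly the ingredients you assembled, pointed in the other direction: take any $(x^n,t^N)\in\Lambda^c$; by the limited-resources assumption $T(P_{x^n})\times T(P_{t^N})\subseteq\Lambda^c$; choosing $P_X=P_{r^{n+N}}$ (the equality case of Lemma \ref{lemma.property_Dsum}) together with the type-class lower bound gives
\begin{equation*}
2^{-\lambda n}\;\ge\; \max_{P_X}P_X(\Lambda^c)\;\ge\; \frac{2^{-n\,h(P_{x^n},P_{t^N})}}{(n+1)^{|\XX|}(N+1)^{|\XX|}},
\end{equation*}
whence $h(P_{x^n},P_{t^N})\ge\lambda-|\XX|\frac{\log(n+1)(N+1)}{n}$ and the pair lies in $\Lambda_{tr,c}^{*,c}$. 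Your closing observation that the polynomial slack exactly matches the correction term in \eqref{eq.optimum_SD} is correct and is indeed why that term appears; it just has to be applied to a type pair in the rejection region, not the acceptance region.
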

\begin{proof}
Being $\Lambda_{tr,c}^*$ a union of pairs of types (or, equivalently, a union of Cartesian products of type classes), we have:
\begin{align}
    \max_{P_X} P_{fp} & =  \max_{P_X \in \CC}  \sum_{(x^n, t^N) \in \Lambda_{tr,c}^{*,c}} P_X(x^n,t^N) \\ \nonumber
    & = \max_{P_X \in \CC} \sum_{(P_{x^n}, P_{t^N}) \in \Lambda_{tr,c}^{*,c}} P_X(T(P_{x^n}) \times T(P_{t^N})).
\end{align}
For the class of discrete memoryless sources, the number of types with denominators $n$ and $N$ is bounded by $(n+1)^{|\XX|}$ and $(N+1)^{|\XX|}$ respectively \cite{CandT}, so we can write:
\begin{align}
     \max_{P_X} P_{fp} & \le \max_{P_X} \max_{(P_{x^n}, P_{t^N}) \in \Lambda_{tr,c}^{*,c}} \\ \nonumber
      &  \hspace{0.3cm}  [ (n+1)^{|\XX|}(N+1)^{|\XX|} P_X(T(P_{x^n}) \times T(P_{t^N})) ]  \\ \nonumber
      & \le (n+1)^{|\XX|}(N+1)^{|\XX|} \cdot \\ \nonumber
      & \hspace{0.6cm} \max_{P_X} \max_{(P_{x^n}, P_{t^N}) \in \Lambda_{tr,c}^{*,c}} 2^{-n[\DD(P_{x^n} || P_X) + \frac{N}{n}\DD(P_{t^N} || P_X) ]},
\end{align}
where in the second inequality we have exploited the independence of $x^n$ and $t^N$ and the property of types according to which for any sequence $x^n$ we have $P_X(T(P_{x^n})) \le 2^{-n \DD (P_{x^n} || P_X)}$ (see \cite{CandT}). By exploiting Lemma \ref{lemma.property_Dsum}, we can write:
\begin{align}
\label{eq.firstpartlemma}
     \max_{P_X} P_{fp} & \le (n+1)^{|\XX|}(N+1)^{|\XX|} \\ \nonumber
     & \hspace{0.4cm} \max_{(P_{x^n}, P_{t^N}) \in \Lambda_{tr,c}^{*,c}} 2^{-n[\DD(P_{x^n} || P_{r^{n+N}}) + \frac{N}{n} \DD(P_{t^N}||P_{r^{n+N}})]} \\ \nonumber
     & \le (n+1)^{|\XX|}(N+1)^{|\XX|} ~ 2^{-n(\lambda - |\XX| \frac{\log(n+1)(N+1)}{n})} \\ \nonumber
     & = 2^{-n(\lambda - 2 |\XX| \frac{\log(n+1)(N+1)}{n})},
\end{align}
where the last inequality derives from the definition of $\Lambda_{tr,c}^*$. Together with (\ref{eq.Nvsn}), equation (\ref{eq.firstpartlemma}) proves the first part of the lemma with $\nu_n = 2 |\XX| \frac{\log(n+1)(N+1)}{n}$.

For any $\Lambda \in \SS_D$, let $(x^n, t^N)$ be a generic pair of sequences contained in $\Lambda^c$, due to the limited resources assumption the cartesian product between $T(P_{x^n})$ and $T(P_{t^N})$ will be entirely contained in $\Lambda^c$. Then we have:
\begin{align}
\label{eq.secondpartlemma}
    2^{-\lambda n} & \ge \max_{P_X} P_X(\Lambda^c) \\ \nonumber
    & \stackrel{(a)}{\ge} \max_{P_X} P_X(T(P_{x^n}) \times T(P_{t^N})) \\ \nonumber
    & \stackrel{(b)}{\ge} \max_{P_X} \frac{2^{-n[\DD(P_{x^n}||P_X)+\frac{N}{n}\DD(P_{t^N}||P_X)]}}{(n+1)^{|\XX|}(N+1)^{|\XX|}} \\ \nonumber
    & \stackrel{(c)}{=} \frac{2^{-n[\DD(P_{x^n}||P_{r^{n+N}})+\frac{N}{n}\DD(P_{t^N}||P_{r^{n+N}})]}}{(n+1)^{|\XX|}(N+1)^{|\XX|}},
\end{align}
where $(a)$ is due to the limited resources assumption, $(b)$ follows from the independence of $x^n$ and $t^N$ and a lower bound on the probability of a pair of type classes \cite{CandT}, and $(c)$ derives from Lemma \ref{lemma.property_Dsum}. By taking the logarithm of both sides we find that $(x^n,t^N) \in \Lambda_{tr,c}^{*,c}$, thus completing the proof.
\end{proof}
The first part of Lemma \ref{lemma.optimum_SD} shows that, at least asymptotically, $\Lambda_{tr,c}^*$ belongs to $\SS_{D}$, while the second part implies the optimality of $\Lambda_{tr,c}^*$. An important observation is that the optimum strategy of D is univocally determined by the false positive constraint. This solves the apparent problem that we pointed out when defining the payoff of the game, namely that the payoff depends on $P_X$ and $P_Y$ and hence it is not fully known to D.  We also observe that $\Lambda_{tr,c}^*$ does not depend on $t_A^K$, hence it is the optimum defender's strategy even for versions $a$ and $b$ of the $HT_{tr}^{lr}$ game. For this reason, from now on we will simply indicate it as $\Lambda_{tr}^*$.

The most important consequence of Lemma \ref{lemma.optimum_SD} is that the optimum strategy of D does not depend on the strategy chosen by the attacker, that is $\Lambda_{tr}^*$ is a strictly dominant strategy for D. In turn this simplifies the analysis of the optimum attacking strategy. In fact, a rationale defender will surely play the dominant strategy $\Lambda_{tr}^*$, hence A can choose his strategy by assuming that D chooses $\Lambda_{tr}^*$. The derivation of the optimum attacking strategy is now an easy task. We only need to observe that the goal of A is to take a sequence $y^n$ drawn from $Y$ and modify it in such a way that:
\begin{equation}
\label{eq.attack_goal}
    h(z^n,t^N) < \lambda - |\XX| \frac{\log(n+1)(N+1)}{n},
\end{equation}
with $d(y^n, z^n) \le nD$. The optimum attacking strategy, then, can be expressed as a minimization problem, i.e.:
\begin{equation}
\label{eq.optimum_A}
    f_{tr,c}^*(y^n, t^N) = \arg \min_{z^n : d(y^n,z^n) \le nD} h(z^n,t^N).
\end{equation}
Note that to implement this strategy A needs to know $t^N$, i.e. equation (\ref{eq.optimum_A}) determines the optimum strategy only for version $c$ of the game.

Having determined the optimum strategies for D and A, we can state the first main result of the paper, summarized in the following theorem.
\begin{theorem}[Asymptotic equilibrium of the $HT_{tr,c}^{lr}$ game]
\label{theo.Nash}
The $HT_{tr,c}^{lr}$ game is a dominance solvable game and the profile $(\Lambda_{tr}^*, f_{tr,c}^*)$ is the only rationalizable equilibrium.
\end{theorem}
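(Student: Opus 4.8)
The plan is to show that the iterated elimination of strictly dominated strategies collapses onto the single profile $(\Lambda_{tr}^*, f_{tr,c}^*)$, by proving two facts in sequence: first, that $\Lambda_{tr}^*$ is (asymptotically) a strictly dominant strategy for D, so that all of D's other acceptance regions are removed in the first round; and second, that once D is pinned to $\Lambda_{tr}^*$, the attacker's payoff-maximizing response is $f_{tr,c}^*$, so that all of A's other maps are removed in the second round, leaving nothing else.

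For the first fact I would lean entirely on Lemma \ref{lemma.optimum_SD}. Part 1 guarantees that $\Lambda_{tr}^*$ meets the false-positive constraint up to a vanishing slack $\nu_n$, so $\Lambda_{tr}^* \in \SS_D$ in the asymptotic sense appropriate to the statement (equivalently, replace $\lambda$ by $\lambda-\nu_n$ and use $\nu_n \to 0$). Part 2 states that every admissible $\Lambda \in \SS_D$ satisfies $\Lambda^c \subseteq \Lambda_{tr}^{*,c}$, i.e. $\Lambda_{tr}^* \subseteq \Lambda$. Now the payoff (\ref{eq.payoff_TR_LR}) is $-P_{fn}$, and $P_{fn}(\Lambda,f)$ is non-decreasing in $\Lambda$ under set inclusion, since enlarging the acceptance region can only add pairs $(f(y^n,t^N),t^N)$ that count toward a miss. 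Hence $u(\Lambda_{tr}^*,f) \ge u(\Lambda,f)$ for every attacker $f$ and every $\Lambda \in \SS_D$. To upgrade this to \emph{strict} dominance over each $\Lambda \ne \Lambda_{tr}^*$, I would note that such a $\Lambda$ is a proper superset of $\Lambda_{tr}^*$, so there is a pair of types $(P,Q)$ with $T(P)\times T(Q) \subseteq \Lambda \setminus \Lambda_{tr}^*$; exhibiting an attacker that steers a positive-weight $y^n$ into $T(P)$ while respecting the distortion budget produces a strictly larger $P_{fn}$ against $\Lambda$ than against $\Lambda_{tr}^*$, so $\Lambda$ is strictly dominated. (If $P_X$ is not assumed to have full support, one restricts attention to types reachable under sources in $\CC$, which does not affect the argument.)

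For the second fact, after the first round only profiles $(\Lambda_{tr}^*,f)$ survive, and in this reduced one-sided game a strategy $f$ is strictly dominated precisely when it yields strictly smaller payoff than some other strategy. Here the payoff decomposes additively over the independent pairs $(y^n,t^N)$: for each pair the attacker either can place $z^n = f(y^n,t^N)$ within distortion $nD$ so that $(z^n,t^N) \in \Lambda_{tr}^*$ — contributing the full weight $P_Y(y^n)P_X(t^N)$ — or it cannot, contributing zero whatever it does. Since $f_{tr,c}^*$ in (\ref{eq.optimum_A}) selects, for every pair, the $z^n$ of minimal $h(z^n,t^N)$, it lands inside $\Lambda_{tr}^*$ whenever any feasible $z^n$ does, hence attains the pairwise maximum for all pairs simultaneously; so $f_{tr,c}^*$ is a best response, and any $f$ that fails to land in $\Lambda_{tr}^*$ on some reachable, positive-weight pair is strictly dominated by $f_{tr,c}^*$ and eliminated. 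What remains is $(\Lambda_{tr}^*,f_{tr,c}^*)$ — the only residual freedom being the tie-breaking rule in the $\arg\min$ and the choice of $f$ on pairs admitting no successful attack, both of which leave the payoff unchanged — so the game is dominance solvable with this profile as its only rationalizable equilibrium.

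The main obstacle I anticipate is making the \emph{strictness} of the two dominance claims fully rigorous. Weak dominance is immediate from monotonicity of $P_{fn}$ and from Lemma \ref{lemma.optimum_SD}, but invoking iterated elimination of \emph{strictly} dominated strategies requires, for each suboptimal D-region, an explicit attacker witnessing a strict payoff gap, and for each suboptimal attacker a witnessing pair $(y^n,t^N)$; this forces some care about which types are actually reachable given the per-letter distortion $D$ and about the supports of $P_X$ and $P_Y$. The remaining ingredients — admissibility of $\Lambda_{tr}^*$, monotonicity of the payoff in $\Lambda$, and optimality of the pointwise $h$-minimization — are routine once Lemmas \ref{lemma.property_Dsum} and \ref{lemma.optimum_SD} are in hand.
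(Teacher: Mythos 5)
Your proposal is correct and follows essentially the same route as the paper: Lemma \ref{lemma.optimum_SD} (part 2 giving $\Lambda_{tr}^* \subseteq \Lambda$ for every admissible $\Lambda$, hence dominance of $\Lambda_{tr}^*$ via monotonicity of $P_{fn}$ in the acceptance region) eliminates all other defender strategies, after which $f_{tr,c}^*$ is the pointwise-optimal response to the fixed region $\Lambda_{tr}^*$. The paper's own proof is just these two sentences; your elaboration of the strictness of the dominance and of the tie-breaking in the $\arg\min$ is more careful than the published argument but does not constitute a different approach.
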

\begin{proof}
Lemma \ref{lemma.optimum_SD} says that $\Lambda_{tr}^*$ is a strictly dominant strategy for D, thus permitting us to eliminate all the other strategies in $\SS_D$ (since they are strictly dominated by $\Lambda_{tr}^*$). The theorem, then, follows from the optimality of $f_{tr,c}^*$ when $\Lambda_{tr}^*$ is fixed.
\end{proof}

\subsection{Discussion}
\label{seq.NASHdiscuss}

As a first remark, we observe that $(\Lambda_{tr}^*, f_{tr,c}^*)$ is the unique Nash equilibrium of the game. In addition to the properties of Nash equilibria, however, $(\Lambda_{tr}^*, f_{tr,c}^*)$ has the desirable characteristic of being the only possible choice if the two players behave rationally. In fact, a rational defender will surely adopt the acceptance region $\Lambda_{tr}^*$, since any other choice will lead to a (asymptotically) higher $P_{fn}$, regardless of the choice made by A. On his side, a rational attacker, knowing that D will behave rationally, will adopt the strategy $f_{tr,c}^*$, since this is the strategy that optimizes his payoff when D plays $\Lambda_{tr}^*$ (for more details on the notion of rationalizable equilibrium we refer to  \cite{Bern84, Pea84}).

To get a better insight into the meaning of the equilibrium point of the $HT_{tr,c}^{lr}$ game, it is instructive to compare it with the equilibrium of the corresponding game with known sources, namely the $HT_{ks}^{lr}$ game. To start with, we observe that the use of the $h$ function instead of the divergence $\DD$ derives from the fact that D must ensure that the false positive probability stays below the desired threshold for all possible sources in $\CC$. To do so, he has to estimate the pmf that better {\em explains} the evidence provided by both $x^n$ and $t^N$. This is exactly the role of $P_{r^{n+N}}$ (see equation (\ref{eq.app2})), with the generalized log-likelihood ratio corresponding to 1 over $n$ the log of the (asymptotic) probability that a source with pmf equal to $P_{r^{n+N}}$ outputs the sequences
$x^n$ and $t^N$.

Another observation regards the optimum strategy of the attacker. As a matter of fact, the functions $h(P_{x^n}, P_{t^N})$ and $\DD(P_{x^n} || P_{t^N})$ share a similar behavior: both are positive and convex functions achieving the absolute minimum when $P_{x^n} = P_{t^N}$, so one may be tempted to think that from A's point of view minimizing $\DD(P_{x^n} || P_{t^N})$ is equivalent to minimizing $h(P_{x^n}, P_{t^N})$. While this is the case in some situations, e.g. when the absolute minimum can be reached, in general the two minimization problems yield different solutions.

To further compare the $HT_{tr,c}^{lr}$ and the $HT_{ks}^{lr}$ games, it is useful to rewrite the generalized likelihood function in a more convenient way. By applying some algebra, it is easy to prove the following equivalent expression for $h$:
\begin{equation}
    h(P_{x^n}, P_{t^N}) = \DD(P_{x^n} || P_{t^N}) - \frac{N+n}{n} \DD(P_{r^{n+N}} || P_{t^N}),
\label{eq.alternative_h}
\end{equation}
showing that $h(P_{x^n}, P_{t^N})  \le \DD(P_{x^n} || P_{t^N})$ with the equality holding only in the trivial case $P_{x^n} = P_{t^N}$. This suggests that, at least for large $n$, it should be easier for A to bring a sequence generated by $Y$ within $\Lambda_{tr}^*$ than to bring it within $\Lambda_{ks}^*$. This is indeed the case, as it will be shown in Section \ref{sec.ks_vs_tr}, where we will provide a rigorous proof that the $HT_{tr,c}^{lr}$ game is actually more
favorable to the attacker than the $HT_{ks}^{lr}$ game.

We conclude this section by investigating the behavior of the optimal acceptance strategy for different values of the ratio $\frac{N}{n}$. To do so we introduce the two quantities $c_x = \frac{n}{n+N}$ and $c_t = \frac{N}{n+N}$, representing the weights of the sequences $x^n$ and $t^N$ in $r^{n+N}$. It is easy to show, in fact, that
\begin{equation}
    P_{r^{n+N}} = c_x P_{x^n} + c_t P_{t^N}.
\label{eq.weight_Pr}
\end{equation}
In the simplest case $n$ and $N$ will tend to infinity with the same speed, hence we can assume that the ratio between $N$ and $n$ is fixed, namely, $\frac{N}{n} = c \ne 0$ (we obviously have $c_x = \frac{1}{1+c}$ and $c_t = \frac{c}{1+c}$). Under this assumption, the decision of D is dictated by equation (\ref{eq.optimum_SD}) and no particular behavior can be noticed. This is not the case when $N/n$ tends to 0 or $\infty$.

If $N/n \rarrow 0$, then $P_{r^{n+N}} \rarrow P_{x^n}$ and $h(P_{x^n} , P_{t^N}) \rarrow 0$. This means that the defender will always decide in favor of $H_0$. This makes sense since when the test sequence is infinitely longer than the training sequence, the evidence provided by the training sequence is not strong enough to let the defender reject hypothesis 0.

If $N/n \rarrow \infty$, the analysis is slightly more involved. In this case $c_t \rarrow 1$ and  $P_{r^{n+N}} \rarrow P_{t^N}$, hence the first term in equation (\ref{eq.h}) tends to $\DD(P_{x^n} || P_{t^N})$. To understand the behavior of the second term of (\ref{eq.h}) when $n \rarrow \infty$, we can use the Taylor expansion of $\DD(P||Q)$ when $P$ approaches $Q$ (see \cite{CSnow}, chapter 4), which applied to the second term of the $h$ function yields:
\begin{align}
 \frac{N}{n}  \cdot \DD(P_{t^N} || P_{r^{n+N}}) & \approx  \frac{N}{2n} \cdot \sum_{x} \frac{(P_{t^N}(x) - P_{r^{n+N}}(x))^2}{P_{r^{n+N}}(x)} \nonumber \\
 & =  \frac{N}{2n} \cdot \sum_{x} \frac{(c_x P_{t^N}(x) + c_x P_{x^n}(x))^2}{P_{r^{n+N}}(x)} \nonumber \\
    & =  \frac{\frac{n}{N}}{2(\frac{n}{N} + 1)^2} \sum_{x} \frac{( P_{t^N}(x) +  P_{x^n}(x))^2}{P_{r^{n+N}}(x)}. \nonumber \\
\label{eq.TaylorApprox}
\end{align}
When $N/n \rarrow \infty$, the above expression clearly tends to 0, and hence $h(P_{x^n} , P_{t^N}) \rarrow \DD(P_{x^n} || P_{t^N})$. In other words, the optimum acceptance region tends to be equal to the one obtained for the case of know sources with $P_X$ replaced by $P_{t^N}$. This is also an intuitively reasonable result: when the training sequence is much longer than the test sequence, the empirical pmf of the training sequence provides such a reliable estimate of $P_X$ that the defender can treat it as the true pmf.

One may wonder the reason behind the asymmetric behavior of the optimum decision strategy when the length of one between the two sequences under analysis grows much faster than the other. This apparent anomaly derives from the choice of analyzing the asymptotic behavior by letting $n$ tend to infinity, a choice that breaks the symmetry between the test and training sequences. If we had defined the false positive and false negative error exponents in terms of $N$, the situation would have been completely reversed.

In the following we will always assume that $N/n = c$, since from the above analysis this turns out to be most interesting case.

\section{Analysis of the payoff at the equilibrium}
\label{sec.payoff}

Now that we have derived the equilibrium point of the $HT_{tr,c}^{lr}$ game, we are ready to analyze the payoff at the equilibrium to understand who, between the defender and the attacker is going to {\em win} the game. Our aim is to derive a result similar to Theorem \ref{theo.payoff_KS_as}, so that given two pmf's $P_X$ and $P_Y$, a false positive error exponent $\lambda$ and a distortion constraint $D$, we can derive the best achievable (for the defender) false negative error exponent $\varepsilon_{tr,c}$. Specifically, we would like to know whether it is possible for $D$ to obtain a strictly positive value of $\varepsilon_{tr,c}$, thus ensuring that the false negative error probability tends to zero exponentially fast for increasing values of $n$.

In our proofs we will find it necessary to generalize the $h$ function so that it can be applied to general pmf's not necessarily belonging to $\PP_n$ or $\PP_N$. By remembering that $N/n = c$, we introduce the following definition:
\begin{equation}
    h_c(P,Q) = \DD(P || U) + c \DD(Q || U),
\label{eq.hc}
\end{equation}
with
\begin{equation}
    U = \frac{1}{1+c} P + \frac{c}{1+c} Q.
\label{eq.hc}
\end{equation}
Note that when $P \in \PP_n$ and $Q \in \PP_N$, the above definition is equivalent to (\ref{eq.h}). By using $h_c$ instead of $h$ we can generalize the expression of the optimum acceptance region $\Lambda_{tr}^*$ so to make it possible to apply it to any pair of pmf's $P$ and $Q$ (of course when $P$ and $Q$ are not empirical pmf's the meaning of $\Lambda_{tr}^*$ as acceptance region for $H_0$ is lost):
\begin{equation}
    \Lambda_{tr}^* = \left\lbrace  (P,Q) : h_c(P,Q) < \lambda \hspace{-0.03cm} - \hspace{-0.03cm} |\XX| \frac{\log(n+1)(N+1)}{n}  \right\rbrace.
\label{eq.optimum_SD_c}
\end{equation}
With these ideas in mind, let us introduce the set $\Gamma^n_{tr,c}$ containing all the pairs of sequences  $(y^n, t^N)$, for which A is able to bring $y^n$ within $\Lambda_{tr}^{*,n}$ (for sake of clarity we use the apex $n$ to explicitly indicate that $\Lambda_{tr}^{*,n}$ refers to pairs of sequences respectively of length $n$ and $N = cn$):
\begin{align}
\label{eq.Gamman_TR}
\Gamma^n_{tr,c} =  \{ & (y^n, t^N) : \exists z^n \text{ s.t. } \nonumber \\
& (z^n, t^N) \in \Lambda_{tr}^{*,n} \text{ and } d(y^n, z^n) \le nD \}.
\end{align}

By observing that $\Gamma^n_{tr,c}$ depends on $t^N$ only through $P_{t^N}$ and by reasoning as in the proof of Property 1 in \cite{BT13} (we need to assume that the distance measure $d$ is permutation-invariant), we can show that $\Gamma^n_{tr,c}$ is still a union of pairs of type classes, and hence we can redefine it as:
\begin{align}
\Gamma^n_{tr,c}  = \{ & (P_{y^n}, P_{t^N}) : \forall y^n \in T(P_{y^n}) \quad \exists z^n \text{ s.t. } \nonumber \\
 &  (P_{z^n}, P_{t^N}) \in \Lambda_{tr}^{*,n} \text{ and } d(y^n, z^n) \le nD\}.
\label{eq.Gamman_TR_types}
\end{align}
Note that, by adopting the generalized version of $\Lambda_{tr}^*$ in which $h_c$ is used instead of $h$, the above definition can also be applied when $P_{t^N} $ is replaced by a generic pmf $Q$ not necessarily belonging to $\PP_N$. We will also find it convenient to fix $Q$ and consider the set of types $P_{x^n}$ for which $(P_{x^n}, Q)$ belongs to $\Lambda^{*,n}_{tr}$ and $\Gamma^n_{tr,c}$, that is:
\begin{equation}
\Lambda^{*,n}_{tr} (Q)  = \{P_{x^n} : (P_{x^n}, Q) \in \Lambda^{*,n}_{tr} \},
\label{eq.Lambdan_TR_types_1dim}
\end{equation}
\begin{align}
\Gamma^n_{tr,c} (Q)  = \{ & P_{y^n} : \forall y^n \in T(P_{y^n})  \quad \exists z^n \text{ s.t. }  \nonumber \\
 &P_{z^n} \in \Lambda_{tr}^{*,n}(Q) \text{ and } d(y^n, z^n) \le nD\}.
\label{eq.Gamman_TR_types_1dim}
\end{align}
The derivation of the false negative error exponent at the equilibrium passes through the following asymptotic extension of $\Gamma^n_{tr,c} (Q)$:
\begin{equation}
    \Gamma_{tr,c}^{\infty} (Q) = cl \left( \bigcup_n \Gamma^n_{tr,c} (Q)  \right).
\label{eq.Gamman_TR_types_1dim_inf}
\end{equation}
The importance of the above definition is that for any source $P_X$, given the false positive error exponent $\lambda$ and the maximum allowed per-letter distortion $D$, the set $\Gamma_{tr,c}^{\infty} (P_X)$ corresponds to the indistinguishability region of the $HT_{tr,c}^{lr}$ game, i.e. the set of all the pmf's for which D does not succeed in distinguishing between $H_0$ and $H_1$ ensuring a false negative error probability that tends to zero exponentially fast. In other words, if $P_Y \in \Gamma_{tr,c}^{\infty} (P_X)$, no strictly positive false negative error exponent can be achieved by D. To prove that this is indeed the case, we need to prove the following theorem.
\begin{theorem}[Asymptotic payoff of the $HT_{tr,c}^{lr}$ game]
For the $HT_{tr,c}^{lr}$ game, with $N/n = c$ and assuming an additive distortion measure, the false negative error exponent at the equilibrium is given by
\begin{equation}
    \varepsilon_{tr,c} = \min_{Q} [ c \cdot \DD(Q || P_X) + \min_{P \in \Gamma_{tr,c}^{\infty}(Q) } \DD (P || P_Y)].
\label{eq.fnerr_exp_TRc}
\end{equation}
\label{theo.SanovTRc}
\end{theorem}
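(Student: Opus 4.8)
The plan is to compute the false negative error exponent at the equilibrium profile $(\Lambda_{tr}^*, f_{tr,c}^*)$ by a Sanov-type large deviations argument, carefully adapted to the two-sequence (test plus training) structure. Recall that $P_{fn}$ is the probability, under the product measure $P_Y(y^n)P_X(t^N)$, that the attacked pair $(f_{tr,c}^*(y^n,t^N), t^N)$ lands in the acceptance region $\Lambda_{tr}^{*,n}$. Since by definition of $f_{tr,c}^*$ the attacker succeeds in bringing $y^n$ inside $\Lambda_{tr}^{*,n}$ (paired with the given $t^N$) if and only if $(P_{y^n}, P_{t^N}) \in \Gamma^n_{tr,c}$, we have exactly
\begin{equation}
P_{fn} = \sum_{(P,Q): (P,Q)\in\Gamma^n_{tr,c}} P_Y(T(P))\, P_X(T(Q)).
\label{eq.Pfn_as_sum}
\end{equation}
First I would write each term using the standard type-counting bounds $P_Y(T(P)) \doteq 2^{-n\DD(P\|P_Y)}$ and $P_X(T(Q)) \doteq 2^{-N\DD(Q\|P_X)} = 2^{-nc\DD(Q\|P_X)}$, so that, because the number of pairs of types is polynomial in $n$, the sum is dominated (to first order in the exponent) by its largest term:
\begin{equation}
-\frac{1}{n}\log P_{fn} \;\longrightarrow\; \min_{(P,Q)\in\Gamma^{\infty}_{tr,c}} \big[\,\DD(P\|P_Y) + c\,\DD(Q\|P_X)\,\big],
\label{eq.exponent_joint_min}
\end{equation}
where $\Gamma^{\infty}_{tr,c}$ is the closure of the union over $n$ of the joint sets $\Gamma^n_{tr,c}$ of pairs $(P,Q)$.

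Next I would split the joint minimization over the pair $(P,Q)$ into an outer minimization over $Q$ and an inner minimization over $P$ for fixed $Q$. This is exactly where the sliced sets $\Lambda^{*,n}_{tr}(Q)$ and $\Gamma^n_{tr,c}(Q)$ defined in the excerpt come in: for fixed $Q$, the set of admissible $P$ is precisely $\Gamma^n_{tr,c}(Q)$, so
\begin{equation}
\min_{(P,Q)\in\Gamma^{\infty}_{tr,c}} \big[\,\DD(P\|P_Y) + c\,\DD(Q\|P_X)\,\big] = \min_{Q}\Big[\, c\,\DD(Q\|P_X) + \min_{P\in\Gamma^{\infty}_{tr,c}(Q)} \DD(P\|P_Y) \,\Big],
\label{eq.split_min}
\end{equation}
which is the claimed formula \eqref{eq.fnerr_exp_TRc}. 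The additive (permutation-invariant) distortion assumption is what guarantees, via the argument already invoked in the excerpt (Property~1 of \cite{BT13}), that $\Gamma^n_{tr,c}$ and its slices are unions of type-class products, so that the type-counting estimates apply cleanly and the passage from the finite-$n$ sets to their asymptotic closures is legitimate; one also uses $\lim_n \log(N(n)+1)/n = 0$ so the $|\XX|\frac{\log(n+1)(N+1)}{n}$ correction term in the definition of $\Lambda_{tr}^*$ vanishes in the limit and does not affect the exponent.

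The main obstacle — and the part that needs genuine care rather than routine estimation — is the interchange of limits and minima required to replace the finite-$n$ sets $\Gamma^n_{tr,c}(Q)$ by $\Gamma^{\infty}_{tr,c}(Q)$ inside the exponent, and more subtly to allow $Q$ to range over arbitrary pmf's rather than just types in $\PP_N$. For the achievability (upper bound on $\varepsilon_{tr,c}$) direction one must exhibit, for any target pair of pmf's, sequences of types converging to them that lie in $\Gamma^n_{tr,c}$ for all large $n$, using continuity of $h_c$ and of $\DD(\cdot\|\cdot)$ and the density of types; the generalized function $h_c$ and the generalized $\Lambda_{tr}^*$ of \eqref{eq.optimum_SD_c} are introduced precisely to make this continuity argument go through. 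For the converse (lower bound) one needs that any pair in $\Gamma^n_{tr,c}$ is close to $\Gamma^{\infty}_{tr,c}$, together with uniform control of the divergence estimates; this is where the closure operation and a compactness argument on the simplex of pmf's over $\XX$ are essential. I would handle these convergence issues by a standard $\delta$-neighborhood argument: show $\Gamma^{\infty}_{tr,c}(Q)$ contains an open neighborhood of every interior point of each $\Gamma^n_{tr,c}(Q)$ and is contained in an arbitrarily small enlargement of $\bigcup_n \Gamma^n_{tr,c}(Q)$, then let $\delta\to 0$ after taking $n\to\infty$. The technical bookkeeping of these limits, rather than any single clever step, is the crux of the proof, and I would expect it to be relegated largely to an appendix.
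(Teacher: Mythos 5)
Your proposal is correct and follows essentially the same route as the paper: write $P_{fn}$ as a sum over pairs of types weighted by type-class probabilities, extract the dominant exponent via the polynomial bound on the number of types (upper bound on $P_{fn}$), and establish the matching lower bound by exhibiting sequences of types $Q_N\to Q^*$ and $P_n\in\Gamma^n_{tr,c}(Q_N)$ with $P_n\to P^*$, using density of types and continuity of $h_c$ and $\DD$. The ``interchange of limits and minima'' you flag as the crux is precisely the content of the paper's Appendix~\ref{app.lemmaTondi} (Lemma~\ref{Lemma_density_Gamma_n_in_var}), so your plan matches the paper's proof in both structure and the location of the real technical work.
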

\begin{proof}
By using the definitions given in this section, the false negative error probability at the equilibrium, for  a given $n$, can be written:
\begin{eqnarray}
    P_{fn} & = & \sum_{t^N} P_{X}(t^N) \hspace{-0.3cm}\sum_{y^{n}\in
    \Gamma_{tr,c}^n(P_{t^N})} P_{Y} (y^{n})\nonumber\\
    & = & \sum_{Q \in \PP_N} P_{X}(T(Q)) \sum_{P \in
    \Gamma_{tr,c}^n(Q)} P_{Y} (T(P)).
\label{eq.Theo4_Pfn}
\end{eqnarray}
We start by deriving an upper-bound of the false negative error probability. By exploiting some well-known bounds on the probability of a type class and the number of types in $\PP_n$ \cite{CandT}, we can write:
\begin{eqnarray}
   P_{fn} & \leq  & \sum_{Q \in \PP_N} P_X(T(Q)) \sum_{P \in \Gamma_{tr,c}^n
   (Q)} 2^{- n \DD(P || P_Y)} \nonumber \\
   & \leq &  \sum_{Q \in \PP_N} P_X (T(Q)) (n +  1)^{|\mathcal X|} 2^{- n
   \min\limits_{P \in \Gamma_{tr,c}^n (Q)} \DD(P || P_Y)}
   \nonumber \\
   & \le & \sum_{Q \in \PP_N} P_X (T(Q)) (n + 1)^{|\mathcal X|} 2^{-
   n \min\limits_{P \in \Gamma_{tr,c}^{\infty}(Q)} \DD(P || P_Y)}\nonumber\\
   & \leq & (n + 1)^{|\mathcal X|} (N + 1)^{|\mathcal X|} \nonumber \\
   & & \cdot 2^{- n \min\limits_{
   Q \in \PP_N } [\frac{N}{n} \DD(Q || P_X) + \min\limits_{P \in \Gamma_{tr,c}^{\infty}(Q)} \DD( P ||
   P_Y)]}\nonumber\\
   & \leq & (n + 1)^{|\mathcal X|} (N + 1)^{|\mathcal X|} \nonumber \\
   & & \cdot 2^{- n \min\limits_{
   Q \in \CC} [c \DD(Q || P_X) + \min\limits_{P \in \Gamma_{tr,c}^{\infty}(Q)} \DD( P ||
   P_Y)]},
   \label{eq.low_bound_P_fn1}
\end{eqnarray}
where the last inequality is obtained by minimizing over all $Q$ without requiring that $Q \in \PP_N$. By taking the log and dividing by $n$ we find:
\begin{eqnarray}
- \frac{\log P_{fn}}{n}  \ge \min\limits_{Q \in \CC} \big[c D( Q || P_X) + \min\limits_{P
\in \Gamma_{tr,c}^{\infty} (Q)} D( P || P_Y)\big] + \alpha_n,
 \label{eq.low_bound_P_fn2}
\end{eqnarray}
with $\alpha_n = |\XX| \frac{\log(n+1)(N+1)}{n}$ tending to 0 when $n$ tends to infinity.

We now turn to the analysis of a lower bound for $P_{fn}$. Let  $Q^*$ be the pmf achieving the minimum in (\ref{eq.fnerr_exp_TRc}). Due to the density of rational numbers within real numbers, we can find a sequence of pmf's $Q_n \in \PP_n$ that tends to $Q^*$ when $n$ tends to infinity. By remembering that $N = nc$, the subsequence $Q_N = Q_{nc}$ will also tend to $Q^*$ when $n$ (and hence $N$) tends to infinity\footnote{In order to simplify the analysis, we assume that $c$ is a non-null integer value, the extension of the proof to non-integer values of $c$ is tedious but straightforward.}. Let us now consider the following sequence of inequalities:
\begin{align}
\label{eq.up_bound_P_fn}
    P_{fn} & \stackrel{(a)}\ge \sum_{Q \in \PP_N} P_X(T(Q)) \sum_{P \in \Gamma_{tr,c}^n (Q)} \frac{2^{-n \DD(P || P_Y)}}{(n+1)^{|\XX|}} \\ \nonumber
    & {\ge} \sum_{Q \in \PP_N} P_X(T(Q)) \frac{2^{-n \min\limits_{P \in \Gamma_{tr,c}^n(Q)} \DD(P || P_Y) }}{(n+1)^{| \XX |}} \\ \nonumber
    & \stackrel{(b)}{\ge} \sum_{Q \in \PP_N} \frac{2^{-N\DD(Q || P_X)}}{(N+1)^{|\XX|}} \frac{2^{-n \min\limits_{P \in \Gamma_{tr,c}^n(Q)} \DD(P || P_Y) }}{(n+1)^{| \XX |}} \\ \nonumber
    & = \sum_{Q \in \PP_N} \frac{2^{-n [c \DD(Q || P_X) + \min\limits_{P \in \Gamma_{tr,c}^n(Q)} \DD(P || P_Y)]}}{(N+1)^{|\XX|}(n+1)^{|\XX|}} \\ \nonumber
    & \stackrel{(c)}\ge \frac{2^{-n [c \DD(Q_N || P_X) + \min\limits_{P \in \Gamma_{tr,c}^n(Q_N)} \DD(P || P_Y)]}}{(N+1)^{|\XX|}(n+1)^{|\XX|}},
\end{align}
where inequalities (a) and (b) derive from a known lower bounds on the probability of a type class \cite{CandT}, and in (c) we have replaced the sum with a single element of the subsequence $Q_N$ defined previously. By taking the log and dividing by $n$, we obtain
\begin{equation}
    -\frac{\log P_{fn}}{n} \le c \DD(Q_N || P_X) + \min\limits_{P \in \Gamma_{tr,c}^n(Q_N)} \DD(P || P_Y) + \beta_n,
\label{eq.up_bound_P_fn_2}
\end{equation}
where $\beta_n = |\XX| \frac{\log(n+1)(N+1)}{n}$ tends to 0 when $n$ tends to infinity.
To continue, let $P^*$ be defined as follows
\begin{equation}
    P^* = \arg\min\limits_{P \in \Gamma_{tr,c}^{\infty}(Q^*)} \DD(P || P_Y).
\label{eq.Pstar}
\end{equation}
In Appendix \ref{app.lemmaTondi}, we show that it is possible to find a sequence $P_n$, where each $P_n$ belongs to ${\Gamma_{tr,c}^n} (Q_N)$, that tends to $P^*$ when $n$ tends to infinity. By starting from equation (\ref{eq.up_bound_P_fn_2}) and by exploiting the continuity of the divergence function, for $n$ large enough we can write
\begin{align}
\label{eq.up_bound_P_fn_3}
    -\frac{\log P_{fn}}{n} & \le c \DD(Q^* || P_X) + \beta'_n + \DD(P_n || P_Y) + \beta_n, \\ \nonumber
    & \le c \DD(Q^* || P_X) + \beta'_n + \DD(P^* || P_Y) + \beta''_n + \beta_n,
\end{align}
where all the sequences $\beta_n$, $\beta'_n$ and $\beta''_n$ tend to zero when $n$ tends to infinity.

By coupling equations  (\ref{eq.low_bound_P_fn2}) and (\ref{eq.up_bound_P_fn_3}) and by letting $n \rarrow \infty$, we eventually obtain:
\begin{equation}
    -\lim_{n \rarrow \infty} \frac{\log P_{fn}}{n} = \min_{Q} [ c \cdot \DD(Q || P_X) + \min_{P \in \Gamma_{tr,c}^{\infty}(Q) } \DD (P || P_Y)],
\label{eq.fn_err_exp}
\end{equation}
thus proving the theorem.
\end{proof}

According to Theorem \ref{theo.SanovTRc}, we can distinguish two cases depending on the relationship between $P_X$ and $P_Y$.
 \begin{enumerate}\label{e.e.cases_tr}
    \item $P_Y \in \Gamma_{tr,c}^{\infty} (P_X)$ \quad then \quad $\varepsilon_{tr,c} = 0$;
    \item $P_Y \notin \Gamma_{tr,c}^{\infty} (P_X)$ \quad then \quad $\varepsilon_{tr,c} > 0$.
\end{enumerate}
In the former case, which is obtained by letting $Q^* = P_X$, it is not possible for D to obtain a strictly positive false negative error exponent while ensuring that the false positive error exponent is at least equal to $\lambda$. In the latter case, it is not possible that the two divergences in (\ref{eq.fnerr_exp_TRc}) are simultaneously equal to zero, hence $P_{fn}$ tends to 0 exponentially fast. In other words, given $\lambda$ and $D$, the condition $P_Y \notin \Gamma_{tr,c}^{\infty}(P_X)$ ensures that the {\em distance} between $P_Y$ and $P_X$ is large enough to allow a reliable distinction between sequences drawn from $P_X$ and sequences drawn from $P_Y$ despite the presence of the adversary. As anticipated, then, $\Gamma_{tr,c}^{\infty}(P_X)$ is the indistinguishability region of the $HT_{tr,c}^{lr}$ game.

\subsection{Comparison between the $HT_{ks}^{lr}$ and $HT_{tr,c}^{lr}$ games}
\label{sec.ks_vs_tr}

In this section we compare the asymptotic performance achievable by D for the $HT_{ks}^{lr}$ and $HT_{tr,c}^{lr}$ games. We start the analysis by comparing the indistinguishability regions of the two games, namely $\Gamma_{ks}^{\infty}(P_X)$ and $\Gamma_{tr,c}^{\infty}(P_X)$ (where, as opposed to Section \ref{sec.KS}, we now explicitly indicate the dependence of $\Gamma_{ks}^{\infty}$ on $P_X$).

The comparison between the two regions relies on the comparison between the divergence and the generalized likelihood function. In particular, the starting point of our analysis is the following lemma.

\begin{lemma}[Relationship between $h_c$ and $\DD$]
\label{lemma.alternative_vc}
Let $N/n = c$, with $c \ne 0$, $c \ne \infty$, for any $P \ne P_X$ we have,
\begin{equation}
h_c(P,P_X) < \DD(P||P_X).
\label{eq.h_vs_D}
\end{equation}
\end{lemma}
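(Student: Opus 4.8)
The plan is to exploit an alternative representation of the generalized likelihood function, analogous to (\ref{eq.alternative_h}). First I would establish the identity
\begin{equation}
h_c(P,Q) = \DD(P||Q) - (1+c)\,\DD(U||Q),
\label{eq.hc_alt}
\end{equation}
valid for every pair of pmf's $P,Q$ on $\XX$, where $U$ is defined as in (\ref{eq.hc}). This is just the ``continuous'' counterpart of (\ref{eq.alternative_h}), with $\frac{N+n}{n}$ replaced by $1+c$, and it follows from a short direct computation: writing $\DD(P||Q) - h_c(P,Q) = \DD(P||Q) - \DD(P||U) - c\,\DD(Q||U) = \sum_{a\in\XX}\bigl(P(a)+cQ(a)\bigr)\log\frac{U(a)}{Q(a)}$ and inserting the defining relation $P(a)+cQ(a) = (1+c)U(a)$ collapses the right-hand side to $(1+c)\sum_{a\in\XX}U(a)\log\frac{U(a)}{Q(a)} = (1+c)\,\DD(U||Q)$.

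Next I would specialize (\ref{eq.hc_alt}) to $Q = P_X$, obtaining $h_c(P,P_X) = \DD(P||P_X) - (1+c)\,\DD(U||P_X)$ with $U = \frac{1}{1+c}P + \frac{c}{1+c}P_X$. Since $c$ is the positive, finite limiting ratio $N/n$, the factor $1+c$ is a finite constant greater than one, so it suffices to show $\DD(U||P_X) > 0$. By non-negativity of the KL divergence, $\DD(U||P_X) \ge 0$ with equality if and only if $U = P_X$; but $U = P_X$ means $\frac{1}{1+c}P + \frac{c}{1+c}P_X = P_X$, i.e. $\frac{1}{1+c}(P - P_X) = 0$, i.e. $P = P_X$. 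Since $P \ne P_X$ by hypothesis, $\DD(U||P_X) > 0$, and therefore $h_c(P,P_X) < \DD(P||P_X)$.

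One minor point I would dispatch separately is the degenerate case $\DD(P||P_X) = +\infty$ (i.e. $P$ not absolutely continuous with respect to $P_X$). There the claimed strict inequality is trivial, because $h_c(P,P_X)$ is always finite: $U$ dominates both $P$ and $P_X$ (its support is their union), so $\DD(P||U) \le \log(1+c)$ and $\DD(P_X||U) \le \log\frac{1+c}{c}$, whence $h_c(P,P_X) \le \log(1+c) + c\log\frac{1+c}{c} < \infty$. Hence in the main argument one may assume $\DD(P||P_X) < \infty$ without loss of generality.

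I do not anticipate any real obstacle: once identity (\ref{eq.hc_alt}) is in place the statement is essentially one line, and the only things to watch are that $c\in(0,\infty)$, so that $1+c$ is a genuine finite positive constant (this is exactly where the hypotheses $c\ne 0$, $c\ne\infty$ are used), and the absolute-continuity caveat above. The argument also makes transparent the intuition voiced after (\ref{eq.alternative_h}): the gap $\DD(P||P_X) - h_c(P,P_X)$ equals $(1+c)\,\DD(U||P_X)$, which increases with $c$, consistent with the fact that a longer training sequence pushes the limited-resources acceptance region towards the known-source one.
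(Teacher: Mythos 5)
Your proof is correct and takes essentially the same route as the paper: both rest on the identity $h_c(P,P_X) = \DD(P||P_X) - (1+c)\,\DD(U||P_X)$ (the paper's equation (\ref{eq.alternative_vc})) together with the observation that $U = P_X$ if and only if $P = P_X$, so that $\DD(U||P_X)>0$ for $P \ne P_X$. The only differences are that you spell out the algebra behind the identity and handle the $\DD(P||P_X)=\infty$ case explicitly, both of which the paper leaves implicit.
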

\begin{proof}
By rewriting $h_c(P, P_X)$ as in equation (\ref{eq.alternative_h}), we have:
\begin{equation}
    h_c(P, P_X) = \DD(P||P_X) - (1+c) \DD(U||P_X)
\label{eq.alternative_vc}
\end{equation}
with $U = P/(1+c) + cP_X/(1+c)$, which is equal to $P_X$ if and only if $P = P_X$, when we have $\DD(U||P_X) = 0$ thus yielding $h_c(P, P_X) = \DD(P||P_X) = 0$.
\end{proof}
In the subsequent proofs we will refer to the way the mapping function $f$ operates on $y^n$ to produce $z^n$\footnote{To keep the notation as light as possible we will not distinguish between mapping functions used for the known source case and those applying to hypothesis testing with training sequences, even if, rigorously speaking, these are quite different functions since the latter also depend on the training sequence $t^N$.}. Specifically, we will indicate with $n_f(i \rarrow j)$ the number of times that $f$ transforms the $i$-th symbol of $\XX$ into the $j$-th one.
The main result of our analysis is stated in the following theorem.
\begin{theorem}[$HT_{tr,c}^{lr}$ vs $HT_{ks}^{lr}$]
For any finite, non-null value of $c$, any $P_X$, $\lambda > 0$ and $D$ we have
\begin{equation}
    \Gamma_{ks}^{\infty}(P_X) \subset \Gamma_{tr,c}^{\infty}(P_X).
\label{eq.inclusion}
\end{equation}
\label{theo.inclusion}
\end{theorem}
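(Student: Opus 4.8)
The plan is to show the inclusion $\Gamma_{ks}^{\infty}(P_X) \subseteq \Gamma_{tr,c}^{\infty}(P_X)$ first, and then separately establish that the inclusion is strict. For the inclusion, I would work at the level of the finite-$n$ regions and argue that, for every $n$, if a type $P_{y^n}$ is such that $y^n$ can be attacked into $\Lambda_{ks}^{*}$, then $(P_{y^n}, P_X)$ can be attacked into $\Lambda_{tr,c}^{*,n}$ (here we evaluate $\Gamma_{tr,c}^\infty$ at $Q=P_X$, which is a legitimate pmf, and use the generalized definition of $\Lambda_{tr}^*$ via $h_c$). The key comparison is Lemma \ref{lemma.alternative_vc}: since $h_c(P, P_X) < \DD(P || P_X)$ for all $P \ne P_X$, any type class $T(P)$ contained in the known-source acceptance region $\{\DD(P||P_X) < \lambda - |\XX|\frac{\log(n+1)}{n}\}$ is, modulo the vanishing polynomial corrections, also contained in $\{h_c(P, P_X) < \lambda - |\XX|\frac{\log(n+1)(N+1)}{n}\}$, because $h_c$ is strictly smaller than $\DD$. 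One has to handle the slightly different sub-linear correction terms, but both corrections vanish as $n \to \infty$ by \eqref{eq.Nvsn}, so a type that is interior to the $\Gamma_{ks}$ region for all large $n$ survives into the $\Gamma_{tr,c}$ region; taking unions over $n$ and closures then yields $\Gamma_{ks}^{\infty}(P_X) \subseteq \Gamma_{tr,c}^{\infty}(P_X)$. The same attacking function $f$ (which for the known-source game moves $P_{y^n}$ toward $P_X$) works verbatim, since the distortion constraint $d(y^n,z^n) \le nD$ is identical in both games and the resulting $P_{z^n}$ now satisfies the weaker $h_c$ condition a fortiori.

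For strictness, I would exhibit a pmf $P_Y \in \Gamma_{tr,c}^{\infty}(P_X) \setminus \Gamma_{ks}^{\infty}(P_X)$. The natural candidate is a source lying just outside the boundary of $\Gamma_{ks}^{\infty}(P_X)$: by definition of $\Gamma_{ks}^{\infty}$ there is some $P_Y$ (or a type arbitrarily close to the boundary) for which the attacker, starting from $T(P_Y)$, can reach a type $P_{z^n}$ with $\DD(P_{z^n}||P_X)$ equal to the threshold $\lambda$ but cannot do strictly better — i.e. $P_Y$ is on (or just outside) the boundary of the known-source region. Because $h_c(P_{z^n}, P_X) < \DD(P_{z^n}||P_X)$ strictly whenever $P_{z^n} \ne P_X$, that same attacked type $P_{z^n}$ satisfies $h_c(P_{z^n}, P_X) < \lambda$ with strict slack, so there is room to push $P_Y$ a bit further out — a whole neighborhood of boundary points of $\Gamma_{ks}^{\infty}$ lies in the interior of $\Gamma_{tr,c}^{\infty}$. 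Making this rigorous, I would take a boundary point $P^{\partial} \in \partial\Gamma_{ks}^{\infty}(P_X)$, note it belongs to the closed set $\Gamma_{ks}^{\infty}$, and show via the strict inequality of Lemma \ref{lemma.alternative_vc} plus continuity of $h_c$ that $P^{\partial}$ is an interior point of $\Gamma_{tr,c}^{\infty}(P_X)$; hence a small ball around $P^{\partial}$ containing points outside $\Gamma_{ks}^{\infty}$ still lies in $\Gamma_{tr,c}^{\infty}$, which gives the strict inclusion.

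The main obstacle I anticipate is the careful bookkeeping connecting the finite-$n$ regions to their asymptotic closures: the sets $\Gamma_{ks}^n$ and $\Gamma_{tr,c}^n(Q)$ are defined through an "$\forall y^n \in T(P_{y^n})$, $\exists z^n$" quantifier tied to a specific distortion budget and to thresholds that drift with $n$, so I must ensure that a type which is strictly interior to the known-source region for all large $n$ genuinely lands in $\bigcup_n \Gamma_{tr,c}^n(P_X)$ rather than merely in its closure, and conversely that boundary behavior is controlled. I would handle this by fixing an interior point with a definite margin $\delta > 0$ from the $\Gamma_{ks}$ threshold, observing that Lemma \ref{lemma.alternative_vc} upgrades this to a margin bounded below on compact subsets away from $P_X$, and then absorbing the $|\XX|\frac{\log(n+1)(N+1)}{n}$ corrections into that margin for $n$ large — the reachability of the attacked type is then preserved because the same $z^n$ used in the known-source case already meets the (strictly looser) $h_c$ constraint. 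The degenerate case $P = P_X$ is trivial since then both regions contain $P_X$ and all divergences vanish.
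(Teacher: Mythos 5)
Your proposal is correct and rests on the same two pillars as the paper's proof: Lemma \ref{lemma.alternative_vc} ($h_c(P,P_X)<\DD(P\|P_X)$ for $P\neq P_X$) to transfer the known-source attack into the training-data acceptance region, plus a continuity/perturbation argument on type-based mappings. The execution differs in two places. For the weak inclusion, the paper does not absorb the mismatched correction terms into an interior margin as you do; it passes from blocklength $n$ to a multiple $m=kn$, where $P_{z^n}\in\PP_m$ automatically and the training-game threshold $\lambda-|\XX|\log((m+1)(cm+1))/m$ eventually exceeds the known-source threshold at blocklength $n$, so the scaled attack $n_{v^m}(i\to j)=k\,n_{f^n}(i\to j)$ works with no margin bookkeeping at all; your margin-plus-closure route also works but needs the observation that the slack $\DD-h_c=(1+c)\DD(U\|P_X)$ is uniformly bounded below away from $P_X$ (which you supply). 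For strictness, you argue from the boundary of $\Gamma_{ks}^{\infty}(P_X)$ inward -- showing each such boundary point is interior to $\Gamma_{tr,c}^{\infty}(P_X)$ -- whereas the paper argues from the boundary of $\Gamma_{tr,c}^{\infty}(P_X)$ outward, squeezing the optimally attacked type between an inner sequence $P_n^i\in\Gamma_{tr,c}^n(P_X)$ and an outer sequence $P_n^o$ to force $\DD(P_{z^n}\|P_X)>\lambda$. These are mirror images (both directions appear as Corollary \ref{Corollary1_strict_inclusion}), but your direction is slightly cleaner on one point: since the best attack from a $\Gamma_{ks}^{\infty}$-boundary type attains divergence near $\lambda>0$, the attacked type is automatically bounded away from $P_X$, so you get the strict gap $\tau$ for free, whereas the paper must separately assume $P_z^*\neq P_X$ via a footnote. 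The one step you should not leave implicit is the perturbation argument that types close to an attackable type are themselves attackable with the same per-letter distortion (the $n_f(i\to j)$ rescaling of Appendix A of \cite{BT13}); your interiority claim for the ball $B(P^{\partial},\varepsilon)$ depends on it.
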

\begin{proof}
We will prove the theorem by first showing that $\Gamma_{ks}^{\infty}(P_X) \subseteq \Gamma_{tr,c}^{\infty}(P_X)$, and then finding at least one point (actually an infinite set of points) that belongs to $\Gamma_{tr,c}^{\infty}(P_X)$ but does not stay in $\Gamma_{ks}^{\infty}(P_X)$.

Let $y^n$ be a sequence such that $P_{y^n} \in \Gamma_{ks}^n(P_X)$, this means that a mapping $f^n$ exists that transforms $y^n$ into a sequence $z^n$ belonging to $\Lambda_{ks}^{*,n}(P_X)$, while satisfying the distortion constraint.
Let now $m$ be a multiple of $n$ ($m = kn$). For $k$ large enough we have
\begin{equation}
\label{relation threshold}
    \lambda - |\XX| \frac{\log(n + 1)}{n} < \lambda - |\XX| \frac{\log(m + 1)(cm + 1)}{m}.
\end{equation}
Since $P_{z^n} \in \Lambda_{ks}^{*,n}(P_X)$, Lemma \ref{lemma.alternative_vc} permits us to write:
\begin{eqnarray}
\label{eq.Lambda_weakinclusion}
    h_c(P_{z^n},P_X) & \le & \DD(P_{z^n} || P_X) \\ \nonumber
    & < & \lambda - |\XX| \frac{\log (n+1)}{n} \\ \nonumber
    & < & \lambda - |\XX| \frac{\log (m+1)(cm+1)}{m}.
\end{eqnarray}

Given that $m$ is a multiple of $n$, any $P \in \PP_n$ also belongs to $\PP_m$, permitting us to conclude that $P_{z^n} \in \Lambda_{tr}^{*,m}(P_X)$. Let now $y^m$ be an $m$-long sequence having the same type of $y^n$ (this is surely possible since $m$ is a multiple of $n$). If we apply to $y^m$ a mapping function $v^m$ for which $n_{v^m}(i \rarrow j) = k n_{f^n}(i \rarrow j)$, the sequence $z^m = v^m(y^m)$ will have the same type of $z^n$, and hence by virtue of equation (\ref{eq.Lambda_weakinclusion}) $P_{z^m} \in \Lambda_{tr}^{*,m}(P_X)$. In addition, the mapping $v^m$ introduces the same per-letter distortion of $f^n$ for any additive distortion measure, permitting us to conclude that $P_{y^m} \in \Gamma_{tr,c}^m(P_X)$. In summary, we have shown that for any $P \in \Gamma_{ks}^n(P_X)$ an $m = kn$ exists such that $P \in \Gamma_{tr,c}^m(P_X)$, and hence:
\begin{eqnarray}
\label{eq.weak_inclusion_Gamma}
    \Gamma_{ks}^{\infty}(P_X) & = & cl \left( \bigcup_n \Gamma^n_{ks} (P_X)  \right) \\ \nonumber
    & \subseteq & cl \left( \bigcup_m \Gamma^m_{tr,c} (P_X)  \right) =  \Gamma_{tr,c}^{\infty}(P_X).
\end{eqnarray}

We now prove that there is at least one point that belongs to $\Gamma_{tr,c}^{\infty}(P_X)$ but does not belong to $\Gamma_{ks}^{\infty}(P_X)$ (actually there is an infinite number of such points). To do so let us consider a point $P^*$ belonging to the boundary of $\Gamma_{tr,c}^{\infty}(P_X)$. Since by definition $\Gamma_{tr,c}^{\infty}(P_X)$ is a closed set, $P^*$ will also belong to it. Due to Lemma \ref{Lemma_density_Gamma_n_in_var} (Appendix \ref{app.lemmaTondi}), we can find a sequence of types $P_n^i \in \Gamma_{tr,c}^n(P_X)$ that tends to $P^*$ from the inside of $\Gamma_{tr,c}^{\infty}(P_X)$. On the other hand, since $P^*$ lies on the boundary of the closed set $\Gamma_{tr,c}^{\infty}(P_X)$, any ball centered in $P^*$ will contain an infinite number of points that do not belong to $\Gamma_{tr,c}^{\infty}(P_X)$. Due to the density of rational numbers in real numbers, it is possible to define an outer sequence of types $P_n^o$ tending to $P^*$, for which $P_n^o \in \PP_n$ and for which no mapping function exists that when applied to the sequences in the type class of $P_n^o$ moves them into $\Lambda_{tr}^{*,n}(P_X)$ with a per-letter distortion equal or lower than $D$. In other words, for any sequence $y^n \in P_n^o$ and any (distortion-limited) mapping $f(y^n) = w^n$ we have
\begin{equation}
    h_c(P_{w^n},P_X) \ge \lambda - |\XX| \frac{\log(n+1)(cn+1)}{n}.
\label{eq.out_of_gamma}
\end{equation}
Let $f^{n}$ be the sequence of optimum mapping functions that applied to the sequences in $T(P_n^i)$ results in a sequence $z^n$ for which $h_c(P_{z^n},P_X) < \lambda - |\XX| [\log(n+1)(cn+1)]/n$.

Given that $P_{z^n}$ is obtained by transforming sequences belonging to a sequence of types tending to the limit type $P^*$, by continuity we can say that the sequence of types $P_{z^n}$ will also converge to a type, say $P_z^*$. Let us assume now that $P_z^* \ne P_X$\footnote{It is always possible to find a point $P^*$ for which this is true, unless the optimal acceptance region asymptotically reduces to the single point $P_X$. This would be the case if we let $\lambda \rarrow 0$, a situation that is not considered in the present analysis.}. Of course we also have $P_{z^n} \ne P_X$ (at least for large $n$). Then, by Lemma \ref{lemma.alternative_vc}, and due to the continuity of the $h_c$ function, we have
\begin{align}
\label{eq.tau}
    & \DD(P_{z^n}||P_X) - h_c(P_{z^n},P_X)  = \tau_n, \nonumber \\
    & \DD(P^*_{z}||P_X) - h_c(P^*_{z},P_X)  = \tau, \nonumber \\
    & \lim_{n \rarrow \infty} \tau_n  = \tau,
\end{align}
where $\tau$ is strictly larger than 0.

Let us now consider the outer sequence $P_n^o$. Since both $P_n^i$ and $P_n^o$ tend to $P^*$, for $n$ large enough, $P_n^i$ and $P_n^o$ will become arbitrarily close. By continuity, if we apply the same\footnote{Rigorously speaking it is possible that $f^n$ can not be applied {\em as is} to the sequences in $T(P_n^o)$, since we have to ensure that, for each $i$, $n_{f^n}(i \rarrow j)$ is not larger than the number of symbols $i$ in the to-be-mapped sequence. However, given the closeness of the sequences in $P_n^i$ and $P_n^o$ the modifications we need to introduce within $f^n$ are minor and our arguments still work. Readers may refer to Appendix A in \cite{BT13} for a detailed proof of how the closeness of types can be exploited to ensure that the types of the remapped sequences are also close to each other.} $f^{n}$ to a sequence in $T(P_n^o)$, we will obtain a sequence $w^n$ whose type is arbitrarily close to $P_{z^n}$, let us indicate it by $P_{w^n}$ (we can also say that $P_{w^n} \rarrow P_z^*$).
Due to the continuity of the $h_c$ function, $h_c(P_{z^n},P_X)$ and $h_c(P_{w^n},P_X)$ will also be arbitrarily close; $h_c(P_{w^n},P_X)$, though, will be larger than or equal to $\lambda - |\XX| (\log(n+1)(cn+1))/n$, since $w^n$ has been obtained by starting from a point belonging to $P_n^o$. This, in turn, means that $h_c(P_{z^n},P_X)$ will be arbitrarily close to $\lambda - |\XX| (\log(n+1)(cn+1))/n$ (though lower than that). In other words, for any $\delta >0$, when $n$ is large enough, we have
\begin{equation}
    \left| \left( \lambda - |\XX| \frac{\log(n+1)(cn+1)}{n}\right) - h_c(P_{z^n},P_X) \right| \le \delta.
\label{eq.closeness}
\end{equation}
If $n$ is large enough, then, for all the sequences in $T(P_n^i)$ we have:
\begin{eqnarray}
\label{eq.strong_inclusion_prequel}
    \DD(P_{z^n}||P_X) & = & h_c(P_{z^n}, P_X) + \tau_n \\ \nonumber
    & \ge & \lambda - |\XX| \frac{\log(n+1)(cn+1)}{n} - \delta + \tau_n \\ \nonumber
    & > & \lambda,
\end{eqnarray}
where for the last inequality we have exploited the fact that for large $n$, $\delta$ can be made arbitrarily small, $\log(n)/n$ tends to 0 and $\tau_n$ tends to $\tau > 0$.
As a result, for large $n$, $P_n^i \notin \Gamma_{ks}^{m}(P_X)$, for any $m$ and hence:
\begin{equation}
\label{eq.strong_inclusion}
   P_n^i \notin \bigcup_n \Gamma_{ks}^n(P_X).
\end{equation}
On the other hand, $P_n^i$ can not belong to the closure of $\cup_n \Gamma_{ks}^n(P_X)$ (and hence to $\Gamma_{ks}^{\infty}(P_X)$), since in this case we could find a new sequence of types arbitrarily close to $P_n^i$ which could be moved within $\Lambda_{ks}^{*,n}$. By continuity, then, $P_n^i$ could also be moved within $\Lambda_{ks}^{*,n}$ for some $n$ thus contradicting (\ref{eq.strong_inclusion_prequel}). It goes without saying that, a fortiori, no point $P^*$ on the boundary of $\Gamma_{tr,c}^{\infty}(P_X)$ belongs to $\Gamma_{ks}^{\infty}(P_X)$.
\end{proof}
Theorem \ref{theo.inclusion} has two simple corollaries.
\begin{corollary}
\label{Corollary1_strict_inclusion}
For any pmf $P$ belonging to the boundary of $\Gamma_{ks}^{\infty}(P_X)$ there exists a positive value $\varepsilon$ such that $B(P,\varepsilon) \subset \Gamma_{tr,c}^{\infty}(P_X)$, where $B(P,\varepsilon)$ is a ball centered in $P$ with radius $\varepsilon$. In the same way, for any pmf $P$ belonging to the boundary of $\Gamma_{tr,c}^{\infty}(P_X)$ there exists a positive value $\varepsilon$ such that $B(P,\varepsilon) \cap \Gamma_{ks}^{\infty}(P_X) = \emptyset$.
\end{corollary}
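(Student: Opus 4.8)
The plan is to extract both assertions directly from the two halves of the proof of Theorem \ref{theo.inclusion}, without redoing the work. For the first assertion, let $P$ lie on the boundary of $\Gamma_{ks}^{\infty}(P_X)$. The key observation established inside the proof of Theorem \ref{theo.inclusion} is quantitative, not merely set-theoretic: for any type $P_{z^n} \neq P_X$ reachable (under the distortion constraint) from a sequence of type tending to a limit type inside $\Gamma_{tr,c}^{\infty}(P_X)$, one has $\DD(P_{z^n}||P_X) = h_c(P_{z^n},P_X) + \tau_n$ with $\tau_n \to \tau > 0$ (equation (\ref{eq.tau})). Equivalently, crossing from the $\DD$-description of $\Lambda_{ks}^*$ to the $h_c$-description of $\Lambda_{tr}^*$ opens up a \emph{uniform} slack $\tau > 0$ around any boundary point where $P_z^\ast \neq P_X$. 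I would argue that this slack survives passage to the limit: since $P$ is on the boundary of $\Gamma_{ks}^{\infty}(P_X)$, there is a sequence of points in $\bigcup_n \Gamma_{ks}^n(P_X)$ converging to $P$, each of which is carried into $\Gamma_{tr,c}^{m}(P_X)$ for a suitable multiple $m$ by the first part of the Theorem's proof; the slack $\tau$ means these points lie strictly in the \emph{interior} of $\Gamma_{tr,c}^{\infty}(P_X)$. Taking closures and using that $\Gamma_{tr,c}^{\infty}(P_X)$ is closed with nonempty interior near $P$, I conclude $P$ is an interior point of $\Gamma_{tr,c}^{\infty}(P_X)$, so some ball $B(P,\varepsilon) \subset \Gamma_{tr,c}^{\infty}(P_X)$.

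For the second assertion, let $P$ lie on the boundary of $\Gamma_{tr,c}^{\infty}(P_X)$. This is essentially the punchline of the second half of the proof of Theorem \ref{theo.inclusion}: there it is shown that for a suitable inner sequence $P_n^i \in \Gamma_{tr,c}^n(P_X)$ tending to $P^\ast$ on the boundary, one has $P_n^i \notin \Gamma_{ks}^\infty(P_X)$, and moreover (the "a fortiori" remark) $P^\ast$ itself does not belong to $\Gamma_{ks}^{\infty}(P_X)$. To upgrade "$P \notin \Gamma_{ks}^{\infty}(P_X)$" to "a whole ball around $P$ misses $\Gamma_{ks}^{\infty}(P_X)$", I would invoke that $\Gamma_{ks}^{\infty}(P_X)$ is closed (it is defined as a closure in (\ref{eq.Gamma_union})): the complement of a closed set is open, so $P \notin \Gamma_{ks}^{\infty}(P_X)$ immediately gives a radius $\varepsilon > 0$ with $B(P,\varepsilon) \cap \Gamma_{ks}^{\infty}(P_X) = \emptyset$.

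The main obstacle is making the first assertion fully rigorous: one must show the slack $\tau > 0$ is not just present pointwise but is uniform enough (locally around $P$) that the \emph{limit} point $P$ itself becomes interior, rather than merely every approximating point being interior. Concretely, the argument needs that as $P_n^i \to P$ along the boundary of $\Gamma_{tr,c}^{\infty}(P_X)$, the associated reachable types $P_{z^n}$ stay bounded away (by at least $\tau/2$, say) from the threshold defining $\Lambda_{tr}^{*,n}(P_X)$, so that a fixed-radius ball around $P_n^i$ lies inside $\Gamma_{tr,c}^{\infty}(P_X)$; then $P$, being a limit of interiors of uniformly-sized balls, is itself interior. This uses the continuity of $h_c$ and $\DD$ and the additivity of the distortion measure, exactly as in the proof of Theorem \ref{theo.inclusion}, together with the caveat (footnote in that proof) that one must exclude the degenerate case $\lambda \to 0$ in which the acceptance region collapses to $\{P_X\}$; under the standing assumption $\lambda > 0$ this is exactly the condition that guarantees a boundary point $P$ with $P_z^\ast \neq P_X$ exists, and the same reasoning then applies in a neighborhood.
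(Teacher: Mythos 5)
Your second assertion is handled correctly and is exactly the intended ``immediate'' argument: the proof of Theorem \ref{theo.inclusion} ends by showing that no boundary point of $\Gamma_{tr,c}^{\infty}(P_X)$ belongs to $\Gamma_{ks}^{\infty}(P_X)$, and since $\Gamma_{ks}^{\infty}(P_X)$ is closed (it is defined as a closure in (\ref{eq.Gamma_union})), its complement is open and a ball $B(P,\varepsilon)$ disjoint from it exists.

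For the first assertion, however, your route has a real gap that you yourself flag but do not close. You argue that each approximating point of $P$ lies in the interior of $\Gamma_{tr,c}^{\infty}(P_X)$ thanks to the slack $\tau>0$, and then pass to the limit; but a limit of interior points need not be interior, and the slack $\tau$ in (\ref{eq.tau}) is defined along a \emph{particular} convergent sequence of types with limit $P_z^{*}$ --- nothing in the theorem's proof establishes that it is bounded below uniformly over a neighborhood of $P$, which is precisely what your argument would need. The gap is unnecessary, because the first assertion follows from the second one together with the inclusion $\Gamma_{ks}^{\infty}(P_X)\subseteq\Gamma_{tr,c}^{\infty}(P_X)$: a boundary point $P$ of the closed set $\Gamma_{ks}^{\infty}(P_X)$ belongs to $\Gamma_{ks}^{\infty}(P_X)$, hence to $\Gamma_{tr,c}^{\infty}(P_X)$; it cannot lie on the boundary of $\Gamma_{tr,c}^{\infty}(P_X)$, since by the ``a fortiori'' conclusion of Theorem \ref{theo.inclusion} such boundary points are excluded from $\Gamma_{ks}^{\infty}(P_X)$; being a point of a closed set that is not a boundary point, it is interior, which is exactly the existence of $B(P,\varepsilon)\subset\Gamma_{tr,c}^{\infty}(P_X)$. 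This two-line deduction is presumably what the paper means by ``follows immediately,'' and it sidesteps any uniformity considerations about $\tau$.
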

\begin{proof}
It follows immediately from the proof of Theorem \ref{theo.inclusion}.
\end{proof}
\begin{corollary}
\label{Corollary2_strict_inclusion}
Let $\varepsilon_{ks}$ and $\varepsilon_{tr,c}$ denote the error exponents at the equilibrium for the $HT_{ks}^{lr}$ and $HT_{tr,c}^{lr}$ games. Then we have:
\begin{equation}
    \varepsilon_{tr,c} \le \varepsilon_{ks},
\end{equation}
where the equality holds if and only if $P_Y \in  \Gamma_{ks}^{\infty}(P_X)$, when both error exponents are equal to 0.
\end{corollary}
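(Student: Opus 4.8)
The plan is to obtain both assertions purely from the variational formulas for the two error exponents --- Theorem \ref{theo.payoff_KS_as} and Theorem \ref{theo.SanovTRc} --- together with the strict inclusion of indistinguishability regions in Theorem \ref{theo.inclusion} and its Corollary \ref{Corollary1_strict_inclusion}. To prove $\varepsilon_{tr,c}\le\varepsilon_{ks}$ I would first upper bound the right-hand side of (\ref{eq.fnerr_exp_TRc}) by evaluating the objective at the single choice $Q=P_X$: this annihilates the term $c\,\DD(Q||P_X)$ and gives $\varepsilon_{tr,c}\le\min_{P\in\Gamma_{tr,c}^{\infty}(P_X)}\DD(P||P_Y)$. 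By Theorem \ref{theo.inclusion} we have $\Gamma_{ks}^{\infty}(P_X)\subset\Gamma_{tr,c}^{\infty}(P_X)$, and a minimum over a larger set is no larger, so this right-hand side is at most $\min_{P\in\Gamma_{ks}^{\infty}(P_X)}\DD(P||P_Y)=\varepsilon_{ks}$ by Theorem \ref{theo.payoff_KS_as}, which settles the inequality.

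For the equality condition, the ``if'' direction is immediate: if $P_Y\in\Gamma_{ks}^{\infty}(P_X)$ then $\varepsilon_{ks}=0$ by Theorem \ref{theo.payoff_KS_as}, and by the inclusion also $P_Y\in\Gamma_{tr,c}^{\infty}(P_X)$, whence $\varepsilon_{tr,c}=0$ by Theorem \ref{theo.SanovTRc}, so the two exponents coincide. For the ``only if'' direction I would argue the contrapositive, namely that $P_Y\notin\Gamma_{ks}^{\infty}(P_X)$ forces $\varepsilon_{tr,c}<\varepsilon_{ks}$. In that case $\varepsilon_{ks}=\DD(P^{ks}||P_Y)>0$ for some minimiser $P^{ks}\in\Gamma_{ks}^{\infty}(P_X)$. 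If $P_Y\in\Gamma_{tr,c}^{\infty}(P_X)$, then $\varepsilon_{tr,c}=0<\varepsilon_{ks}$ and we are done. Otherwise, I would first observe that $P^{ks}$ must lie on the boundary of $\Gamma_{ks}^{\infty}(P_X)$: since $\DD(\cdot||P_Y)$ is convex and vanishes at $P_Y$, moving from $P^{ks}$ a short way along the segment towards $P_Y$ strictly decreases the divergence, so an interior $P^{ks}$ would contradict its minimality. Corollary \ref{Corollary1_strict_inclusion} then supplies a radius $\varepsilon>0$ with $B(P^{ks},\varepsilon)\subset\Gamma_{tr,c}^{\infty}(P_X)$; a point $P'$ of that ball obtained by a small step from $P^{ks}$ towards $P_Y$ satisfies $\DD(P'||P_Y)<\DD(P^{ks}||P_Y)=\varepsilon_{ks}$, and since $P'\in\Gamma_{tr,c}^{\infty}(P_X)$ the bound of the first paragraph yields $\varepsilon_{tr,c}\le\DD(P'||P_Y)<\varepsilon_{ks}$.

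I expect the delicate step to be this ``only if'' direction: concretely, making rigorous that the minimiser of $\DD(\cdot||P_Y)$ over $\Gamma_{ks}^{\infty}(P_X)$ is a boundary point --- which implicitly uses that $\Gamma_{ks}^{\infty}(P_X)$ is closed and not degenerate to the single point $P_X$, i.e.\ the standing assumption $\lambda>0$ --- and then upgrading the strict improvement guaranteed by Corollary \ref{Corollary1_strict_inclusion} at that one boundary point into a strict inequality between the two global minima. Everything else is routine manipulation of the variational expressions already established.
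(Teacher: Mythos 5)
Your proposal is correct and follows essentially the same route as the paper: the inequality is obtained by evaluating the variational formula of Theorem \ref{theo.SanovTRc} at $Q=P_X$ and invoking the inclusion of Theorem \ref{theo.inclusion}, and strictness rests on the convexity of $\DD(\cdot\|P_Y)$ together with Corollary \ref{Corollary1_strict_inclusion}. The only (immaterial) difference is that you use the first half of Corollary \ref{Corollary1_strict_inclusion} to perturb the boundary minimizer over $\Gamma_{ks}^{\infty}(P_X)$ into $\Gamma_{tr,c}^{\infty}(P_X)\setminus\Gamma_{ks}^{\infty}(P_X)$ and strictly improve the objective, whereas the paper uses the second half to argue that the minimizer over $\Gamma_{tr,c}^{\infty}(P_X)$, being a boundary point of that region, cannot belong to $\Gamma_{ks}^{\infty}(P_X)$; both arguments are valid.
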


\begin{proof}
The corollary is obvious when $P_Y \in \Gamma_{tr,c}^{\infty}(P_X)$, since in this case $\varepsilon_{tr,c} = 0$ and $\varepsilon_{ks} = 0$ if $P_Y \in \Gamma_{ks}^{\infty}(P_X)$ and nonzero otherwise. If $P_Y \notin \Gamma_{tr,c}^{\infty}(P_X)$, by considering the expression of the error exponent for the $HT_{tr,c}^{lr}$ game we have:
\begin{eqnarray}
\label{epsilon_comparison_1}
\varepsilon_{tr,c} & = & \min_{Q \in \mathcal P} [ c \cdot \DD(Q || P_X) + \min_{P \in \Gamma_{tr,c}^{\infty} (Q)} \DD(P || P_Y)] \\ \nonumber
 & \le & c \DD(P_{X} || P_X) + \min_{P \in \Gamma_{tr,c}^{\infty}(P_X)} \DD(P||P_{Y}) \\ \nonumber
 &\stackrel{(a)} =  & \min_{P \in \Gamma_{tr,c}^{\infty}(P_X)} \DD(P||P_{Y}) \\ \nonumber
 & < & \min_{P \in \Gamma_{ks}^{\infty}(P_X)} \DD(P||P_{Y}) = \varepsilon_{ks}.
\end{eqnarray}
where the last strict inequality is justified by observing that the absolute minimum of $\DD(P || P_Y)$ is obtained for $P = P_Y$ which lies outside $\Gamma_{tr,c}^{\infty}(P_X)$, hence due to the convexity of $\DD$ and Corollary \ref{Corollary1_strict_inclusion}, the value $P$ satisfying the minimization on the left-hand side of equality ($a$) belongs to the non-empty set $\{\Gamma_{tr,c}^{\infty}(P_X) / \Gamma_{ks}^{\infty}(P_X)\}$.
\end{proof}

Theorem \ref{theo.inclusion} and Corollary \ref{Corollary2_strict_inclusion} permit us so to conclude that hypothesis testing with training data is more favorable to the attacker than hypothesis testing with known sources. The reason behind such a result is the use of the $h$ function instead of the divergence, which in turns stems from the need for the defender to ensure that the constraint on the false positive error probability is satisfied for all $P_X \in \CC$. It is such a worse case assumption that ultimately favors the attacker in the $HT_{tr,c}^{lr}$ game.

\section{Binary hypothesis testing game with independent training sequences ($HT_{tr,a}^{lr}$).}
\label{sec.version_a}

We now pass to the analysis of version $a$ of the $HT_{tr}^{lr}$ game. We remind that in this case D and A rely on independent training sequences, namely $t_D^N$ and $t_A^K$. As for version $c$, we assume that both $N$ and $K$ grow linearly with $n$ and that the asymptotic analysis is carried out by letting $n$ go to infinity. Specifically, we assume that $N = cn$ and $K = dn$. As we already noted in Section \ref{sec.Equilibrium_tr_c}, the strategy $\Lambda_{tr}^*$ identified by Lemma \ref{lemma.optimum_SD} is optimum regardless of the relationship between $t_D^N$ and $t_A^K$, hence the only difference between versions $a$ and $c$ of the game is in the strategy of the attacker. In fact, now the attacker does not have a perfect knowledge of the acceptance region adopted by the defender, since such a region depends on the empirical pmf of $t_D^N$ which A does not know.

A reasonable strategy for the attacker could be to use the empirical pmf of $t_A^K$ instead of the one derived from $t_A^N$. More precisely, by using the notation introduced in Section \ref{sec.payoff} (equation (\ref{eq.Lambdan_TR_types_1dim})), the attacker could try to move $y^n$ into $\Lambda_{tr}^{*,n}(P_{t_A^K})$, while the acceptance region adopted by the defender is $\Lambda_{tr}^{*,n}(P_{t_D^N})$. Given that $t_D^N$ and $t_A^K$ are generated by the same source, their empirical pmf's will both tend to $P_X$ when $n$ goes to infinity, and hence using $\Lambda_{tr}^{*,n}(P_{t_A^K})$ should be {\em in some way}  equivalent to using $\Lambda_{tr}^{*,n}(P_{t_D^N})$.
In fact, in the following we will show that, given $P_X$, $D$ and $\lambda$, the indistinguishability region for version $a$ of the game, let us call it $\Gamma_{tr,a}^{\infty}(P_X)$, is identical to the indistinguishability region of version $c$. Of course, this does not mean that the achievable payoff for the $HT_{tr,a}^{lr}$ game is equal to that of the $HT_{tr,c}^{lr}$ game, since, even if the indistinguishability region is the same, outside it the false negative error exponent for case $a$ may be different (actually larger) than that of case $c$.

We start our analysis by assuming that $c = d$ (and hence $N = K$), i.e. the training sequences available to the defender and the attacker have the same length.
Our goal is to investigate the asymptotic behavior of the payoff of the $HT_{tr,a}^{lr}$ game for the profile $(\Lambda_{tr}^{*,n}(P_{t_D^N}), \tilde{f})$, where the, not necessarily optimum, strategy $\tilde{f}$ adopted by the attacker is defined as:
\begin{equation}
    \tilde{f}(y^n,t_A^N) = \arg \min_{z^n : d(z^n, y^n) \le nD} h(P_{z^n}, P_{t_A^N}).
\label{eq.subopt}
\end{equation}
We also impose the additional constraint that for any permutation $\sigma$ we have:
\begin{equation}
    \tilde{f}(\sigma(y^n),t_A^N) = \sigma(\tilde{f}(y^n,t_A^N)).
\label{eq.permut}
\end{equation}
By following the same flow of ideas used in Section \ref{sec.payoff}, we consider the set of sequences for which the attacker is able to move $y^n$ within the acceptance region $\Lambda_{tr}^{*,n}(P_{t_D^N})$, i.e.:
\begin{equation}
    \tilde{\Gamma}_{tr,a}^n = \{(y^n, t_{D}^N,t_{A}^N) : \tilde{f}(y^n,t_{A}^N) \in \Lambda_{tr}^{*,n}(P_{t_D^N}) \}.
\label{eq.gamma_triple}
\end{equation}
Thanks to the additional constraint in equation (\ref{eq.permut}), and by reasoning as in the proof of Property 1 in \cite{BT13}, it is easy to show that $\tilde{\Gamma}_{tr,a}^n $ is a union of triple of type classes\footnote{The necessity of imposing that $\tilde{f}$ commutes with permutations comes out when for a certain $y^n$ the minimization in (\ref{eq.subopt}) has several solutions $\{z^n(1) \dots z^n(k)\}$, some of which, say the first $m$, fall within $\Lambda_{tr}^{*,n}(P_{t_D^N})$ while the others don't. If we consider $\sigma({y^n})$ instead of $y^n$, $\{\sigma(z^n(1)) \dots \sigma(z^n(k))\}$ will still be solutions of the minimization problem. In addition, the first $m$ sequences will still belong to $\Lambda_{tr}^{*,n}(P_{t_D^N})$, while the others will not. If we want that $\tilde{\Gamma}_{tr,a}^n$ is a union of triple of type classes, it is necessary to require that when $y^n$ is permuted $\tilde{f}$ continues to pick up a minimizer inside (or outside) $\Lambda_{tr}^{*,n}(P_{t_D^N})$. This is surely the case if $\tilde{f}(\sigma(y^n),t_A^N) = \sigma(\tilde{f}(y^n,t_A^N))$.}, hence permitting us to redefine $\tilde{\Gamma}_{tr,a}^n$ in terms of types. Similarly to version $c$ of the game, we find it useful to introduce the following definition:
\begin{align}
\label{eq.gamma_f2}
    \tilde{\Gamma}_{tr,a}^n (P_{t_D^N}, P_{t_A^N}) = & \{P_{y^n} \in \PP_n : \\ \nonumber
    & \forall y^n \in T(P_{y^n}),  ~ (\tilde{f}(y^n, t_{A}^N), t_D^N) \in \Lambda_{tr}^{*,n}\}.
\end{align}
By using the generalized function $h_c$ instead of $h$, we can apply the above definition to any pair of pmf's. Specifically, given two pmf's $Q$ and $R$, we define:
\begin{align}
\label{eq.gamma_f2_bis}
    \tilde{\Gamma}_{tr,a}^n (Q,R) =  \{ & P_{y^n} \in \PP_n : \\ \nonumber
    & \forall y^n \in T(P_{y^n}),  (\tilde{f}(y^n,R),Q) \in \Lambda_{tr}^{*,n}.
\end{align}
It is easy to see that:
\begin{align}
\label{eq.QeqR}
    \tilde{\Gamma}_{tr,a}^n (Q,R) & \subseteq \tilde{\Gamma}_{tr,a}^n (Q,Q) \\ \nonumber
     \tilde{\Gamma}_{tr,a}^n (Q,Q) & = \Gamma_{tr,c}^n (Q),
\end{align}
since when (and only when) $Q = R$ A performs its attack by using exactly the same acceptance region adopted by D, while in all the other cases he can rely only on an estimate based on its own training sequence. Paralleling the analysis of the $HT_{tr,c}^{lr}$ game, we introduce the set
\begin{equation}
    \tilde{\Gamma}_{tr,a}^{\infty} (Q,R) = cl \left( \bigcup_n  \tilde{\Gamma}_{tr,a}^n (Q,R) \right)
\label{eq.indist_a}
\end{equation}
for which, thanks to  (\ref{eq.QeqR}), we have $\tilde{\Gamma}_{tr,a}^{\infty} (Q,R) \subseteq \tilde{\Gamma}_{tr,a}^{\infty} (Q,Q) = \Gamma_{tr,c}^{\infty} (Q)$.

We are now ready to prove our main result regarding the $HT_{tr,a}^{lr}$ game.
\begin{theorem}[Asymptotic payoff of the $HT_{tr,a}^{lr}$ game]
   The error exponent of the payoff associated to the profile $(\Lambda_{tr}^{*,n}(P_{t_D^N}), \tilde{f}(\cdot, t_A^N))$ is lower (upper) bounded as follows
\begin{align}
       \label{eq.lowerbound}
       \tilde{\varepsilon}_{tr,a} & \ge \min_{Q, R\in \CC} \big\{c [\DD(Q||P_X) + \DD(R||P_X)] \\ \nonumber
       & + \min_{P \in \tilde{\Gamma}_{tr,a}^{\infty}(Q,R)} \DD(P||P_Y) )\big\},
\end{align}
\begin{equation}
\label{eq.upperbound}
       \tilde{\varepsilon}_{tr,a} \le \min_{Q \in \CC} \big[ 2c \cdot \DD(Q || P_X) + \min_{P
\in \tilde{\Gamma}_{tr,a}^{\infty} (Q, Q)} \DD(P || P_Y) \big].
\end{equation}
\label{theo.HTa}
\end{theorem}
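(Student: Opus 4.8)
The plan is to reuse, essentially verbatim, the two–sided type‑counting argument of the proof of Theorem~\ref{theo.SanovTRc}, now applied to the false negative probability of the (suboptimal) profile $(\Lambda_{tr}^{*,n}(P_{t_D^N}),\tilde f(\cdot,t_A^N))$. By the mutual independence of $y^n$, $t_D^N$, $t_A^N$, by the limited–resources/permutation‑invariance of $\tilde f$, and by grouping sequences into type classes, one writes
\begin{equation}
P_{fn} = \sum_{Q \in \PP_N}\sum_{R \in \PP_N} P_X(T(Q))\,P_X(T(R)) \sum_{P \in \tilde{\Gamma}_{tr,a}^n(Q,R)} P_Y(T(P)),
\label{eq.plan1}
\end{equation}
which is exactly (\ref{eq.Theo4_Pfn}) with an extra outer sum over the type $R$ of the attacker's training sequence. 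The two bounds in the statement come from bounding \eqref{eq.plan1} from above and from below respectively.

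For the lower bound \eqref{eq.lowerbound} on the exponent I would upper bound \eqref{eq.plan1} term by term with the standard estimates $P_Y(T(P)) \le 2^{-n\DD(P||P_Y)}$, $P_X(T(Q)) \le 2^{-cn\DD(Q||P_X)}$ (and the same for $R$, since $N=cn$), together with $|\PP_n|\le(n+1)^{|\XX|}$ and $|\PP_N|\le(N+1)^{|\XX|}$; because $\tilde{\Gamma}_{tr,a}^n(Q,R)\subseteq\tilde{\Gamma}_{tr,a}^\infty(Q,R)$, replacing the former set by the latter in the inner $\min$ only enlarges the bound, and relaxing the outer minimization from $Q,R\in\PP_N$ to $Q,R\in\CC$ does likewise. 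Taking logarithms, dividing by $n$ and letting $n\to\infty$ wipes out the polynomial factors and yields \eqref{eq.lowerbound}.

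For the upper bound \eqref{eq.upperbound} on the exponent I would lower bound \eqref{eq.plan1} by keeping only the diagonal term $Q=R=Q_N$, where $Q_N\in\PP_N$ is a rational approximation converging to the minimizer $Q^*$ of the right‑hand side of \eqref{eq.upperbound}. Using $P_X(T(Q_N)) \ge (N+1)^{-|\XX|}2^{-cn\DD(Q_N||P_X)}$ and $P_Y(T(P)) \ge (n+1)^{-|\XX|}2^{-n\DD(P||P_Y)}$, and invoking the identity $\tilde{\Gamma}_{tr,a}^n(Q_N,Q_N)=\Gamma_{tr,c}^n(Q_N)$ from \eqref{eq.QeqR}, the problem collapses to the one already solved in the second half of the proof of Theorem~\ref{theo.SanovTRc}: the density lemma of Appendix~\ref{app.lemmaTondi} supplies a sequence $P_n\in\Gamma_{tr,c}^n(Q_N)$ converging to the minimizer $P^*\in\Gamma_{tr,c}^\infty(Q^*)=\tilde{\Gamma}_{tr,a}^\infty(Q^*,Q^*)$ of $\DD(\cdot||P_Y)$, so that continuity of the divergence gives $\min_{P\in\Gamma_{tr,c}^n(Q_N)}\DD(P||P_Y)\le\DD(P^*||P_Y)+o(1)$. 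Collecting the factors, taking the log and dividing by $n$ yields $-\frac{\log P_{fn}}{n}\le 2c\,\DD(Q_N||P_X)+\min_{P\in\Gamma_{tr,c}^n(Q_N)}\DD(P||P_Y)+o(1)\to 2c\,\DD(Q^*||P_X)+\DD(P^*||P_Y)$, which is precisely \eqref{eq.upperbound}. (Throughout we use $c=d$, as assumed in this section.)

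I expect the main obstacle to be, as in version $c$, the joint density/continuity step in the upper‑bound direction: one must simultaneously drive the training type $Q_N\to Q^*$ and exhibit attack‑reachable types $P_n\in\Gamma_{tr,c}^n(Q_N)$ converging to $P^*$, which rests on the fact that closeness of types is preserved by the (distortion‑limited) mapping functions, in the spirit of Appendix~A of \cite{BT13} and the lemma of Appendix~\ref{app.lemmaTondi}. By contrast, the genuinely new ingredient here is light: it is just the remark \eqref{eq.QeqR} that on the diagonal $R=Q$ the attacker of version $a$ is facing exactly the version‑$c$ game, which is what makes the diagonal term of \eqref{eq.plan1} large enough to yield \eqref{eq.upperbound}; off the diagonal the cost $c\DD(Q||P_X)+c\DD(R||P_X)$ together with the smaller reachable set $\tilde{\Gamma}_{tr,a}^\infty(Q,R)$ prevents the two bounds from meeting in general, which is why the theorem only brackets $\tilde{\varepsilon}_{tr,a}$.
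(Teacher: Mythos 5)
Your argument is correct and follows the paper's proof in all essentials: the same double-sum decomposition of $P_{fn}$ over the types $Q,R$ of the two training sequences, the same type-counting bounds for the lower bound on the exponent, and the same ``keep one term of the sum'' strategy plus a density/continuity argument for the upper bound. The one place you deviate is in the upper-bound step: the paper picks two (a priori distinct) sequences $Q_N,R_N\in\PP_N$ both converging to $Q^*$, keeps the single term $(Q_N,R_N)$, and then needs the two-sequence density lemma of Appendix~\ref{app.lemmaTondibis} (Lemma~\ref{Lemma_density_Gamma_a_n_in_var}) to exhibit $P_n\in\tilde{\Gamma}_{tr,a}^n(Q_N,R_N)$ converging to $P^*$. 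You instead take the diagonal term $R_N=Q_N$, so that by (\ref{eq.QeqR}) the relevant set is exactly $\Gamma_{tr,c}^n(Q_N)$ and the single-sequence Lemma~\ref{Lemma_density_Gamma_n_in_var} of Appendix~\ref{app.lemmaTondi} already suffices. Since both $Q_N$ and $R_N$ are free choices in $\PP_N$ (and $N=K$ here, so the diagonal is available), this shortcut is legitimate for the theorem as stated and is slightly more economical; the paper's more general lemma only becomes genuinely necessary when the two training lengths differ ($c\ne d$, where $\PP_N\ne\PP_K$), a case treated separately in Section~\ref{sec.version_a}. Your closing remark about why the two bounds do not meet off the diagonal matches the paper's discussion.
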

\begin{proof}
The proof is similar to the proof of Theorem \ref{theo.SanovTRc}, with the noticeable difference that now the lower and upper bounds are different hence preventing us to derive a precise expression for the error exponent.
Let us start with the lower bound. By recalling the definition of the false negative error probability, for any $n$ we can write:
\begin{align}
\label{eq.up_bound_Pfn_a_v}
P_{fn} & = \hspace{0.2cm}\sum_{t_D^N}  \sum_{t_A^N} P_X(t_D^N) P_X(t_A^N)  \hspace{-0.5cm }\sum_{P \in \tilde{\Gamma}_{tr,a}^n (P_{t_D^N}, P_{t_A^N})}  \hspace{-0.3cm} P_Y(T(P)) \nonumber\\
&  = \sum_{Q \in \PP_N}  \sum_{R \in \PP_N} \hspace{-0.1cm} P_X(T(Q))   P_X(T(R)) \hspace{-0.3cm}\sum_{P \in \tilde{\Gamma}_{tr,a}^n(Q,R)}  \hspace{-0.3cm}  P_Y(T(P))\nonumber\\
& \le \sum_{Q \in \PP_N} \sum_{R \in \PP_N} P_X(T(Q)) P_X(T(R)) \nonumber \\
& \quad \cdot (n+1)^{|\XX|} 2^{-n \hspace{-0.2cm} \min\limits_{P \in \tilde{\Gamma}_{tr,a}^n(Q,R)} \hspace{-0.2cm}\DD(P || P_Y)} \nonumber\\
& \le \sum_{Q \in \PP_N} P_X(T(Q)) (n+1)^{|\XX|} (N+1)^{|\XX|} \nonumber \\
& \quad \cdot 2^{-n \min\limits_{R \in \PP_N} [c \DD(R||P_X) +  \min\limits_{P \in \tilde{\Gamma}_{tr,a}^n(Q,R)} \hspace{-0.2cm}\DD(P || P_Y)]} \nonumber\\
& \le  (n+1)^{|\XX|}(N+1)^{2 |\XX|} \nonumber \\
& \quad \cdot 2^{-n \min\limits_{Q,R \in \CC} \large[ c \DD(Q||P_X) + c \DD(R||P_X) + \min\limits_{P \in \tilde{\Gamma}_{tr,a}^n(Q,R)} \hspace{-0.1cm} \DD(P || P_Y))\large]},
\end{align}
where in the last inequality we replaced the minimization over all $Q$ and $R$ in $\PP_N$, with a minimization over the entire space of pmf's. By taking the logarithm of both sides and by letting $n$ tend to infinity, the lower bound in equation (\ref{eq.lowerbound}) is proved.

We now turn the attention to the upper bound. To do so, let $Q^*$ be the pmf achieving the minimum in equation (\ref{eq.upperbound}). Due to the density of rational numbers within real numbers, we can find two sequences of pmf's $Q_n$ and $R_n$ that tend to $Q^*$ when $n$ tends to infinity, and such that $Q_n \in \PP_n$, $R_n \in \PP_n, \forall n$. By remembering that $N = nc$, we can say that the subsequences $Q_N = Q_{nc}$ and $R_N = R_{nc}$ also tend to $Q^*$ when $n$ (and hence $N$) tends to infinity. We can, then, use the subsequences $Q_N$ and $R_N$ to write the following chain of inequalities:
\begin{align}
\label{eq.low_bound_Pfn_a_v}
P_{fn} &  = \sum_{Q \in \PP_N}  \sum_{R \in \PP_N} \hspace{-0.1cm} P_X(T(Q))   P_X(T(R)) \hspace{-0.3cm}\sum_{P \in \tilde{\Gamma}_{tr,a}^n(Q,R)}  \hspace{-0.3cm}  P_Y(T(P))\nonumber\\
& \ge \sum_{Q \in \PP_N} \sum_{R \in \PP_N} \hspace{-0.1cm}  P_X(T(Q))  P_X(T(R)) \hspace{-0.5cm}\sum_{P \in \tilde{\Gamma}_{tr,a}^n(Q,R)} \hspace{-0.2cm} \frac{2^{-n\DD(P || P_Y)}}{(n+1)^{|\XX|}} \nonumber\\
& \stackrel{(a)}\ge \sum_{Q \in \PP_N} \sum_{R \in \PP_N} \hspace{-0.1cm} P_X(T(Q))  P_X(T(R)) \frac{2^{-n \hspace{-0.3cm} \min\limits_{P \in \tilde{\Gamma}_{tr,a}^n(Q,R)} \hspace{-0.3cm} \DD(P || P_Y)}}{(n+1)^{|\XX|}} \nonumber\\
& \ge \sum_{Q \in \PP_N} \sum_{R \in \PP_N} \hspace{-0.1cm} P_X(T(Q))\frac{2^{-n[c \DD(R||P_X)+ \hspace{-0.2cm} \min\limits_{P \in \tilde{\Gamma}_{tr,a}^n(Q,R)} \hspace{-0.3cm}  \DD(P || P_Y)]}}{(N+1)^{|\XX|}(n+1)^{|\XX|}} \nonumber\\
& \stackrel{(b)}  \ge \sum_{Q \in \PP_N} \hspace{-0.1cm} P_X(T(Q))\frac{2^{-n[c \DD(R_N||P_X)+ \hspace{-0.2cm} \min\limits_{P \in \tilde{\Gamma}_{tr,a}^n(Q,R_N)} \hspace{-0.3cm}  \DD(P || P_Y)]}}{(N+1)^{|\XX|}(n+1)^{|\XX|}} \nonumber\\
& \ge \sum_{Q \in \PP_N} \hspace{-0.1cm} \frac{2^{-n[c\DD(Q||P_X) + c\DD(R_N||P_X)+ \hspace{-0.2cm} \min\limits_{P \in \tilde{\Gamma}_{tr,a}^n(Q,R_N)} \hspace{-0.3cm}  \DD(P || P_Y)]}}{(N+1)^{2|\XX|}(n+1)^{|\XX|}} \nonumber\\
& \stackrel{(c)} \ge \frac{2^{-n[c\DD(Q_N||P_X) + c\DD(R_N||P_X)+ \hspace{-0.2cm} \min\limits_{P \in \tilde{\Gamma}_{tr,a}^n(Q_N,R_N)} \hspace{-0.3cm}  \DD(P || P_Y)]}}{(N+1)^{2|\XX|}(n+1)^{|\XX|}},
\end{align}
where inequalities $(a)$, $(b)$ and $(c)$ have been obtained by replacing the summation with a single element of the sum (two elements of the sequences $Q_N$ and $R_N$ for $(b)$ and $(c)$), and the others rely on a known lower bound on the probability of a type class (\cite{CandT}, chapter 12). By taking the logarithm of each side in (\ref{eq.low_bound_Pfn_a_v}), we can write:
\begin{align}
\label{eq.low_bound_errexp_a}
    \tilde{\varepsilon}_{tr,a} & \le c\DD(Q_N||P_X) + c\DD(R_N||P_X) \nonumber \\
    & + \min\limits_{P \in \tilde{\Gamma}_{tr,a}^n(Q_N,R_N)} \hspace{-0.3cm}  \DD(P || P_Y) + \beta_n,
\end{align}
with $\beta_n = [2 |\XX| \log (N+1) + |\XX| \log(n+1)]/n$ tending to 0 for $n \rarrow \infty$. Let then $P^*$ be defined as:
\begin{equation}
    P^* = \arg\min_{P \in \tilde{\Gamma}_{tr,a}^{\infty}(Q^*,Q^*)} \DD(P || P_Y).
\label{eq.Pstar_a}
\end{equation}
By recalling that both $Q_N$ and $R_N$ tend to $Q^*$ for increasing $N$, we can invoke Lemma \ref{Lemma_density_Gamma_a_n_in_var} in Appendix \ref{app.lemmaTondibis}, to build a sequence $P_n$ such that each term of the sequence belongs to $\tilde{\Gamma}_{tr,a}^{n}(Q_{N},R_N)$ and $P_n \rarrow P^*$, when $n \rarrow \infty$.
By recalling that
\begin{equation}
    Q^* = \arg\min_{Q \in \CC} \big[ 2c \cdot \DD(Q || P_X) + \min_{P \in \tilde{\Gamma}_{tr,a}^{\infty} (Q, Q)} \DD(P || P_Y) \big],
\label{eq.minQ}
\end{equation}
and by reasoning as in the proof of Theorem \ref{theo.SanovTRc} (equations (\ref{eq.up_bound_P_fn_3}) and (\ref{eq.fn_err_exp}) and discussion therein), we can eventually prove the upper bound (\ref{eq.upperbound}).
\end{proof}

Theorem \ref{theo.HTa} has an important corollary.
\begin{corollary}[Indistinguishability region for $HT_{tr,a}^{lr}$]
The false negative error exponent associated to the profile $(\Lambda_{tr}^{*,n}(P_{t_D^N}), \tilde{f}(\cdot, t_A^N))$ is equal to zero if and only if $P_Y \in \tilde{\Gamma}_{tr,a}^{\infty}(P_X, P_X) = \Gamma_{tr,c}^{\infty}(P_X)$, and hence the indistinguishability region of the $HT_{tr,a}^{lr}$ game is equal to that of the $HT_{tr,c}^{lr}$ game.
\label{cor.HTa}
\end{corollary}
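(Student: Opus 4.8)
The plan is to read off the corollary from the two-sided bound of Theorem~\ref{theo.HTa}. The key point is that, although Theorem~\ref{theo.HTa} does not pin down $\tilde{\varepsilon}_{tr,a}$ exactly, its lower bound (\ref{eq.lowerbound}) and its upper bound (\ref{eq.upperbound}) have the \emph{same} zero set, namely $\{P_Y\in\Gamma_{tr,c}^{\infty}(P_X)\}$; sandwiching then forces $\tilde{\varepsilon}_{tr,a}$ to vanish exactly on that set. Throughout I use the identity $\tilde{\Gamma}_{tr,a}^{\infty}(Q,Q)=\Gamma_{tr,c}^{\infty}(Q)$ and the nesting $\tilde{\Gamma}_{tr,a}^{\infty}(Q,R)\subseteq\tilde{\Gamma}_{tr,a}^{\infty}(Q,Q)$ recorded in (\ref{eq.QeqR}), together with the fact that an error exponent of a probability bounded by $1$ is nonnegative.

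For sufficiency, assume $P_Y\in\Gamma_{tr,c}^{\infty}(P_X)$. Evaluating the upper bound (\ref{eq.upperbound}) at the admissible choice $Q=P_X\in\CC$ gives
\begin{equation*}
\tilde{\varepsilon}_{tr,a}\;\le\;2c\,\DD(P_X\|P_X)+\min_{P\in\tilde{\Gamma}_{tr,a}^{\infty}(P_X,P_X)}\DD(P\|P_Y)\;=\;\min_{P\in\Gamma_{tr,c}^{\infty}(P_X)}\DD(P\|P_Y),
\end{equation*}
where the last step is (\ref{eq.QeqR}) with $Q=P_X$. Since $P_Y$ itself belongs to $\Gamma_{tr,c}^{\infty}(P_X)$ and $\DD(P_Y\|P_Y)=0$, the right-hand side is $0$; combined with $\tilde{\varepsilon}_{tr,a}\ge 0$ this yields $\tilde{\varepsilon}_{tr,a}=0$.

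For necessity, assume $P_Y\notin\Gamma_{tr,c}^{\infty}(P_X)$ and start from the lower bound (\ref{eq.lowerbound}). Because $\tilde{\Gamma}_{tr,a}^{\infty}(Q,R)\subseteq\tilde{\Gamma}_{tr,a}^{\infty}(Q,Q)=\Gamma_{tr,c}^{\infty}(Q)$, minimizing the divergence over the smaller set can only raise its value, so for every $Q,R\in\CC$ we have $\min_{P\in\tilde{\Gamma}_{tr,a}^{\infty}(Q,R)}\DD(P\|P_Y)\ge\min_{P\in\Gamma_{tr,c}^{\infty}(Q)}\DD(P\|P_Y)$ (an empty $\tilde{\Gamma}_{tr,a}^{\infty}(Q,R)$ makes the left side $+\infty$, which is harmless). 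Discarding the nonnegative term $c\,\DD(R\|P_X)$ and then minimizing over $R$ trivially, the lower bound becomes
\begin{equation*}
\tilde{\varepsilon}_{tr,a}\;\ge\;\min_{Q\in\CC}\left[\,c\,\DD(Q\|P_X)+\min_{P\in\Gamma_{tr,c}^{\infty}(Q)}\DD(P\|P_Y)\,\right]\;=\;\varepsilon_{tr,c},
\end{equation*}
the last equality being exactly (\ref{eq.fnerr_exp_TRc}). Since $P_Y\notin\Gamma_{tr,c}^{\infty}(P_X)$, case~2 following Theorem~\ref{theo.SanovTRc} gives $\varepsilon_{tr,c}>0$, hence $\tilde{\varepsilon}_{tr,a}>0$.

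Putting the two directions together, $\tilde{\varepsilon}_{tr,a}=0$ if and only if $P_Y\in\Gamma_{tr,c}^{\infty}(P_X)$, so the indistinguishability region of the $HT_{tr,a}^{lr}$ game — the set of $P_Y$'s against which no positive false negative error exponent can be attained for the profile $(\Lambda_{tr}^{*,n}(P_{t_D^N}),\tilde{f}(\cdot,t_A^N))$ — equals $\Gamma_{tr,c}^{\infty}(P_X)$. I do not expect any real obstacle here: the only items requiring care are bookkeeping ones — that the inner minima are over closed subsets of the compact probability simplex and are therefore attained whenever the set is nonempty; that the empty-set case is absorbed by the $+\infty$ convention; and that the nonnegativity of $\tilde{\varepsilon}_{tr,a}$ is what upgrades the vanishing upper bound into an equality.
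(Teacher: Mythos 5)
Your proof is correct and follows the same route as the paper: the paper's own proof is a one-line observation that the upper bound of Theorem~\ref{theo.HTa} (evaluated at $Q=P_X$) gives sufficiency and the lower bound gives necessity, which is exactly the sandwich you carry out. You merely fill in the details the paper leaves implicit, in particular the reduction of the lower bound to $\varepsilon_{tr,c}$ via (\ref{eq.QeqR}) and the appeal to case~2 of Theorem~\ref{theo.SanovTRc}.
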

\begin{proof}
From the upper bound in Theorem \ref{theo.HTa}, it follows that $\tilde{\varepsilon}_{tr,a} = 0$ if $P_Y \in \tilde{\Gamma}_{tr,a}^{\infty}(P_X, P_X)$, whereas from the lower bound we see that $\tilde{\varepsilon}_{tr,a} = 0$ implies that $P_Y \in \tilde{\Gamma}_{tr,a}^{\infty}(P_X, P_X)$.
\end{proof}

Corollary \ref{cor.HTa} provides an interesting insight into the achievable performance of the $HT_{tr,a}^{lr}$ game. While, in general, version $a$ of the game is less favorable to the attacker than version $c$, since in the latter case the attacker knows exactly the acceptance region adopted by the defender, if the attacker adopts the strategy $\tilde{f}$, the indistinguishability regions of the two games are the same. Such a strategy, then, is optimal at least as far as the indistinguishability region is concerned. On the other side, there is no guarantee that the attacker can achieve the same payoff as for version $c$.

\subsection{Training sequences with different length}

We conclude this section by discussing briefly the case in which the training sequences $t_D^N$ and $t_A^K$ have different lengths, i.e. $c \ne d$. To simplify the analysis we assume that $c$ is known to the attacker, in this way A knows at least the exact form the $h_c$ function used by D. We focus on the following attacking strategy: use the training sequence $t_A^K$ to estimate $P_{t_D^N}$ and use the estimate to attack the sequence $y^n$. Specifically, the attacker may use the following estimate of $P_{t_D^N}$:
\begin{align}
\label{eq.estimate}
    & \tilde{P}_{t_D^N}(i) = \frac{1}{N} \bigg\lfloor P_{t_A^K}(i) \cdot N \bigg\rfloor \quad \forall i = 1 \dots |\XX| -1, \nonumber \\
    & \tilde{P}_{t_D^N}(|\XX|) = 1 - \sum_{i=1}^{|\XX|-1} \tilde{P}_{t_D^N}(i),
\end{align}
to implement the attacking function:
\begin{equation}
	\tilde{f}(y^n,t_A^K) = \arg \min_{z^n : d(z^n, y^n) \le nD} h_c(P_{z^n}, \tilde{P}_{t_D^N}).
\label{equation}
\end{equation}
With the above definitions, we can easily extend the analysis carried out for the case $c = d$ and obtain very similar results. Specifically, the upper bound in Theorem \ref{theo.HTa} can be rewritten as:
\begin{equation}
\label{eq.upperbound_bis}
       \tilde{\varepsilon}_{tr,a} \le \min_{Q \in \CC} \big[ (c+d) \cdot \DD(Q || P_X) + \min_{P
\in \tilde{\Gamma}_{tr,a}^{\infty} (Q, Q)} \DD(P || P_Y) \big],
\end{equation}
whose proof is practically identical to the proof of Theorem \ref{theo.HTa} and is omitted for sake of brevity. By observing that the performance achievable by the defender in version $a$ of the game are at least as good as those achievable in version $c$, since in the latter case A knows exactly the acceptance region adopted by D and hence his attacks will surely be more effective, equation (\ref{eq.upperbound_bis}) allows us to conclude that the indistinguishability region is equal to that obtained for the case $c = d$.

We end this section by considering briefly version $b$ of the game. In some sense, we can say that version $b$ is halfway between versions $a$ and $c$. Like in version $a$, the attacker does not have a perfect knowledge of the training sequence used by the defender and hence he must resort to an estimate of the true acceptance region. On the other hand, the situation is more favorable to the attacker with respect to version $a$ with $d < c$, since now D knows at least part of the training samples used by D. Given that versions $a$ and $c$ of the game have the same indistinguishability region, we can conclude that the indistinguishability region of version $b$ will also be the same.

\section{Concluding remarks}
\label{sec.conc}

The need to protect the cyberworld that surrounds us has spurred researchers to look for suitable countermeasures against the ever increasing number of attacks that every day are brought against the digital world we live in. In many cases, though, research has focused on specific security threats, each time by developing tailored solutions that can not be easily extended to other scenarios. It is no surprise, then, that similar solutions are re-invented several times, and that the same problems are faced with again and again by ignoring that satisfactory solutions have already been discovered in contiguous fields. Even worse, the lack of a unifying view does not permit to grasp the essence of the addressed problems and work out effective and general solutions to be applied with limited effort to different fields. Times are ripe to develop general tools and models that can be used to analyze very general classes of problems wherein the presence of an adversary aiming at system failure can not be neglected.

As a first attempt in this sense, we have introduced a framework to analyze the achievable performance of binary hypothesis testing in an adversarial setting, i.e. when an adversary is present with the explicit goal of degrading the performance of the test. We did so by casting the hypothesis testing problem into a game-theoretic framework. In this way, in fact, we have been able to define rigorously the goals and constraints of the two contenders, namely the analyst, a.k.a. the defender, and the adversary or attacker. More specifically we introduced several versions of the hypothesis testing game, by paying attention to distinguish between hypothesis testing with known sources and hypothesis testing with training data. Given that a problem very similar the former case has already been studied in \cite{BT13}, we then focused on hypothesis testing with training data (the $HT_{tr}$ game). From a more technical point of view, we derived the asymptotic equilibrium point of various versions of the game, and analyzed the achievable payoff at the equilibrium. In addition to shedding a new light on the achievable performance of hypothesis testing in an adversarial environment, the analysis we carried out has the merit to clearly show the potentiality of the use of game-theoretic concepts coupled with tools typical of information theory and statistics.

Several directions for future research can be pointed out. The extension of our analysis to multiple hypothesis testing and classification is one of the most promising research directions, together with the extension to games characterized by more than two players. The analysis of situations in which the players do not have a perfect knowledge of the strategies available to the other players, or even the payoff function, by modeling the problem as a game with imperfect knowledge is another interesting research direction. The investigation of sequential games, in which the players move repeatedly each time by exploiting the result of the previous round of the game also offers many interesting hints for a fruitful research.

From a more focused perspective, it would be interesting to study specific instances of the $HT_{tr}$ game, in which the sources belong to a specific class, e.g. binary sources. We predict that in this way simpler expressions of the equilibrium point, the achievable payoff and, most of all, the indistinguishability region could be obtained, thus permitting to get additional interesting insights. Finally, we mention the opportunity of extending the analysis to the case of continuous sources. While the general ideas would remain the same, passing from discrete to continuous sources does not seem to be a trivial step, since our analysis relied heavily on the method of types, whose extension to continuous sources, though not impossible, comes with a number of additional difficulties.

\section*{Acknowledgment}
This work was partially supported by the REWIND Project, funded by the Future and Emerging Technologies (FET) program within the 7FP of the EC, under grant 268478.

\bibliographystyle{IEEEtran}
\bibliography{HTGbiblio}

\numberwithin{equation}{section}
\appendix

\renewcommand{\theequation}{A\arabic{equation}}

\subsection{Proof of Lemma \ref{lemma.property_Dsum}}
\label{app.lemma1}
We start by remembering that for memoryless sources we have (\cite{CandT}, chapter 12):
\begin{equation}
    n \DD(P_{x^n} || P_X) = - \log(P_X(x^n)) - n H(P_{x^n}).
\label{eq.dif_vs_entropy}
\end{equation}
By applying the above property to the right-hand side of equation (\ref{eq.property_Dsum}), we obtain:
\begin{align}
\label{eq.app1}
    & n\DD(P_{x^n}||P_X) + N\DD(P_{t^N} || P_X) = \\ \nonumber
    & -n H(P_{x^n}) - N H(P_{t^N}) - \log P_X(r^{n+N}),
\end{align}
where we have used the memoryless nature of $P_X$ due to which $P_X(r^{n+N}) = P_X(t^N) \cdot P_X(x^n)$. For any $P_X \in \CC$, we also have\footnote{Relationship (\ref{eq.app2}) can be proved, for instance, by exploiting the inequality $\ln x \ge (1-1/x)$ (where the equality holds if and only if $x = 1$).}:
\begin{equation}
    P_X(r^{n+N}) \le \prod_{a \in \XX} P_{r^{n+N}}(a)^{N_{r^{n+N}}(a)},
\label{eq.app2}
\end{equation}
where $N_{r^{n+N}}(a)$ indicates the number of times that symbol $a$ appears in $r^{n+N}$, and where equality holds if and only if $P_X(a) = P_{r^{n+N}}(a)$ for all $a$. By applying the log function we have:
\begin{align}
\label{app3}
    \log P_X(r^{n+N}) \le & \log \prod_{a \in \XX} P_{r^{n+N}}(a)^{N_{r^{n+N}}(a)} \\ \nonumber
    = &  \log \prod_{a \in \XX} P_{r^{n+N}}(a)^{(N_{x^n}(a) + N_{t^N}(a))} \\ \nonumber
    = & \sum_{a \in \XX} N_{x^n}(a) \log P_{r^{n+N}}(a) + \\ \nonumber
    ~ & \sum_{a \in \XX} N_{t^N}(a) \log P_{r^{n+N}}(a).
\end{align}
By inserting the above inequality in (\ref{eq.app1}), and by using the definition of empirical KL divergence, we obtain:
\begin{align}
\label{app4}
    & n\DD(P_{x^n}||P_X) + N\DD(P_{t^N} || P_X) \\ \nonumber
    ~ & \ge \sum_{a \in \XX} N_{x^n}(a) \log \frac{P_{x^n}(a)}{P_{r^{n+N}}(a)} + \sum_{a \in \XX} N_{t^N}(a) \log \frac{P_{t^N}(a)}{P_{r^{n+N}}(a)} \\ \nonumber
    ~ & = n \DD(P_{x^n} || P_{r^{n+N}}) + N \DD(P_{t^N} || P_{r^{n+N}}),
\end{align}
where the equality holds if and only if $P_X = P_{r^{n+N}}$, thus completing the proof.

\subsection{Topology of $\Gamma_{tr,c}^{\infty}(Q)$}
\label{app.lemmaTondi}
Many of the proofs in the body of the paper relies on the following lemma.
\begin{lemma}
\label{Lemma_density_Gamma_n_in_var}
Let $\{Q_{N(n)}\}$ be a sequence of pmf's such that $Q_{N(n)} \rarrow Q$ when $n \rarrow \infty$. Then, for any pmf $P^*$ in $\Gamma_{tr,c}^{\infty}(Q)$ a sequence $P_n$ exists such that $P_n \in \Gamma_{tr,c}^{n}(Q_{N(n)})$ for each $n$ and $P_n \rarrow P^*$.
\end{lemma}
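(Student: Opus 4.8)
The plan is to build the approximating sequence $P_n$ in two stages: first approximate $P^*$ by types that lie in the asymptotic set $\Gamma_{tr,c}^{\infty}(Q)$, and then transfer the construction from the fixed second argument $Q$ to the moving argument $Q_{N(n)}$ using the continuity of $h_c$ in both of its arguments. Since $P^* \in \Gamma_{tr,c}^{\infty}(Q) = \mathrm{cl}\big(\bigcup_n \Gamma_{tr,c}^n(Q)\big)$, by definition of closure there is a sequence of pmf's converging to $P^*$, each lying in some $\Gamma_{tr,c}^{m}(Q)$; using the density of rationals and the fact that any type in $\PP_m$ is also a type in $\PP_{km}$ for every $k$, I would first extract, for every $n$, a type $\widehat{P}_n \in \PP_n \cap \Gamma_{tr,c}^n(Q)$ with $\widehat{P}_n \to P^*$. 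Concretely, for each $n$ one picks a type of denominator $n$ close to $P^*$ and, by refining $n$ along a subsequence if necessary and invoking the additive-distortion argument already used in the proof of Theorem \ref{theo.inclusion} (replicating the mapping $f^n$ on longer sequences), checks that membership in $\Gamma_{tr,c}^n(Q)$ is preserved; the routine bookkeeping here is exactly the type-refinement argument of Appendix~A in \cite{BT13}.

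The second, and genuinely new, step is to replace $Q$ by $Q_{N(n)}$. Here I would exploit that $\Gamma_{tr,c}^n(Q)$ is defined through the condition $h_c(P_{z^n},Q) < \lambda - |\XX|\frac{\log(n+1)(N+1)}{n}$ for some $z^n$ within distortion $nD$ of the relevant $y^n$, and that $h_c(P,Q)$ is jointly continuous in $(P,Q)$ (it is a sum of divergences of $P$ and $Q$ against the convex combination $U$). Because $\widehat{P}_n \in \Gamma_{tr,c}^n(Q)$, there is a mapping $f^n$ producing $z^n$ with $h_c(P_{z^n},Q)$ strictly below the threshold; I want to argue that the \emph{same} $f^n$ (applied to the same $y^n$) still lands inside $\Lambda_{tr}^{*,n}(Q_{N(n)})$ once $n$ is large, because $Q_{N(n)} \to Q$ forces $h_c(P_{z^n},Q_{N(n)}) \to h_c(P_{z^n},Q)$, while the threshold $\lambda - |\XX|\frac{\log(n+1)(N+1)}{n}$ converges to $\lambda$ from below with the $\log$-correction vanishing. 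The subtlety is that the strict inequality may be tight along the sequence (the margin by which $h_c(P_{z^n},Q)$ beats the threshold could itself shrink to $0$), so I cannot simply take $P_n = \widehat{P}_n$; instead I would perturb $\widehat{P}_n$ slightly \emph{towards} the interior of $\Gamma_{tr,c}^{\infty}(Q)$ — pick $\widehat{P}_n$ so that it has a uniform margin $\eta_n$ with $\eta_n \to 0$ but $\eta_n$ decaying slower than the $\log$-correction and slower than $\|Q_{N(n)} - Q\|$ (possible by passing to a subsequence of $n$ and relabelling, or by a diagonal choice), so that the $h_c$-perturbation caused by swapping $Q \to Q_{N(n)}$ is absorbed. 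Setting $P_n := \widehat{P}_n$ (with this careful choice of margin) then gives $P_n \in \Gamma_{tr,c}^n(Q_{N(n)})$ for all large $n$ and $P_n \to P^*$; for the finitely many small $n$ one defines $P_n$ arbitrarily inside $\Gamma_{tr,c}^n(Q_{N(n)})$.

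The main obstacle I anticipate is precisely this quantitative coupling between three vanishing quantities: the gap between $h_c(P_{z^n},Q)$ and the $n$-dependent threshold, the discretization error $\|\widehat{P}_n - P^*\|$ forced by working with types of denominator $n$, and the perturbation $\|Q_{N(n)} - Q\|$ of the second argument. One must choose the approximating types so that the first dominates the other two; the rate at which $Q_{N(n)}$ converges is not under our control (it is whatever the hypothesis gives), so the argument has to be robust to an arbitrarily slow such rate, which is why passing to a subsequence in $n$ and a diagonal extraction is the safe route. A secondary technical point, already flagged in the footnote to the proof of Theorem \ref{theo.inclusion}, is that the optimal mapping $f^n$ for $T(\widehat{P}_n)$ may not apply verbatim once $\widehat{P}_n$ is perturbed, but the closeness-of-types argument of \cite{BT13} handles this and I would simply cite it. Everything else — the type-counting bounds, the permutation-invariance of $\Gamma_{tr,c}^n$, the additivity of the distortion — is standard and reused from the earlier sections.
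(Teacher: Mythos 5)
Your proposal is correct and follows essentially the same route as the paper: approximate $P^*$ by types using the density of $\bigcup_n \PP_n$, transport the attacking map by the scaled construction $n_{v}(i \to j) = \lfloor n_f(i \to j)\, n/m \rfloor$, and absorb the substitution $Q \to Q_{N(n)}$ via the continuity of $h_c$ in its second argument. The one point where the paper is cleaner concerns exactly what you flag as the main obstacle. The paper first reduces to the case $P^* \in \Gamma_{tr,c}^{m}(Q)$ for a \emph{single fixed} $m$ (the boundary case is recovered by a short closure argument: pick $P'$ in the union arbitrarily close to $P^*$ and approximate $P'$ instead), so the gap $\delta = \lambda - \delta_m - h_c(P_{z^m},Q)$ is a fixed strictly positive constant rather than a margin $\eta_n \to 0$ that must be coupled to the uncontrolled rate of $\|Q_{N(n)}-Q\|$. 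With $\delta$ fixed, the three perturbations you enumerate (type discretization, the vanishing $\log$-correction in the threshold, and the drift of the second argument) are each $o(1)$ and are absorbed with no rate comparison at all; your diagonal extraction would work, but it is unnecessary once the reduction to fixed $m$ is performed first, and it is the more fragile way to organize the argument.
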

\begin{proof}
To prove the lemma, we will show that for any $\varepsilon >0$ and $n$ large enough, we can find a pmf $P_n \in \Gamma_{tr,c}^n(Q_{N(n)})$ such that $d(P_n, P^*) < \varepsilon$. Specifically, we will do so by assuming that $P^* \in \cup_n \Gamma_{tr,c}^n(Q)$. If this is not the case, in fact, by the definition of $\Gamma_{tr,c}^{\infty}(Q)$, it is possible to find a pmf $P' \in \cup_n \Gamma_{tr,c}^n(Q)$ that is arbitrarily close to $P^*$, and then a pmf in $\Gamma_{tr,c}^n(Q_{N(n)})$ that is arbitrarily close to $P'$ and hence to $P^*$. Let then $P^*$ belong to $\Gamma_{tr,c}^m(Q)$ for some $m$. This means that for any sequence $y^m \in T(P^*)$ a mapping $f$ exists that transforms $y^m$ into a sequence $z^m$ such that
\begin{equation}
    h_c(P_{z^m},Q) = \lambda - \delta_m -\delta
\label{eq.appendix_1}
\end{equation}
with $\delta_m = |\XX| [\log(m+1)(N(m)+1)]/m$ and where $\delta$ is a strictly positive quantity. Due to the density of $\cup_n \PP_n$ in set of all pmf's, we can  find a sequence of pmf's $P_n$ that tends to $P^*$ when $n$ tends to infinity and for which $P_n \in \PP_n$ for each $n$. Let $y^n$ be a sequence in $T(P_n)$. Let $n_f(i \rarrow j)$ indicate the number of times that the mapping $f$ transforms the $i$-th symbol of the alphabet into the $j$-th one. By starting from the mapping $f$, for each $n$ we build a mapping $v^{(n)}$ for which $n_{v^{(n)}}(i \rarrow j) = \lfloor n_f(i \rarrow j) \cdot n/m \rfloor$. When $n$ increases the type of $z^n = v^{(n)}(y^n)$ will approach that of $z^m = f(y^m)$ for any $y^m$ in $P^*$. By remembering that $Q_{N(n)} \rarrow Q$ when $n$ tends to infinity, and by exploiting the continuity of the $h$ function, we can write:
\begin{eqnarray}
    h_c(P_{z^n}, Q_{N(n)}) & = & h_c(P_{z^n}, Q) + \beta'_n  \\ \nonumber
    & = & h_c(P_{z^m}, Q) + \beta'_n + \beta''_n \\ \nonumber
    & = & \lambda - \delta_m - \delta + \beta'_n + \beta''_n,
\label{eq.appendix_2}
\end{eqnarray}
where $\beta'_n$ and $\beta''_n$ tend to zero when $n \rarrow \infty$. Given that $\delta$ is a fixed and strictly positive number, we conclude that when $n$ is large, $P_{z^n}$ belongs to $\Lambda_{tr}^n(Q_{N(n)})$ and hence $P_n$ belongs to $\Gamma_{tr,c}^n(Q_{N(n)})$, thus completing the proof.
\end{proof}

\subsection{Topology of $\tilde{\Gamma}_{tr,a}^{\infty}(Q,Q)$}
\label{app.lemmaTondibis}
The analysis of version $a$ of the hypothesis testing game needs that lemma \ref{Lemma_density_Gamma_n_in_var} is generalized as follows.
\begin{lemma}
\label{Lemma_density_Gamma_a_n_in_var}
Let $\{Q_{N(n)}\}$ and $\{R_{N(n)}\}$ be two sequences of pmf's such that $Q_{N(n)} \rarrow Q$ and $R_{N(n)} \rarrow Q$ as $n \rarrow \infty$. Then, for any pmf $P^*$ in $\tilde{\Gamma}_{tr,a}^{\infty}(Q,Q)$ a sequence $P_n \in \tilde{\Gamma}_{tr,a}^{n}(Q_{N(n)}, R_{N(n)})$ exists such that $P_n \rarrow P^*$.
\end{lemma}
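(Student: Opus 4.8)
The plan is to follow the proof of Lemma~\ref{Lemma_density_Gamma_n_in_var} almost verbatim, adding one continuity argument to absorb the essential new feature of version~$a$: the attacker is now forced to apply the map $\tilde{f}(\cdot,R_{N(n)})$, which minimises $h_c(\cdot,R_{N(n)})$, whereas membership of $P_n$ in $\tilde{\Gamma}_{tr,a}^{n}(Q_{N(n)},R_{N(n)})$ is decided by the value of $h_c$ evaluated at the \emph{second} argument $Q_{N(n)}$. First I would reduce, exactly as in Lemma~\ref{Lemma_density_Gamma_n_in_var}, to the case $P^*\in\bigcup_n\tilde{\Gamma}_{tr,a}^{n}(Q,Q)$, the general case following by the usual diagonal argument from the density of $\bigcup_n\PP_n$ in the simplex and the definition of closure; here I use the identity $\tilde{\Gamma}_{tr,a}^{n}(Q,Q)=\Gamma_{tr,c}^{n}(Q)$ from~(\ref{eq.QeqR}). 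So fix $m$ with $P^*\in\Gamma_{tr,c}^{m}(Q)$. Since the optimal attack $f:=\tilde{f}(\cdot,Q)$ achieves the minimum of $h_c(\cdot,Q)$ over distortion-limited sequences and commutes with permutations, it maps any $y^m\in T(P^*)$ to a sequence $z^m=f(y^m)$ with $h_c(P_{z^m},Q)=\lambda-\delta_m-\delta$, where $\delta_m=|\XX|\log(m+1)(N(m)+1)/m$ and $\delta>0$ is strictly positive (compare~(\ref{eq.appendix_1})).

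Next, pick $P_n\in\PP_n$ with $P_n\rarrow P^*$, fix a representative $y^n\in T(P_n)$, and build the scaled map $v^{(n)}$ with $n_{v^{(n)}}(i\rarrow j)=\lfloor n_f(i\rarrow j)\cdot n/m\rfloor$, exactly as in Lemma~\ref{Lemma_density_Gamma_n_in_var}; the minor feasibility and distortion corrections caused by the floor and by $P_n\neq P^*$ are handled as in Appendix~A of \cite{BT13}. Put $w^n=v^{(n)}(y^n)$, so that $w^n$ meets the distortion constraint (additive measure) and $P_{w^n}\rarrow P_{z^m}$. The key observation is that, by minimality of $\tilde{f}(\cdot,R_{N(n)})$,
\begin{equation}
h_c\big(P_{\tilde{f}(y^n,R_{N(n)})},R_{N(n)}\big)\le h_c\big(P_{w^n},R_{N(n)}\big),
\label{eq.plan_a1}
\end{equation}
and, since $R_{N(n)}\rarrow Q$ and $P_{w^n}\rarrow P_{z^m}$, continuity of $h_c$ gives $h_c(P_{w^n},R_{N(n)})\rarrow h_c(P_{z^m},Q)=\lambda-\delta_m-\delta$.

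The last step transfers the bound from the argument $R_{N(n)}$ to $Q_{N(n)}$. Because $U=\tfrac{1}{1+c}P+\tfrac{c}{1+c}Q$ satisfies $U\ge\tfrac{1}{1+c}P$ and $U\ge\tfrac{c}{1+c}Q$, we have $\DD(P\|U)\le\log(1+c)$ and $\DD(Q\|U)\le\log\tfrac{1+c}{c}$, so $h_c$ is bounded and jointly uniformly continuous on the compact set of pairs of pmf's over $\XX$; hence, writing $A_n=P_{\tilde{f}(y^n,R_{N(n)})}$,
\begin{equation}
\big|h_c(A_n,Q_{N(n)})-h_c(A_n,R_{N(n)})\big|\rarrow 0,
\label{eq.plan_a2}
\end{equation}
since $\|Q_{N(n)}-R_{N(n)}\|\rarrow 0$. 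Combining~(\ref{eq.plan_a1}),~(\ref{eq.plan_a2}) and $\delta_n:=|\XX|\log(n+1)(cn+1)/n\rarrow 0$, for $n$ large enough $h_c(A_n,Q_{N(n)})\le\lambda-\delta_m-\delta+o(1)<\lambda-\delta_n$, i.e.\ $(\tilde{f}(y^n,R_{N(n)}),Q_{N(n)})\in\Lambda_{tr}^{*,n}$; since $\tilde{f}$ commutes with permutations and $h_c$ depends only on types, this holds for every $y^n\in T(P_n)$, so $P_n\in\tilde{\Gamma}_{tr,a}^{n}(Q_{N(n)},R_{N(n)})$ and $P_n\rarrow P^*$, proving the lemma. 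The main obstacle is exactly this mismatch between the pmf the attacker uses and the one the defender's region depends on; the argument works because $h_c$ is uniformly continuous on a compact domain (finiteness coming from the mixture structure of $U$) and both $Q_{N(n)}$ and $R_{N(n)}$ converge to the same limit $Q$, so the two evaluations of $h_c$ coalesce asymptotically. The remaining bookkeeping with the scaled maps $v^{(n)}$ is routine and identical to that in \cite{BT13} and in Lemma~\ref{Lemma_density_Gamma_n_in_var}.
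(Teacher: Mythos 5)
Your proposal is correct and follows essentially the same route as the paper's proof: reduce to $P^*\in\bigcup_n\tilde{\Gamma}_{tr,a}^{n}(Q,Q)$, transport the attack via the scaled map $v^{(n)}$, bound the actual attacked type (your $A_n$, the paper's $P_{r^n}$) by the minimality of $\tilde{f}(\cdot,R_{N(n)})$ against the candidate $w^n$, and then swap $R_{N(n)}$ for $Q_{N(n)}$ by continuity of $h_c$ since both converge to $Q$ --- exactly the chain of inequalities in the paper's Appendix C. The only cosmetic differences are that you collapse two of the paper's continuity steps into one and you supply an explicit boundedness/uniform-continuity justification for $h_c$ that the paper leaves implicit.
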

\begin{proof}
Given the definition of $\tilde{\Gamma}_{tr,a}^{\infty}(Q,Q)$ as the closure of $\bigcup_n \tilde{\Gamma}_{tr,a}^{n}(Q,Q)$, we proceed as in Lemma \ref{Lemma_density_Gamma_n_in_var}, and limit our proof to the special case in which $P^* \in \bigcup_n \tilde{\Gamma}_{tr,a}^{n}(Q,Q)$. Let then $P^*$ belong to $\tilde{\Gamma}_{tr,a}^m(Q,Q)$ for some $m$. This means that for any sequence $y^m \in T(P^*)$ the function
\begin{equation}
\label{eq.app_ftilde}
	\tilde{f}(y^m,Q) = \arg \min_{z^m : d(z^m, y^m) \le mD} h_c(P_{z^m}, Q)
\end{equation}
maps $y^m$ into a sequence $z^m$ such that
\begin{equation}
\label{eq.appendixC_1}
    h_c(P_{z^m},Q) = \lambda - \delta_m -\delta,
\end{equation}
with $\delta_m = |\XX| [\log(m+1)(N(m)+1)]/m$ and where $\delta$ is a strictly positive quantity. Due to the density of rational numbers in the real line, we can  find a sequence of pmf's $P_n$ that tends to $P^*$ when $n$ tends to infinity and for which $P_n \in \PP_n$ for each $n$. Let $u^n$ be a sequence in $T(P_n)$. By reasoning as in Lemma \ref{Lemma_density_Gamma_n_in_var}, we can define a, distortion-limited, mapping $v$ for which $n_{v}(i \rarrow j) = \lfloor n_{\tilde{f}}(i \rarrow j) \cdot n/m \rfloor$. As we have shown in Lemma \ref{Lemma_density_Gamma_n_in_var}, when $n$ is large, $v$ brings the sequence $u^n$ into a sequence $w^n \in \Lambda_{tr}^{*,n}$, so that, for any $Q_{N(n)} \rarrow Q$ and $n$ large enough, we have:
\begin{equation}
\label{eq.appendixC_2}
    h_c(P_{w^n},Q_{N(n)}) = \lambda - \delta_m -\delta +\beta'_n,
\end{equation}
with $\beta'_n$ approaching zero when $n$ increases. This, however, is not enough to ensure that $P_n \in \tilde{\Gamma}_{tr,a}^n(Q_{N(n)}, R_{N(n)})$, since for  this to be the case the minimization (\ref{eq.app_ftilde}) has to be carried by using $R_{N(n)}$ instead of $Q_{N(n)}$, and so there is no guarantee that the optimum mapping function will be $v$ (or another mapping better than that). Despite this observation, we can exploit the fact that $R_{N(n)}$ tends to $Q$ when $n$ tends to infinity, to show that indeed, for large $n$, $P_n \in \tilde{\Gamma}_{tr,a}^n(Q_N(n), R_{N(n)})$. Given $u^n \in P_n$, let $r^n$ be defined as follows:
\begin{equation}
\label{eq.app_ftildebis}
	r^n = \arg \min_{r^n : d(r^n, u^n) \le nD} h_c(P_{r^n}, R_{N(n)}).
\end{equation}
Since both $Q_{N(n)}$ and $R_{N(n)}$ tend to $Q$, when $n$ increases they will get arbitrarily close to each other. By exploiting the continuity of the $h_c$ function, we can write the following chain of inequalities:
%
%
\begin{align}
\label{eq.chainappendixC}
    h_c(P_{r^n},Q_{N(n)}) & \stackrel{(a)}\le h_c(P_{r^n}, R_{N(n)}) + \beta''_n \\ \nonumber
    & \stackrel{(b)}\le h_c(P_{w^n}, R_{N(n)}) + \beta''_n \\ \nonumber
    & \stackrel{(c)}\le h_c(P_{w^n}, Q_{N(n)}) + \beta''_n + \beta'''_n \\ \nonumber
    &\stackrel{(d)} = \lambda - \delta_m - \delta + \beta'_n + \beta''_n + \beta''_n \\ \nonumber
    & \stackrel{(e)}\le \lambda - \delta_n -\delta + \beta'_n + \beta''_n + \beta'''_n,
\end{align}
where $\beta'_n, \beta''_n$ and $\beta'''_n$ can be made arbitrarily small by increasing $n$, and where $(a)$ and $(c)$ derive from the continuity of the $h_c$ function and from the fact that $Q_{N(n)}$ and $R_{N(n)}$ tend to the same limit, $(b)$ is due to the fact that $r^n$ is the solution of the minimization in (\ref{eq.app_ftildebis}), $(d)$ derives from (\ref{eq.appendixC_2}), and $(e)$ is due to the fact that for a fixed $m$ when $n$ increases $\delta_n = |\XX| [\log(n+1)(N(n)+1)]/n \le  |\XX| [\log(m+1)(N(m)+1)]/m = \delta_m$.

Equation (\ref{eq.chainappendixC}) proves that $P_n \in \tilde{\Gamma}_{tr,a}^n(Q_N(n), R_{N(n)})$, thus completing the proof of the lemma.
\end{proof}

\end{document}